\documentclass[11pt]{article}
\usepackage{mydef2col}
\usepackage{markArticle}

\title{Calibrating Inelastic Markets to Options: The Lean Marketron and the Generalized Langevin Equation}

\shorttitle{The Lean Marketron}

\author{
\authorstyle{
Andrey Itkin\textsuperscript{1}\thanks{e-mail: \url{aitkin@nyu.edu}}
}
\newline\newline
\textsuperscript{1}
\institution{FRE Department, Tandon School of Engineering, New York University, USA.}
}

\date{\today}

\begin{document}

\maketitle

\lettrineabstract{The Marketron model of \cite{HalperinItkin2025Mark} and its option pricing extension in \cite{HalperinItkinMarketron2} suffer from structural non-identifiability: an eighteen-parameter space traps solvers in suboptimal local minima and renders economic quantities unmeasurable. By removing exact scaling gauges and sign symmetries, freezing non-financial parameters by explicit criteria, and adiabatically eliminating the fast hidden signal, we derive a robust nine-parameter reduced model. A Gauss-Newton Hessian with empty null space and a manifold-boundary analysis confirm that the reduced core carries no exact symmetry and admits no further reduction. A diffusive correlation between flow and return innovations captures the short-maturity skew. A staged calibration from the physical measure to the risk-neutral measure, illustrated on SPX options, fits the whole surface with a single parameter set. The same reduction turns the wedge between the physical and pricing values of the flow block into a well-defined market price of flow risk rather than a ridge artifact, identifiable here for the first time, though a single surface constrains its level only weakly. Finally, our analysis reveals that in the Marketron model the log-price obeys a generalized Langevin equation with a closed-form, state-modulated memory kernel, and that the memory variable itself is the exact Markovian lift of this kernel. This mapping also yields a testable condition, the equality of the signal and memory relaxation rates, which on the SPX surface come out well separated, though both weakly identified, placing the fitted market tentatively in the driven, non-equilibrium regime and turning the active-matter reading from an analogy into a falsifiable constraint.
}

\section{Introduction} \label{sec:intro}

In \cite{HalperinItkin2025Mark} we introduced the Marketron model of price formation in an inelastic market. The dynamics are driven by money flows and their impact on prices, and take the form of nonlinear diffusion of a quasiparticle (the marketron) in the space of the log-price $x_t$, a memory variable $y_t$ encoding past flows, and an unobservable return predictor $\theta_t$. In \cite{HalperinItkinMarketron2} we extended the framework to option markets. Since $y_t$ and $\theta_t$ are non-tradable, the market is incomplete, and we constructed option prices via exponential utility indifference following \cite{GrasselliHurd2007}. The resulting Hamilton-Jacobi-Bellman (HJB) equation was solved by a combination of operator splitting, the Cole-Hopf transformation and a Gaussian radial basis function (RBF) method, with all matrix elements available in closed form.

The pricing method is fast. The calibration is not. In \cite{HalperinItkinMarketron2} the specification carried fifteen free parameters, and the least-squares objective built on market option quotes was found to have many local minima and long flat valleys, so that a differential evolution solver with constrained search achieved only a modest relative accuracy of about 5--8\%, with different runs returning visibly different parameter sets of comparable fit quality. The correlated generalization we adopt here (\cref{sec:symmetries}) adds three further coordinates and brings the full specification to eighteen, \eqref{eq:fullparams}, so the problem is if anything worse. This is a classical symptom of a non-identifiable, or ``sloppy'', parameterization \cite{Gutenkunst2007, TranstrumMachtaSethna2011}. The data constrain a few stiff parameter combinations while leaving many soft directions essentially free.

Non-identifiability is not only a numerical nuisance. The central empirical claim of the Marketron program is that flow impact and market inelasticity can be measured from market data. If the calibration objective has a continuous ridge along which the flow-impact constant $c$ trades off against the memory-scale parameters, then a reported value of $c$ is a point on a ridge, not an estimate. Reduction to an identifiable core is therefore a precondition for the economic reading of the model. It is also a precondition for any falsifiable joint test. A joint calibration of the full eighteen-parameter model to several data sets proves little, because failure can always be attributed to the optimizer. On an identifiable low-parameter core the same test carries weight.

The purpose of this paper is accordingly to cure the disease rather than the symptom. Instead of investing more computational effort into global optimization of an ill-posed objective, we reduce the parameter space so that the remaining coordinates are identifiable from the option surface. The reduction proceeds in four steps of increasing depth.

First, in \cref{sec:symmetries} we exhibit exact and near-exact symmetries of the model. The memory variable admits a scaling gauge: option prices are invariant under a simultaneous rescaling of $y$ and the parameters attached to it, up to the constraint that couples the two occurrences of the flow-impact constant $c$. The signal block possesses a discrete sign symmetry. Both are removed by normalization, which eliminates one continuous parameter and halves the number of equivalent minima. We also show that the risk-aversion parameter enters the pricing operator only through products with the hidden-factor volatilities, so it must be fixed exogenously rather than calibrated.

Second, in \cref{sec:apriori} we argue that parameters of a purely numerical nature, such as the regularization constant $\bar\epsilon$ of the marketron potential and the initial values of the hidden states, must not compete with financial parameters in the calibration objective. We fix them a priori by explicit numerical criteria.

Third, in \cref{sec:averaging} we perform the central structural reduction. The signal $\theta_t$ is a fast Ornstein-Uhlenbeck (OU) process. For maturities satisfying $kT \gtrsim 1$ the option price depends on the signal essentially through the stationary averages of $f(\theta)$ and $h(\theta)$, which for the trigonometric specification of \cite{HalperinItkinMarketron2} are available in closed form. Adiabatic elimination collapses the five-parameter signal block into two effective constants and reduces the pricing PDE from three to two spatial dimensions. We also derive a deterministic transient correction that retains the time dependence of the conditional mean of the signal. The correction reintroduces the relaxation rate $k$ as a single term-structure shape parameter, which is what makes a global fit with one parameter set across all maturities feasible. We treat this single-parameter-set property as a claim to be tested (\cref{sec:identifiability,sec:calibration}) rather than as an assumption, and it is what distinguishes the reduced model from reduced-form smile parameterizations, which are refit maturity by maturity.

Fourth, in \cref{sec:identifiability} we make the procedure self-verifying. We compute profile likelihoods and the eigenspectrum of the Gauss-Newton Hessian of the calibration objective, and apply the manifold boundary approximation method (MBAM) of \cite{TranstrumQiu2014} to check that the limits taken in \cref{sec:symmetries,sec:apriori,sec:averaging} coincide with the sloppiest directions of the full model. In this sense the reduction is not ad hoc: it is the model that the option data can actually see.

One structural element is added rather than removed. The original specification of \cite{HalperinItkin2025Mark} assumes independent Brownian motions for the spot and memory stochastic variables. We introduce a diffusive correlation $\rho$ between the flow innovations $dW^{(y)}$ and the return innovations $dW^{(x)}$. The motivation is twofold. Financially, contemporaneous correlation between flows and returns is the diffusive-level content of price impact, and forcing it to zero is an unnatural restriction in a model built around impact. Technically, the smile generated by the state-dependent drift vanishes linearly in maturity at the short end, so short-dated skew must come from the diffusion structure, and $\rho$ is the only coefficient that can supply it. The correlation survives every reduction step at the cost of a single parameter and, as shown in \cref{sec:gamma}, it in fact restores a degree of freedom that the risk-aversion analysis removes.

The final reduced model has nine parameters for the whole surface.  \cref{sec:calibration} describes a staged calibration in which parameters of the physical measure are inherited from the time-series calibration of \cite{HalperinItkin2025Mark}, while option data determine the risk-adjusted block. The correlation $\rho$ is placed with the physical-measure parameters, since on the option surface it is confounded with the local-volatility asymmetry and convexity (\cref{sec:identifiability}); the wedge between the time-series and option-implied values of the flow-block parameters $c$ and $\bar f$ (the flow-impact coupling and the averaged return-predictor drift) then defines a market price of flow risk, which the reduction makes identifiable for the first time.

\cref{sec:gle} gives the reduced system a second reading: substituting the memory variable into the log-price equation shows the Marketron to be exactly a one-dimensional overdamped generalized Langevin equation (GLE) in the log-price, with a memory kernel we compute in closed form. This kernel is modulated by the state, specifically by the product of the market potential gradients at the endpoints of the memory integral, which means that memory is strong in the default region and weak away from it, explicitly delimiting the relation to the lag-only class of \cite{ItkinGLE1}. Furthermore, this mapping yields a testable parametric condition, an equality between the memory relaxation rate $\mu$ and the signal relaxation rate $k$, that distinguishes an equilibrium regime from a driven, non-equilibrium one. This transforms an interpretive analogy into a falsifiable constraint on two independently calibrated rates, see \cref{sec:gle-fdt}.

Then \cref{sec:numerics} reports numerical experiments: a Black--Scholes synthetic validation of the pricing and calibration pipeline, a staged global term-structure calibration to the SPX surface with a single parameter set, and the resulting market price of flow risk. \cref{sec:discussion} concludes.

\section{The Marketron model, exact symmetries and gauge fixing} \label{sec:symmetries}

We briefly recall the setup of \cite{HalperinItkin2025Mark, HalperinItkinMarketron2}, with one generalization: the Brownian motions driving the price and the memory variable are allowed to be correlated. Under the physical measure the state $(x_t, y_t, \theta_t)$ solves
\begin{align} \label{eq:sde}
dx_t &= \mu_x\,dt + \sigma\,dW^{(x)}_t, & \mu_x &= f(\theta_t) + \bar\eta - c_x\,y_t \VM'(x_t), \\
dy_t &= \mu_y\,dt + \sigma_y\,dW^{(y)}_t, & \mu_y &= h(\theta_t) + \mu(\bar y - y_t) - c_y\,\VM(x_t), \nonumber \\
d\theta_t &= \mu_\theta\,dt + \sigma_z\,dW^{(\theta)}_t, & \mu_\theta &= k(\hat\theta - \theta_t), \nonumber
\end{align}
with the correlation structure
\begin{equation}
d\langle W^{(x)}, W^{(y)}\rangle_t = \rho\,dt, \qquad
d\langle W^{(x)}, W^{(\theta)}\rangle_t = \rho_{x\theta}\,dt, \qquad
d\langle W^{(y)}, W^{(\theta)}\rangle_t = 0,
\label{eq:corr}
\end{equation}
and the potential slope
\begin{equation}
\VM'(x) = -e^{-x}\Bigl(1 - \frac{g}{e^{x} + \bar\epsilon g}\Bigr),
\label{eq:VM}
\end{equation}
regularized at $x \to -\infty$ by the constant $\bar\epsilon$. The original model is recovered at $\rho = \rho_{x\theta} = 0$. Following \cite{HalperinItkinMarketron2} we use the trigonometric signal transformation
\begin{equation}
f(\theta) = b_1 \cos\theta, \qquad h(\theta) = b_2 \sin\theta.
\label{eq:fh}
\end{equation}
Option prices are defined by exponential utility indifference with risk aversion $\gamma$. With correlated hidden factors the certainty equivalent $C(t,x,y,\theta)$ solves
\begin{align}
0 = \; & \frac{\partial C}{\partial t}
+ \frac{1}{2}\sigma^2 \frac{\partial^2 C}{\partial x^2}
+ \rho\,\sigma\sigma_y \frac{\partial^2 C}{\partial x \partial y}
+ \rho_{x\theta}\,\sigma\sigma_\theta \frac{\partial^2 C}{\partial x \partial \theta}
+ \frac{1}{2}\sigma_y^2 \frac{\partial^2 C}{\partial y^2}
+ \frac{1}{2}\sigma_\theta^2 \frac{\partial^2 C}{\partial \theta^2}
+ \bar\eta \frac{\partial C}{\partial x}
\nonumber \\
& + \Bigl(\mu_y - \rho\,\sigma_y \frac{\bar\mu_x}{\sigma}\Bigr)
\frac{\partial C}{\partial y}
+ \Bigl(\mu_\theta - \rho_{x\theta}\,\sigma_\theta \frac{\bar\mu_x}{\sigma}\Bigr)
\frac{\partial C}{\partial \theta}
+ \frac{\gamma}{2} \nabla_h C^\top \widetilde\Sigma\, \nabla_h C
- \frac{\bar\mu_x^2}{2\gamma\sigma^2} - rC,
\label{eq:pde}
\end{align}
where $\nabla_h C = (\partial_y C, \partial_\theta C)^\top$ and
\begin{equation}
\widetilde\Sigma =
\begin{pmatrix}
\sigma_y^2 (1 - \rho^2) & -\sigma_y \sigma_\theta\, \rho\, \rho_{x\theta} \\
-\sigma_y \sigma_\theta\, \rho\, \rho_{x\theta} & \sigma_\theta^2 (1 - \rho_{x\theta}^2)
\end{pmatrix}
\label{eq:residcov}
\end{equation}
is the residual covariance of the hidden factors conditional on the traded one. Here $\sigma_\theta \equiv \sigma_z$. The drift adjustments proportional to $\bar\mu_x/\sigma$ are the market-price-of-risk terms generated by the projection of the hidden noise onto the traded noise. At $\rho = \rho_{x\theta} = 0$ equation \eqref{eq:pde} reduces to the PDE of \cite{HalperinItkinMarketron2}. The derivation parallels that of \cite{GrasselliHurd2007}, with the distortion power of the Cole-Hopf transformation generalized from $\gamma$ to $\gamma(1-\rho^2)$ in the manner of \cite{HH2002, MusielaZariphopoulou2004}; details are collected in Appendix~\ref{app:corr}. The indifference price follows from the pair $(C, C^0)$ as in \cite{HalperinItkinMarketron2}.

The full parameter vector is
\begin{equation}
\Theta_{\mathrm{full}}
= \bigl(\sigma,\ \sigma_y,\ \sigma_z,\ k,\ \mu,\ g,\ \hat\theta,\ c_x,\ c_y,\
b_1,\ b_2,\ \bar y,\ \gamma,\ y_0,\ \theta_0,\ \bar\epsilon,\ \rho,\ \rho_{x\theta}
\bigr),
\qquad \dim \Theta_{\mathrm{full}} = 18.
\label{eq:fullparams}
\end{equation}
The remainder of the paper is a controlled demolition of this list.

\subsection{Scaling of the memory variable} \label{sec:gauge}

The memory variable $y$ has no intrinsic scale: it is a latent bookkeeping device for past flows, and neither the data nor the payoff refer to its units. This intuition is made precise by the following observation. Consider a generalized model in which the two occurrences of the coupling $c$ in \eqref{eq:sde} carry separate constants, $c_x$ in the $x$-drift and $c_y$ in the $y$-drift. Define, for $\lambda > 0$, the transformation
\begin{equation}
y \to \lambda y, \qquad
(\sigma_y, \bar y, b_2, y_0, c_x, c_y)
\to \Bigl(\lambda\sigma_y,\ \lambda\bar y,\ \lambda b_2,\ \lambda y_0,\
\frac{c_x}{\lambda},\ \lambda c_y \Bigr).
\label{eq:gauge}
\end{equation}

\begin{proposition} \label{prop:gauge}
Under \eqref{eq:gauge} the law of the observable process $x_t$, and hence every option price, is invariant. The correlations $\rho$ and $\rho_{x\theta}$ are dimensionless and remain unchanged. Only the product $c_x c_y$, the correlations, and the ratios $b_2/\sigma_y$, $\bar y/\sigma_y$, $y_0/\sigma_y$, $\mu$ are observable.
\end{proposition}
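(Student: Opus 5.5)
The plan is a pathwise change of variables. Set $\tilde y_t = \lambda y_t$, keep $x_t$, $\theta_t$ and the Brownian motions $(W^{(x)},W^{(y)},W^{(\theta)})$ unchanged, and check that $(x_t,\tilde y_t,\theta_t)$ solves \eqref{eq:sde} with the transformed parameters \eqref{eq:gauge}.

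The check goes equation by equation. In the $x$-drift we have $c_x y_t \VM'(x_t) = (c_x/\lambda)\,\tilde y_t\,\VM'(x_t)$, so the $x$-equation is literally unchanged. Multiplying the $y$-equation by $\lambda$ gives
\[
d\tilde y_t = \bigl(\lambda b_2\sin\theta_t + \mu(\lambda\bar y - \tilde y_t) - \lambda c_y \VM(x_t)\bigr)dt + \lambda\sigma_y\,dW^{(y)}_t ,
\]
which is the $y$-equation with $(b_2,\bar y,c_y,\sigma_y)\to(\lambda b_2,\lambda\bar y,\lambda c_y,\lambda\sigma_y)$ and $\mu$ unchanged. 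The initial condition becomes $\lambda y_0$. The $\theta$-equation does not involve $y$ and is untouched. The driving Brownian motions are the same, so $\rho$ and $\rho_{x\theta}$ are unchanged; they are dimensionless because positive rescaling of $y$ does not change the sign or normalization of $dW^{(y)}$.

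Pathwise uniqueness (or just uniqueness in law) of \eqref{eq:sde} then gives that the law of $x_t$ is invariant. I would invoke local Lipschitz drifts together with the regularization $\bar\epsilon$ that keeps $\VM'$ bounded as $x\to-\infty$.

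For option prices I would argue at the PDE level, since the indifference price is not a plain expectation. Define $\tilde C(t,x,\tilde y,\theta)=C(t,x,\tilde y/\lambda,\theta)$. Then $\partial_{\tilde y}=\lambda^{-1}\partial_y$ and $\partial_{\tilde y}^2=\lambda^{-2}\partial_y^2$. The needed term-by-term checks in \eqref{eq:pde} are:
\begin{itemize}
\item $\sigma_y^2\partial_y^2$ and $\rho\sigma\sigma_y\partial_{xy}$ are invariant because $\sigma_y$ scales by $\lambda$.
\item The drift $\mu_y-\rho\sigma_y\bar\mu_x/\sigma$ scales by $\lambda$, which cancels the $\lambda^{-1}$ from the derivative. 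This uses $\bar\mu_x$ being invariant, as shown above.
\item In the quadratic term, the entries of $\widetilde\Sigma$ scale as $\lambda^2$ (the $yy$ entry) and $\lambda$ (the off-diagonal entries), exactly compensating the gradient scalings.
\end{itemize}
The same holds for $C^0$, and the payoff depends only on $x$. Hence the indifference price is invariant.

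For the last claim, the transformation is a one-parameter orbit, and I would exhibit a complete set of invariants: $c_xc_y$, $b_2/\sigma_y$, $\bar y/\sigma_y$, $y_0/\sigma_y$, $\mu$, and the correlations. Choosing $\lambda=1/\sigma_y$ normalizes $\sigma_y=1$, and the remaining data are functions of these invariants. Note that $c_x$ and $c_y$ separately are then $c_x\sigma_y$ and $c_y/\sigma_y$, which are also invariant. The statement's list implicitly fixes the gauge this way, so I would state the normalized coordinates explicitly.

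The delicate point is the word "only". Invariance shows that nothing outside the invariants is observable. The converse, that each invariant really is identified, is not proven by a symmetry argument. I would interpret "only" as "at most these combinations", and defer genuine identifiability to the Hessian analysis of \cref{sec:identifiability}.
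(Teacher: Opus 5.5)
Your invariance argument is correct and follows the paper's route: rescale $y$ pathwise with the same Brownian motions, then check that \eqref{eq:pde} is form-invariant. Your term-by-term check is more complete than Appendix~\ref{app:gauge}, which omits the $y$-drift term and the off-diagonal entry of $\widetilde\Sigma$.

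One justification is wrong. The regularizer $\bar\epsilon$ does not make $\VM'$ bounded: by \eqref{eq:epsasym} it still grows like $(1/\bar\epsilon-1)e^{-x}$ as $x\to-\infty$, since the regularizer only lowers the divergence from $e^{-2x}$ to $e^{-x}$. You do not need boundedness. The coefficients are smooth, hence locally Lipschitz, which gives pathwise uniqueness up to an explosion time. The map $y\mapsto\lambda y$ carries solutions of one system onto solutions of the other with the same $x$-path and the same explosion time. You could also bypass the PDE step: the filtration and the traded price path are unchanged, so the utility-maximization problems defining $C$ and $C^0$ coincide, and no uniqueness theory for \eqref{eq:pde} is needed.

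The real problem is the last sentence of the proposition. Your list $c_xc_y$, $b_2/\sigma_y$, $\bar y/\sigma_y$, $y_0/\sigma_y$ is not a complete set of invariants. The orbit of \eqref{eq:gauge} is one-dimensional in the six coordinates $(\sigma_y,\bar y,b_2,y_0,c_x,c_y)$, so there are five independent invariants, and the list supplies only four. The missing one is $c_x\sigma_y$ (equivalently $c_y/\sigma_y$). You noticed it, but the reading ``the statement's list implicitly fixes the gauge'' does not absorb it: after setting $\sigma_y=1$, the constants $c_x$ and $c_y$ survive separately.

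Moreover, the law of $x$ sees $c_x\sigma_y$ at fixed $c_xc_y$ and fixed ratios. The initial drift of $x$, which its law determines, is $f(\theta_0)+\bar\eta-(c_x\sigma_y)(y_0/\sigma_y)\VM'(x_0)$. This changes with $c_x\sigma_y$ whenever $y_0\VM'(x_0)\neq0$. The same quantity $c_x\sigma_y$ is also the amplitude of the multiplicative noise in Proposition~\ref{prop:marketron-gle}.

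So ``only'' is not merely unproved in the identifiability direction, as you suggest; read literally it is false. Even your weaker ``at most these combinations'' fails with the stated list. What the symmetry argument actually delivers is this: every observable is a function of $c_x\sigma_y$, $c_y/\sigma_y$, $b_2/\sigma_y$, $\bar y/\sigma_y$, $y_0/\sigma_y$, and the parameters that \eqref{eq:gauge} leaves untouched (including $\mu$ and the correlations). You should state and prove that version. The paper's proof does not settle this point either: Appendix~\ref{app:gauge} establishes invariance only, ``assuming the product $c_xc_y$ is preserved''.
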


\begin{proof}
See \cref{app:gauge}.
\end{proof}

The original specification imposes $c_x = c_y = c$, which formally breaks the gauge. The breaking is weak rather than benign: the constrained transformation maps the model onto a nearby one with $c_x c_y$ preserved, and in practice this produces a long, shallow ridge in the calibration objective along which $(\sigma_y, \bar y, b_2, y_0, c)$ trade off against each other with almost no change in option prices. A global optimizer perceives that ridge as a continuum of near-degenerate minima, and because the degeneracy is only approximate it cannot be removed by reparameterization.

We therefore keep $c_x$ and $c_y$ independent. This costs one nominal parameter
and buys an exact symmetry in its place, which is the better trade: an exact
redundancy is eliminated once and for all by a normalization, whereas an
approximate one has to be fought by the optimizer at every calibration. The
effective dimension is reduced by one, to 17, since the gauge is removed by the
normalization
\begin{equation} \label{eq:norm}
\sigma_y=1.
\end{equation}
Equivalently, we measure the memory variable in units $\tilde y_t = y_t/\sigma_y$, so that the normalized memory variable has unit instantaneous diffusion coefficient. In what follows, we denote this normalized variable again by $y$ and set $\sigma_y\equiv1$ in the PDE. All remaining $y$-related parameters are understood in these normalized units. This eliminates one parameter and, more importantly, removes the exact scaling ridge from the calibration.

Also, with the specification \eqref{eq:fh}, the model is invariant under the discrete map
\begin{equation}
(\theta, \hat\theta, \theta_0, b_2, \rho_{x\theta})
\to (-\theta, -\hat\theta, -\theta_0, -b_2, -\rho_{x\theta}),
\end{equation}
since $\cos$ is even and $\sin$ is odd. Every minimum of the objective therefore has a mirror twin. We fix the fundamental domain by imposing $b_2 \ge 0$. This is not a reduction of the parameter count, but it halves the search volume and removes spurious multimodality from the posterior.

\subsection{Risk aversion, hidden-factor volatilities, and the role of $\rho$} \label{sec:gamma}

Inspection of \eqref{eq:pde} shows that the risk-aversion parameter $\gamma$ enters the nonlinear part of the pricing operator only through the combinations
\begin{equation}
p_y = \gamma \sigma_y^2 (1 - \rho^2), \qquad
p_\theta = \gamma \sigma_\theta^2 (1 - \rho_{x\theta}^2), \qquad
q = \gamma \sigma^2 .
\label{eq:combos}
\end{equation}
The spot volatility $\sigma$ is stiff: it controls the diffusion of the traded coordinate and is pinned by the overall level of option prices. The hidden-factor volatilities $\sigma_y, \sigma_\theta$, in contrast, affect prices only through the diffusion of unobservable coordinates and through the nonlinear terms built on \eqref{eq:residcov}. In our experiments the objective is nearly flat along curves of constant $(p_y, p_\theta)$, which identifies these products as the stiff coordinates of the risk-adjustment block. The correlation $\rho$ does not merge into $p_y$, because it also enters the pricing operator linearly, through the cross-diffusion term $\rho\sigma\sigma_y \partial^2_{xy} C$ and through the drift adjustment $\rho\sigma_y \bar\mu_x / \sigma$. These linear occurrences generate skew and keep $\rho$ separately identifiable.

Two consequences follow. First, $\gamma$ itself cannot be calibrated from option prices and should be fixed externally. Option-implied risk-aversion estimates are available in the literature; \cite{BlissPanigirtzoglou2004} report values of order one to ten for index options, and we take $\gamma$ from that range as an exogenous input. Second, with $\gamma$ exogenous and the normalization \eqref{eq:norm} in force, the combination $p_y = \gamma(1-\rho^2)$ is a deterministic function of $\rho$ and carries no separate degree of freedom. The free coordinates of the risk-adjustment block are then $(\sigma, \rho, p_\theta)$, and in the averaged model of \cref{sec:averaging}, where $p_\theta$ disappears together with the $\theta$ dimension, they are $(\sigma, \rho)$ alone. We note in passing that in the uncorrelated model the same logic forces $p_y = \gamma$ to be fixed, so the risk-adjustment block there consists of $\sigma$ only. The correlation thus restores to the diffusion structure a degree of freedom that identifiability removes from the risk-aversion channel.

\subsection{A priori fixing of non-financial parameters} \label{sec:apriori}

Three entries of \eqref{eq:fullparams} are not financial parameters at all.

\myparagraph{The regularizer $\bar\epsilon$.}
The constant $\bar\epsilon$ prevents divergence of the drift at $x \to -\infty$ and shapes the potential only in the deep-default region. Our experiments in \cite{HalperinItkinMarketron2} showed that option prices are highly sensitive to it, which is precisely why it must not float freely in a least-squares objective: it acts as an uncontrolled volatility knob in a region the data do not probe directly. It is a modelling artefact rather than a financial quantity, so it should be fixed by a bound on something the model asserts, not calibrated.

What $\bar\epsilon$ controls is the strength of the restoring force at depth. By \eqref{eq:epsasym} the drift of the log-price at a reference depth $x_\star$ is $c_x y\,|\VM'(x_\star)|$, which at $y = \bar y$ is $c_x \bar y D$ with
\begin{equation} \label{eq:Dbound}
D \;\equiv\; \bigl|\VM'(x_\star)\bigr| \;=\; \frac{\mathrm{drift}_{\max}}{c_x \bar y}.
\end{equation}
The input is therefore a ceiling on the annualized drift the model may impose on a deeply distressed name, which is an economic statement and is monotone in $D$ by construction. A ceiling of two per annum, itself already an extreme restoring force, gives $D$ of order five at representative couplings. Inverting the criterion,
\begin{equation}
\bar\epsilon = \frac{1}{1 + D e^{x_\star}} - \frac{e^{x_\star}}{g},
\qquad x_\star = -4 ,
\label{eq:epsfix}
\end{equation}
so $\bar\epsilon$ becomes a deterministic function of $(g, D)$. The reference depth is not arbitrary: it is the point at which the two admissible regularizers coincide exactly, so \eqref{eq:epsfix} holds for either and $\bar\epsilon$ carries the same meaning in the model and in the pricer. Appendix~\ref{app:eps} gives the derivation, the admissible range $g > e^{x_\star}(1 + D e^{x_\star})$, and explains why neither the turning point of $\VM$ nor the model-implied default rate can serve as the criterion in its place.

\myparagraph{Initial hidden states.}
The initial values $y_0, \theta_0$ are unobservable and, at a single option snapshot, nearly redundant with the mean-reversion levels. We set them to those levels,
\begin{equation}
\theta_0 = \hat\theta, \qquad
y_0 = \hat y \equiv \frac{h(\hat\theta)}{\mu} + \bar y - \frac{c}{\mu}\VM(x_0),
\label{eq:initfix}
\end{equation}
consistent with the stationarity assumption underlying the averaging of \cref{sec:averaging}. This removes two parameters. The analogy with the Heston initial variance $v_0$ invoked in \cite{HalperinItkinMarketron2} is imperfect: $v_0$ is stiff because it sets the short-end implied volatility level, whereas $y_0, \theta_0$ act on prices only through the drift and are soft. After \cref{sec:symmetries,sec:apriori} the count stands at $17 - 1 - 1 - 1 - 2 = 12$, where the gauge normalization \eqref{eq:norm} has already reduced $18$ to $17$ and the four remaining subtractions are the coupling identification through the observable product $c_x c_y$, the exogenous fixing of $\gamma$, the regularizer, and the two initial states.

\section{Adiabatic elimination of the signal} \label{sec:averaging}

\subsection{Stationary averaging} \label{stAver}

The signal $\theta_t$ is an OU process with relaxation time $1/k$. The calibrated values of $k$ in \cite{HalperinItkin2025Mark, HalperinItkinMarketron2} range from $1.3$ to $2.7$, so for maturities beyond a few months $kT \gtrsim 1$ and the signal explores its stationary law
\begin{equation}
\theta_\infty \sim \mathcal{N}(\hat\theta, v), \qquad v = \frac{\sigma_z^2}{2k},
\end{equation}
within the life of the option. Standard stochastic averaging \cite{PavliotisStuart2008} then replaces $f(\theta_t)$ and $h(\theta_t)$ in the drifts of \eqref{eq:sde} by their stationary expectations, with corrections of order $1/(k T)$. For the trigonometric specification \eqref{eq:fh} the averages are closed form:
\begin{equation}
\EE\bigl[\cos\theta_\infty\bigr] = e^{-v/2}\cos\hat\theta, \qquad
\EE\bigl[\sin\theta_\infty\bigr] = e^{-v/2}\sin\hat\theta,
\end{equation}
so that
\begin{equation} \label{eq:effective}
\barf = b_1 e^{-v/2}\cos\hat\theta, \qquad \barh = b_2 e^{-v/2}\sin\hat\theta.
\end{equation}
The five signal parameters $(b_1, b_2, k, \hat\theta, \sigma_z)$ enter the averaged dynamics only through the two constants $(\barf, \barh)$. The averaged model reads
\begin{align}
dx_t &= \bigl[\barf + \bar\eta - c_x\,y_t \VM'(x_t)\bigr]dt + \sigx(y_t)\,dW^{(x)}_t,
\nonumber \\
dy_t &= \bigl[\barh + \mu(\bar y - y_t) - c_y\,\VM(x_t)\bigr]dt + dW^{(y)}_t,
\qquad d\langle W^{(x)}, W^{(y)}\rangle_t = \rho\,dt,
\label{eq:sde2d}
\end{align}
where \eqref{eq:norm} is in force. Note that $\barh$ and $\mu\bar y$ enter the $y$-drift only through the sum $\barh + \mu\bar y$, so one further parameter is absorbed: we set $\barh = 0$ without loss of generality and let $\bar y$ carry the combined level. The volatility $\sigx(y)$ is specified in \cref{sec:volofy}.

\subsection{Memory-dependent volatility} \label{sec:volofy}

In the original specification the memory variable acts on the price through the drift alone, and $\sigma$ is constant. That choice has a consequence which is easy to miss and which \cref{prop:degenerate} makes precise: it renders the model's European option prices independent of every parameter governing the memory. We therefore let the volatility depend on the memory variable,
\begin{equation} \label{eq:sigy}
\sigx(y) \;=\; \sigma_0\sqrt{1 + \alpha\,(y - \bar y) + \beta\,(y-\bar y)^2},
\qquad \beta > 0, \quad \alpha (y - \bar y) + \beta (y-\bar y)^2 > -1,
\end{equation}
the constraint being exactly what keeps the radicand positive for all real $y$ and the diffusion non-degenerate. The centring at $\bar y$ makes $\sigma_0$ the volatility at the resting point of the $y$-dynamics rather than at the arbitrary origin, so that $\alpha$ is a pure asymmetry parameter and does not mix with the level.

Each coefficient carries a distinct economic reading. At $y = \bar y$ the market sits at its equilibrium: no accumulated memory, no directional stress, an order book at full depth, and volatility takes its baseline value $\sigma_0$ driven by background noise alone. The linear coefficient $\alpha$ governs the asymmetric response to displacement. When $y$ tracks downward displacement, a negative $\alpha$ raises the local volatility as the market sells off, which embeds the leverage effect in the path itself rather than imposing it through a correlation, and which controls the steepness of the out-of-the-money put skew. The quadratic coefficient $\beta$ penalizes extreme states of either sign. A large absolute displacement, rally or crash, means a stretched market in which liquidity providers withdraw, the book thins, and the price impact of trades rises; $\beta y^2$ scales volatility up accordingly and sets the convexity of the smile.

Under the scaling gauge \eqref{eq:gauge} the new coefficients transform as $\alpha \to \alpha/\lambda$ and $\beta \to \beta/\lambda^2$, so \eqref{eq:norm} fixes them along with the rest and no new gauge freedom is introduced.

\begin{proposition}[Constant volatility makes the flow invisible to vanillas] \label{prop:degenerate}
Let $\sigx$ be constant and let the claim have payoff $\calV(x_T)$ depending on the log-price alone. Then the indifference price $\pi^{\calV} = C - C^0$ is independent of $c_x$, $c_y$, $\mu$, $g$, $\gamma$ and $\rho$, and solves the Black-Scholes equation with volatility $\sigma$.
\end{proposition}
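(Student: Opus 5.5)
The plan is to verify the claim directly at the level of the semilinear pricing equation \eqref{eq:pde}, using an additive ansatz. Write $C^0$ for the certainty equivalent without the claim, so that $C^0$ solves \eqref{eq:pde} with the terminal condition of the pure investment problem. Write $C$ for the certainty equivalent with the claim; it solves the same equation with terminal condition shifted by $\calV(x_T)$. I would look for $C$ in the form
\begin{equation*}
C(t,x,y,\theta) = C^0(t,x,y,\theta) + P(t,x),
\end{equation*}
where $P$ depends only on $(t,x)$. The reason to expect this is that with $\sigx\equiv\sigma$ constant, a payoff in $x_T$ alone is replicable by dynamic trading in the traded coordinate. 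Its replication cost does not depend on the drift $\bar\mu_x$, and that drift is the only channel through which $y$, $\theta$, $c_x$, $c_y$, $\mu$ and $g$ reach the traded asset. The indifference price of a replicable claim should therefore equal its replication cost.

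\textbf{Step 1: substitute the ansatz into \eqref{eq:pde}.} Since $\partial_y P = \partial_\theta P = 0$, we have $\nabla_h C = \nabla_h C^0$. This has three consequences. The quadratic term $\tfrac{\gamma}{2}\nabla_h C^\top\widetilde\Sigma\,\nabla_h C$ is unchanged. The cross-derivative terms $\rho\sigma\sigma_y\partial^2_{xy}$ and $\rho_{x\theta}\sigma\sigma_\theta\partial^2_{x\theta}$ vanish on $P$. The adjusted drifts multiplying $\partial_y$ and $\partial_\theta$ act on $C^0$ only. The source term $-\bar\mu_x^2/(2\gamma\sigma^2)$ is identical in both equations.

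Subtracting the equation for $C^0$, all nonlinear and all hidden-factor terms cancel. What remains is
\begin{equation*}
\partial_t P + \tfrac12\sigma^2\partial_x^2 P + \bar\eta\,\partial_x P - rP = 0,
\qquad P(T,x) = \calV(x).
\end{equation*}
This is the Black--Scholes equation in log-price. I would note in passing that the pricing drift $\bar\eta$ plays the role of $r-\sigma^2/2$ under the pricing measure, because the physical drift $\bar\mu_x$ has already been removed by the market-price-of-risk projection in \eqref{eq:pde}. Its coefficients involve only $\sigma$, $\bar\eta$ and $r$. Hence $P$, and with it $\pi^{\calV}=C-C^0=P$, is independent of $c_x$, $c_y$, $\mu$, $g$, $\gamma$ and $\rho$.

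\textbf{Step 2: justify that the ansatz gives the solution.} Define $\tilde C = C^0 + P$, with $P$ the Black--Scholes solution. Step 1 shows that $\tilde C$ solves \eqref{eq:pde} with the terminal data of $C$. One then needs uniqueness for \eqref{eq:pde} in the appropriate growth class. For this I would pass through the Cole--Hopf transformation with distortion power $\gamma(1-\rho^2)$ described in Appendix~\ref{app:corr}. After that transformation the equation becomes linear in the hidden factors, and uniqueness follows from a standard Feynman--Kac or comparison argument. An equivalent route is a verification argument on the exponential-utility problem: the extra wealth generated by the Black--Scholes delta hedge $\partial_x P$ exactly offsets the claim.

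\textbf{Main obstacle.} Step 1 is mechanical; the delicate part is Step 2. The transformation linearises only the hidden-factor part of the equation, so I would first check that adding a function of $(t,x)$ commutes with it. It does, because the transformation acts through $\nabla_h$ and the $x$-dependence enters linearly. I would also need to check that the growth conditions permit the unbounded drift $c_x y\VM'(x)$ near $x\to-\infty$. Here the regulariser $\bar\epsilon$ is what keeps that drift finite, so a comparison principle on polynomially growing solutions should apply.
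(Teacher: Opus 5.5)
Your proposal is correct and is essentially the paper's own argument. The paper subtracts the equations for $C$ and $C^0$, notes that every term in the equation for $P=C-C^0$ that could create hidden-factor dependence is proportional to $P_y$, $P_\theta$ or their derivatives, and concludes by uniqueness that the $(t,x)$-only Black--Scholes solution is the solution, which is your ansatz-plus-uniqueness step. One caution on your Step 2 (the paper itself simply asserts uniqueness): the Cole--Hopf map $w=e^{pC}$ does not linearize \eqref{eq:pde}, because after multiplying through by $pw$ the $x$-diffusion, the cross term and the discount term leave nonlinear remainders $w_x^2/w$, $w_xw_y/w$ and $w\log w$, so Feynman--Kac does not apply directly; your replication/verification route, or linearizing the difference of two solutions, is the sound way to close that step.
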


\begin{proof}
See \cref{app:degen}.
\end{proof}

The proposition is not a defect of the derivation, which \cref{app:degen} verifies step by step, but a property of a model in which the flow enters only the drift of a traded asset. The drift of a traded asset is hedged away by the optimal position, and with $\sigma$ constant no second channel remains. Making $\sigx$ depend on $y$ supplies that channel: the volatility of the traded asset is not hedgeable by holding it, so the memory dynamics reach the price, and the market is incomplete in a way that a vanilla claim can see.

The correlation $\rho_{x\theta}$ disappears at leading order together with the $\theta$ dimension. It re-enters only through the first-order correction in $1/k$, which is precisely the mechanism by which correlation generates skew in fast mean-reverting stochastic volatility asymptotics \cite{FouqueEtAl2011}. We therefore set $\rho_{x\theta} = 0$ in the baseline reduced model. Should a first-order correction be developed, that literature indicates that $\rho_{x\theta}$ would enter through a single effective group parameter combining it with the signal amplitudes, which is consistent with the philosophy of the present reduction.

\subsection{Transient correction and the term structure} \label{sec:transient}

For short maturities the stationarity assumption degrades. A refinement that costs little in the parameter count replaces the stationary averages by the conditional expectations along the deterministic relaxation of the signal. With $\theta_t \,|\, \theta_0 \sim \mathcal{N}(m_t, v_t)$,
\begin{equation}
m_t = \hat\theta + (\theta_0 - \hat\theta)e^{-kt}, \qquad
v_t = \frac{\sigma_z^2}{2k}\bigl(1 - e^{-2kt}\bigr),
\end{equation}
one obtains the deterministic time-dependent coefficients
\begin{equation}
\barf(t) = b_1 e^{-v_t/2}\cos m_t, \qquad
\barh(t) = b_2 e^{-v_t/2}\sin m_t .
\label{eq:transient}
\end{equation}
Under the choice \eqref{eq:initfix} the transient simplifies, $m_t \equiv \hat\theta$, and \eqref{eq:transient} becomes a smooth deterministic ramp of the effective amplitude from $\bar{f}(0) = b_1 \cos\hat\theta$ at $t = 0$ to $\barf$ at $t \gg 1/k$,
\begin{equation}
\barf(t) = \barf(0) \exp\Bigl(- \frac{v}{2} (1-e^{-2kt})\Bigr), \qquad v = \frac{\sigma_z^2}{2k}.
\label{eq:ramp}
\end{equation}
The ramp is controlled by the pair $(k, v)$. Since $v$ is estimated from the underlying time series in \cite{HalperinItkin2025Mark} and belongs to the physical-measure block of \cref{sec:calibration}, retaining the ramp adds a single option-facing shape parameter, the relaxation rate $k$.

This point deserves emphasis, because it changes the character of the calibration. The baseline averaged model of \eqref{eq:sde2d} carries maturity-independent coefficients, and one parameter set prices the whole surface. When short maturities are included, the ramp \eqref{eq:ramp} supplies the term structure of the effective drift at the cost of one parameter. In either case the model is fit globally, across all maturities simultaneously. A per-maturity refit, natural for reduced-form smile parameterizations, would forfeit exactly the structural content that distinguishes the Marketron from them, and we do not use it.

\subsection{The reduced pricing PDE} \label{reducedPDE}

Since $\theta$ has been eliminated, the certainty equivalent $C(t, x, y)$ solves the two-dimensional analogue of \eqref{eq:pde},
\begin{equation} \label{eq:pde2d}
\frac{\partial C}{\partial t}
+ \frac{1}{2}\sigx^2(y) \frac{\partial^2 C}{\partial x^2}
+ \rho\,\sigx(y) \frac{\partial^2 C}{\partial x \partial y}
+ \frac{1}{2} \frac{\partial^2 C}{\partial y^2}
+ \bar\eta(y) \frac{\partial C}{\partial x}
+ \Bigl(\mu_y - \rho\,\frac{\bar\mu_x}{\sigx(y)}\Bigr) \frac{\partial C}{\partial y}
+ \frac{p_y}{2}\Bigl(\frac{\partial C}{\partial y}\Bigr)^2
- \frac{\bar\mu_x^2}{2\gamma\sigx^2(y)} - rC = 0,
\end{equation}
with $p_y = \gamma(1-\rho^2)$, $\bar\eta(y) = r - \sigx^2(y)/2$, $\bar\mu_x = \barf + \bar\eta(y) + \sigx^2(y)/2 - r - c_x\, y \VM'(x)$ and $\mu_y = \mu(\bar y - y) - c_y \VM(x)$. Both the diffusion and the drift in $x$ now carry $y$, which is what breaks the degeneracy of \cref{prop:degenerate}. Note that \eqref{eq:sigy} does not disturb the two cancellations underlying \eqref{eq:pde}: the elimination of $s\bar\mu_x v_s$ by the optimal hedge and of $(\partial_x C)^2$ by the distortion are both pointwise in $y$, so they proceed unchanged with $\sigma$ replaced by $\sigx(y)$ throughout.

The reduction from three hidden dimensions to the two-dimensional state
$(x,y)$ substantially enlarges the range of numerical methods that can be
used for \eqref{eq:pde2d}. Unlike the full three-dimensional problem of
\cite{HalperinItkinMarketron2}, there is no need to commit to a particular
discretization at the level of the model itself. Several complementary
approaches are available, and they are useful for different purposes.

A natural high-accuracy approach is based on Strang splitting of the
$2$D operator into the $x$-diffusion/drift part, the $y$-diffusion/drift
part, and the mixed derivative. The latter can be treated by the
positivity-preserving construction (the Diagonal Frog method) developed in \cite{ItkinDF2026,Itkin2026FCDF}. This is particularly useful here because the mixed term $\rho\,\sigx(y) C_{xy}$ is not sign-definite under a naive explicit discretization, whereas the construction of \cite{ItkinDF2026} provides a stable treatment that preserves positivity under the corresponding admissibility conditions.

For the $x$-subproblem, $y$ is frozen during the split step. Hence $\sigx(y)$ and
all coefficients depending on $y$ become parameters of a one-dimensional equation
in $x$. The resulting equation contains the source term
$-\frac{\bar\mu_x^2}{2\gamma\sigx^2(y)}$ in addition to the $x$-diffusion and
drift. Because the volatility is quadratic in $y$ through \eqref{eq:sigy}, the
coefficients are nevertheless constant with respect to $x$ on each fixed
$y$-slice. The corresponding Green's function is therefore available in closed
form, and the source contribution can be incorporated analytically. This provides
a direct extension of the Gaussian/RBF machinery used in
\cite{HalperinItkinMarketron2}, although the resulting coefficients are now
indexed by the $y$ grid.

The $y$-subproblem contains the nonlinear gradient-square term
\begin{equation}
\frac{p_y}{2}(C_y)^2, \qquad p_y=\gamma(1-\rho^2).
\end{equation}
This term can be removed exactly by the Cole--Hopf transformation. In particular,
consider the $y$-subproblem in isolation,
\begin{equation}
C_\tau = \frac{1}{2}C_{yy} + b(y)C_y + \frac{p_y}{2}(C_y)^2, \qquad b(y) = \mu_y-\rho\,\frac{\bar\mu_x}{\sigx(y)}.
\end{equation}
Under $w = \exp(p_y C)$, the quadratic-gradient term cancels exactly against the contribution generated by the diffusion term, giving the linear equation
\begin{equation}
w_\tau = \frac{1}{2} w_{yy} + b(y) w_y.
\end{equation}
Thus the $y$-diffusion, $y$-drift, and quadratic-gradient terms can be treated
together as a single linear parabolic subproblem. Since this equation has bounded
coefficients and positive initial data, the maximum principle preserves positivity
of $w$ and prevents the Cole--Hopf variable from developing a finite-time
singularity. The corresponding inverse transformation,
\begin{equation}
C = \frac{1}{p_y}\log w,
\end{equation}
therefore provides a stable treatment of the nonlinear $y$-subproblem.

This observation also determines how the operator splitting must be implemented.
The $C_{yy}$ term should not first be consumed by an independent diffusion step
and subsequently be invoked again in the Cole--Hopf transformation. Instead, the
$y$-diffusion, $y$-drift, and quadratic-gradient term are assigned consistently
to the same split subproblem. A convenient implementation is to solve the
transformed linear equation implicitly, using a backward-Euler tridiagonal solve
in $y$ for each fixed $x$. To control the numerical range of the exponential and
logarithm, we shift $C$ by its maximum over the $y$ grid before applying the
Cole--Hopf transformation. Since the linear $y$-operator is invariant under
addition of a constant to $C$, this shift is exact and does not alter the
result after the inverse transformation.

The RBF discretization of Appendix~\ref{app:rbf} can also be used as a spatial
approximation of the resulting one- or two-dimensional linear subproblems.
However, the closed-form RBF update derived for the original operator should not
be interpreted as applying unchanged to the complete $2$D PDE. The splitting
must respect the assignment of the $y$-diffusion and quadratic-gradient terms to
the Cole--Hopf subproblem.

For the numerical experiments in this paper we use a split finite-difference
implementation tailored to the stability issue that arises at low $\sigx(y)$.
A fully explicit discretization of \eqref{eq:pde2d} becomes unstable in this
regime, with the instability becoming particularly pronounced for sufficiently
large $c_y$. This behavior is numerical rather than structural: the continuous
$y$-subproblem is transformed by Cole--Hopf into a linear
advection-diffusion equation and hence does not exhibit the corresponding
finite-time blow-up, \cite{HalperinItkinMarketron2}.

This split formulation removes the problematic $y$-advection CFL restriction
from the explicit step and, more importantly, prevents the low-$\sigx(y)$
instability associated with treating the quadratic-gradient and $y$-diffusion
terms explicitly. In numerical tests, the split scheme agrees with the fully
explicit scheme in parameter regions where the latter is stable, while remaining
stable in parameter regimes where the explicit scheme develops numerical
blow-up. In particular, at the calibrated parameter values the two schemes
produce option prices agreeing to within approximately $1.6\times10^{-6}$ across
the tested strikes. A sweep in $c_y$ at $\sigma_0=0.121$ remains smooth up to
$c_y=3$, whereas the fully explicit scheme becomes unstable above approximately
$c_y=0.2$. The split scheme also remains stable at $c_y=0.2$ for substantially
smaller volatility levels, including $\sigma_0=0.09$, $0.07$, and $0.05$.
Put--call parity is satisfied to numerical precision, with discrepancies of
order $10^{-17}$ in these tests.

The purpose of this implementation is not to construct a production-quality
pricing engine, but to provide a transparent and computationally efficient
solver with which the calibration and identifiability experiments can be
repeated reliably. The split method retains the simplicity of finite differences
for the multidimensional part of the problem while treating the numerically
problematic nonlinear subproblem through an exact Cole--Hopf linearization.
Consequently, it is sufficiently robust for the parameter ranges considered in
\cref{sec:numerics}, without requiring the more elaborate RBF or
positivity-preserving machinery described above.

The reduction from the original $(x,y,\theta)$ problem to the
two-dimensional $(x,y)$ problem remains the main computational gain.
Regardless of which of the above discretizations is adopted, the
collocation or finite-difference grid now contains only $N_xN_y$ spatial
degrees of freedom rather than $N_xN_yN_\theta$. More importantly, the
numerical problem is no longer burdened by the fast signal dimension that
was responsible for much of the cost of the original pricing calculation.
The different schemes described above should therefore be viewed as
alternative numerical realizations of the same reduced pricing equation,
rather than as additional modelling assumptions.

\myparagraph{Parameter count}
The reduced parameter vector is
\begin{equation}
\Theta_{\mathrm{red}}
= \bigl(\sigma_0,\ \alpha,\ \beta,\ \mu,\ g,\ c,\ \bar y,\ \barf,\ \rho \bigr),
\qquad \dim \Theta_{\mathrm{red}} = 9,
\label{eq:redparams}
\end{equation}
with $\gamma$ exogenous, $p_y = \gamma(1-\rho^2)$ derived, $\bar\epsilon = \bar\epsilon(g, D)$ from \eqref{eq:epsfix}, and $\bar\eta(y)$ tied to $r$ through $\sigx(y)$. The two volatility coefficients of \eqref{eq:sigy} are the price of making the flow visible to option prices at all, and they are not duplicates of existing freedoms: $\alpha$ controls skew and $\beta$ convexity, neither of which the drift channel can produce. When short maturities require the transient ramp \eqref{eq:ramp}, the rate $k$ joins the list and the count is ten. These parameters serve the entire surface. Table~\ref{tab:count} summarizes the reduction.
\begin{table}[htb]
\centering
\begin{tabular}{llc}
\toprule
Step & Mechanism & Parameters removed \\
\midrule
Gauge fixing (Sec.~\ref{sec:gauge}) & $\sigma_y = 1$ & 1 \\
Coupling identification (Sec.~\ref{sec:gauge}) & only $c_x c_y$ observable, $(c_x,c_y) \to c$ & 1 \\
Risk aversion (Sec.~\ref{sec:gamma}) & $\gamma$ fixed exogenously & 1 \\
Regularizer (Sec.~\ref{sec:apriori}) & $\bar\epsilon = \bar\epsilon(g,D)$ & 1 \\
Initial states (Sec.~\ref{sec:apriori}) & $\theta_0 = \hat\theta$, $y_0 = \hat y$ & 2 \\
Signal averaging (Sec.~\ref{sec:averaging}) & $(b_1,b_2,k,\hat\theta,\sigma_z) \to (\barf,\barh)$ & 3 \\
Level absorption (Sec.~\ref{sec:averaging}) & $\barh$ into $\bar y$ & 1 \\
Cross-correlation (Sec.~\ref{sec:volofy}) & $\rho_{x\theta} \to 0$ (order $1/k$) & 1 \\
Volatility coefficients (Sec.~\ref{sec:volofy}) & $\alpha, \beta$ introduced by \eqref{eq:sigy} & $-2$ \\
\midrule
Total & $18 \to 9$ & 9 \\
\bottomrule
\end{tabular}
\caption{Parameter reduction ledger. The intermediate three-dimensional risk block retains $(\sigma, \rho, p_\theta)$ as free coordinates; $p_\theta$ vanishes with the $\theta$ dimension under averaging, its constituents $\sigma_z$ and $\rho_{x\theta}$ being removed by the signal-averaging and cross-correlation rows respectively, with $\rho_{x\theta}$ entering only at order $1/k$. The two volatility coefficients $\alpha,\beta$ of \eqref{eq:sigy} are then introduced, giving $\dim\Theta_{\mathrm{red}} = 9$ in \eqref{eq:redparams}. In the reduced model $p_y = \gamma(1-\rho^2)$ is derived, not calibrated. The optional term-structure parameter $k$ of \eqref{eq:ramp} is not counted; including it raises the count to ten.}
\label{tab:count}
\end{table}

\myparagraph{Remark on the flow-impact couplings in the calibrated model.}
The reduced dynamics in \eqref{eq:sde2d} and the pricing PDE \eqref{eq:pde2d} are
written with independent couplings $c_x$ and $c_y$, consistent with the exact
scaling gauge identified in \cref{sec:gauge}. The subsequent analysis, however,
and in particular the parameter count in \eqref{eq:redparams} and the numerical
calibration of \cref{sec:calibration}, employs a single coupling constant $c$.
We now state explicitly what this compression entails.

Setting $c_x = c_y = c$ in \eqref{eq:sde2d} and \eqref{eq:pde2d} yields the
model that is actually calibrated. This is a deliberate simplification made at
the final stage, after the other reductions have already been imposed. The gauge
argument of \cref{sec:gauge} establishes that the observable process, and hence
every option price, depends on the couplings only through the invariant product
$c_xc_y$. In the fully reduced, gauge-fixed model, the independent rescalings
that previously created a flat direction in the calibration objective have
already been absorbed: $\sigma_y$ is normalized to unity by \eqref{eq:norm}, the
initial states are frozen by \eqref{eq:initfix}, and the signal block is
eliminated. Consequently, the residual invariance is isolated to the single
product $c_xc_y$, and the redundancy associated with separating the two factors
is, at this stage, a pure normalization freedom that carries no economic
content.

Retaining separate couplings in the final parameter list would therefore add a
nominal degree of freedom without any corresponding gain in identifiable
structure. The single constant $c$ in \eqref{eq:redparams} should be
interpreted as the gauge-fixed representative of the product $c_xc_y$, under
the symmetric normalization $c_x=c_y$. The cost of this choice is the
reintroduction of a mild, isolated softness along the $c$-coordinate, which is
distinct from the continuous degeneracy of the full model. This softness is
explicitly diagnosed by the eigenspectrum and marginal variances reported in
\cref{sec:identifiability}, and it is handled in \cref{sec:calibration} by
retaining $c$ as a risk-neutral parameter whose option-implied value is
reported together with its profile uncertainty. The wedge between this value
and its physical-measure counterpart is then interpreted as the market price of
flow risk, with the associated estimation error carried through to the
economic conclusion.

\section{The Marketron as a generalized Langevin equation} \label{sec:gle}

The reduction presented in \cref{sec:averaging} yields a two-dimensional system governed by the log-price and the memory variable. Before developing its implications, we briefly translate the terminology we shall use into the language of financial mathematics.

A \emph{generalized Langevin equation} (GLE) is, mathematically, a non-Markovian stochastic Volterra equation: the drift of the state depends on the entire history of its increments through a \emph{memory kernel}. In the context of market microstructure, the kernel is the propagator of price impact. It measures how a past flow innovation continues to distort the price drift at a later time.

The \emph{effective potential} is the stationary drift distortion induced by the accumulated memory, while the \emph{fluctuation-dissipation relation} is the equilibrium condition that links the decay rate of the impact propagator to the autocorrelation time of the exogenous noise. When this link is violated, the system is termed \emph{athermal} or, in the physics literature, \emph{active matter}.

In finance, this corresponds to a regime where the mean-reversion of order-flow shocks and the persistence of return predictors are governed by distinct time scales. Such non-Markovian dynamics have recently entered mainstream quantitative finance through the rough-volatility literature \cite{AbiJaberElEuch2019,AbiJaberLarssonPulido2019,BayerBreneis2023}, where singular kernels are approximated by sums of exponentials; the Marketron provides an exact finite-dimensional embedding for a specific state-dependent kernel.

With this dictionary in place, the reinterpretation of the reduced system proceeds as follows. By eliminating the memory variable $y$ rather than the signal $\theta$, we demonstrate that the Marketron is exactly a one-dimensional overdamped GLE for the log-price, driven by a memory kernel we compute in closed form. Consequently, $y_t$ is not an independent latent state introduced \emph{ad hoc}, but the exact Markovian lift of a non-Markovian price-impact process.

This mapping is not merely interpretive: it shows that the absence-of-arbitrage condition on the kernel matrix, derived independently in \cite{ItkinGLE1}, coincides with the market-price-of-risk terms generated by exponential-utility indifference pricing. Moreover, it delivers a testable parametric condition, the equality of the memory relaxation rate $\mu$ and the signal relaxation rate $k$, that distinguishes equilibrium dynamics from a driven, athermal market.

The reader who prefers to avoid physics terminology may read \textbf{GLE} as \textsf{stochastic Volterra equation}, \textbf{kernel} as \textsf{impact propagator}, and \textbf{fluctuation-dissipation} as \textsf{equilibrium consistency condition}. The mathematics that follows is identical.

This formulation yields three primary insights. First, the kernel occupies the specific entry of the kernel matrix that the absence of arbitrage eliminates under the pricing measure. This theoretically aligns the Marketron's flow mechanism with the market price of memory risk established in \cite{ItkinGLE1}. Second, the kernel is state-modulated, situating the model outside the standard GLE class and precisely delineating the analytical boundaries of both frameworks. Finally, the fluctuation-dissipation relation, which distinguishes equilibrium from active dynamics, simplifies within the Marketron to an equality between two rates explicitly derived during calibration. This transforms the active-matter reading of the model (see \cref{sec:gle-fdt}) into an empirically verifiable parametric condition.

\subsection{The Kernel formulation} \label{sec:gle-kernel}

Recalling the reduced system from \eqref{eq:sde2d}:
\begin{align}
dx_t &= \bigl[\barf + \bar\eta - c_x\,y_t \VM'(x_t)\bigr]dt + \sigx(y_t)\,dW^{(x)}_t,
\nonumber \\
dy_t &= \bigl[\barh + \mu(\bar y - y_t) - c_y\,\VM(x_t)\bigr]dt + \sigma_y\,dW^{(y)}_t.
\label{eq:sde2d-again}
\end{align}
The memory variable influences the price drift strictly via the product $y_t\VM'(x_t)$, while its own drift relies on the price solely through $\VM(x_t)$. Because the $y$-equation is linear with constant coefficients, it can be explicitly solved and substituted, thereby eliminating $y$ from the system dynamics entirely.

\begin{proposition} \label{prop:marketron-gle}

The system \eqref{eq:sde2d-again} is an exact Markovian lift of a one-dimensional overdamped GLE for $x_t$,
\begin{equation} \label{eq:marketron-gle}
\int_0^t K_{M}(t,s)\,\dot x_s\,ds = -\,U_{\mathrm{eff}}'(x_t) + \eta_t,
\end{equation}
with the state-dependent kernel
\begin{equation} \label{eq:Kxx}
K_{M}(t,s) = \kappa \,\delta(t-s)
+ \frac{c_x c_y}{\mu}\,\VM'(x_t)\,\VM'(x_s)\,e^{-\mu(t-s)}
+ \mathcal{O}(\sigma^2),
\qquad \kappa = 1,
\end{equation}
the effective potential
\begin{equation} \label{eq:Ueff}
U_{\mathrm{eff}}(x) = -\bigl(\barf + \bar\eta\bigr)x
+ c_x\,\hat y\,\VM(x) - \frac{c_x c_y}{2\mu}\,\VM(x)^2,
\qquad
\hat y \equiv \bar y + \frac{\barh}{\mu},
\end{equation}
and a noise $\eta_t$ that is the sum of $\sigma\,\dot W^{(x)}_t$ and the multiplicative term $-c_x\,\sigma_y\VM'(x_t)\int_0^t e^{-\mu(t-s)}dW^{(y)}_s$. The $\mathcal{O}(\sigma^2)$ remainder in \eqref{eq:Kxx} is the state-dependent memory correction isolated in Appendix~\ref{app:gle}. It vanishes when $\VM$ is linear and marks \eqref{eq:Kxx} as the leading-order form of the kernel.
\end{proposition}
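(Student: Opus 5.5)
The plan is to solve the linear $y$-equation in \eqref{eq:sde2d-again} by variation of constants, substitute the solution into the $x$-drift, and then integrate the resulting convolution of $\VM(x_s)$ by parts. The integration by parts converts that convolution into a memory integral against $dx_s$; its boundary terms produce the quadratic part of $U_{\mathrm{eff}}$.

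\textbf{Step 1 (solve for $y$).} Variation of constants gives
\begin{equation*}
y_t = \hat y + (y_0-\hat y)e^{-\mu t} - c_y\int_0^t e^{-\mu(t-s)}\VM(x_s)\,ds + \sigma_y\int_0^t e^{-\mu(t-s)}\,dW^{(y)}_s ,
\end{equation*}
with $\hat y = \bar y + \barh/\mu$.

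\textbf{Step 2 (integrate the convolution by parts).} Apply Itô's formula, $d\VM(x_s) = \VM'(x_s)\,dx_s + \tfrac12\VM''(x_s)\sigx^2(y_s)\,ds$, together with $\frac{d}{ds}e^{-\mu(t-s)} = \mu e^{-\mu(t-s)}$. This yields
\begin{equation*}
\int_0^t e^{-\mu(t-s)}\VM(x_s)\,ds = \frac{1}{\mu}\Bigl[\VM(x_t) - e^{-\mu t}\VM(x_0)\Bigr] - \frac{1}{\mu}\int_0^t e^{-\mu(t-s)}\VM'(x_s)\,dx_s - \frac{1}{2\mu}\int_0^t e^{-\mu(t-s)}\VM''(x_s)\sigx^2\,ds .
\end{equation*}
The last integral is the $\mathcal O(\sigma^2)$ state-dependent correction to be collected in Appendix~\ref{app:gle}. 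It vanishes identically when $\VM''\equiv 0$, i.e.\ when $\VM$ is linear.

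\textbf{Step 3 (handle the transient).} The terms proportional to $e^{-\mu t}$ are $(y_0-\hat y)e^{-\mu t} + \frac{c_y}{\mu}e^{-\mu t}\VM(x_0)$. I would either discard them as a decaying transient or, more cleanly, observe that under the initial-state choice \eqref{eq:initfix} (with $c_y$ in place of $c$) they cancel exactly.

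\textbf{Step 4 (substitute into the $x$-equation and collect).} Insert $y_t$ into $dx_t = [\barf+\bar\eta - c_x y_t\VM'(x_t)]dt + \sigx\,dW^{(x)}_t$. The contributions sort as follows.
\begin{itemize}
\item The $\hat y$ term gives $-c_x\hat y\VM'(x_t) = -\frac{d}{dx}\bigl(c_x\hat y\VM\bigr)$.
\item The boundary term $\frac{c_xc_y}{\mu}\VM'(x_t)\VM(x_t)$ equals $\frac{d}{dx}\bigl(\frac{c_xc_y}{2\mu}\VM^2\bigr)$.
\item The constant drift $\barf+\bar\eta$ gives the linear part of the potential.
\end{itemize}
Together these make up exactly $-U_{\mathrm{eff}}'(x_t)$ with $U_{\mathrm{eff}}$ as in \eqref{eq:Ueff}.

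The memory term $-\frac{c_xc_y}{\mu}\VM'(x_t)\int_0^t e^{-\mu(t-s)}\VM'(x_s)\,dx_s$ is moved to the left-hand side. Combined with $\dot x_t = \int_0^t\delta(t-s)\dot x_s\,ds$ (convention: full unit weight at the endpoint, hence $\kappa=1$), it gives the kernel \eqref{eq:Kxx}. The remaining stochastic pieces, $\sigx\dot W^{(x)}_t$ and $-c_x\sigma_y\VM'(x_t)\int_0^t e^{-\mu(t-s)}dW^{(y)}_s$, form $\eta_t$.

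\textbf{Step 5 (Markovian lift).} Conversely, given a solution $x$ of \eqref{eq:marketron-gle}, define $y_t$ by the formula of Step 1. Then $(x,y)$ solves \eqref{eq:sde2d-again}, so the pair is the exact Markovian lift of the GLE.

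\textbf{Main obstacle.} The delicate point is the meaning of $\int_0^t K_M(t,s)\dot x_s\,ds$ when $x$ is a semimartingale. I would read it as the Itô integral $\int_0^t e^{-\mu(t-s)}\VM'(x_s)\,dx_s$ multiplied afterwards by the $\mathcal F_t$-measurable factor $\VM'(x_t)$. That factor sits outside the integral, so no anticipation arises. The equation is then understood in integrated (differential) form rather than pointwise in $\dot x$. Done this way, the Itô correction is kept rather than lost, which is precisely the $\mathcal O(\sigma^2)$ remainder in \eqref{eq:Kxx}.
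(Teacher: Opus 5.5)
Your proposal follows the paper's proof in Appendix~\ref{app:gle} step for step: you solve the linear $y$-equation by variation of constants, substitute into the $x$-drift, and integrate the $\VM$-convolution by parts with It\^o's formula. You then read off the $\VM^2$ boundary term as part of $U_{\mathrm{eff}}$ and collect $\VM'(x_t)\int e^{-\mu(t-s)}\VM'(x_s)\,dx_s$ together with $\dot x_t=\int\delta(t-s)\dot x_s\,ds$ into $K_M$. Your additions (the converse lift, the It\^o-integral reading of $\int K_M\dot x_s\,ds$, and $\sigx^2(y_s)$ in the correction) are sound. The one exception is the claim that \eqref{eq:initfix} cancels the $e^{-\mu t}$ transient exactly: that holds only if $h(\hat\theta)$ there is read as $\barh$, so discarding the transient, as the paper does, is the safe route.
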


\begin{proof}
See Appendix~\ref{app:gle}.
\end{proof}

The absence of an inertial term stems from the price equation in \eqref{eq:sde2d-again} being first-order with respect to $x$, establishing a direct correspondence with the overdamped regime. Instantaneous friction manifests as the atom $\kappa\delta$ of \eqref{eq:Kxx}, where $\kappa=1$ by the normalization of the $\dot x_t$ coefficient in $dx_t=\mu_x\,dt+\sigma\,dW^{(x)}_t$.

The structure of \eqref{eq:Kxx} warrants specific attention. The kernel depends on more than just the temporal lag. It incorporates the modulation $\VM'(x_t)\VM'(x_s)$ evaluated at both endpoints of the memory integral. Thus, the influence of historical states on the current state is contingent upon the price trajectory. Under the market potential defined in \eqref{eq:VM}, this modulation varies exponentially with $x$ and exhibits a sign change at the critical threshold $\hat x$, where $\VM'(\hat x)=0$. Economically, this translates to prominent memory effects within the default region and attenuated memory elsewhere, effectively embedding the model's economic rationale into the kernel's architecture.

Notably, the effective potential $U_{\mathrm{eff}}$ in \eqref{eq:Ueff} diverges from the underlying market potential. It incorporates a linear tilt derived from the drift constants, a $c_x\hat y\VM$ component reflecting the uncoupled relaxation level of $y$, and a negative quadratic term $-\tfrac{c_x c_y}{2\mu}\VM^2$ induced by the system coupling. This feedback mechanism between price and memory deepens the effective potential well relative to its uncoupled state. Consequently, extracting the market potential directly from a stationary density would erroneously yield $U_{\mathrm{eff}}$ rather than $\VM$.

\begin{myremark}[The atom and the short end] \label{rem:caseB}
The atom in \eqref{eq:Kxx} serves a distinct structural purpose. A kernel comprising $\gamma\delta$ plus a slow tail perfectly aligns with the configuration identified in \cite{ItkinGLE1} for generating a finite zero-maturity at-the-money skew: a regular semimartingale short end coupled with a memory-carrying tail. Conversely, a kernel exhibiting a power-law behavior at the origin would produce a skew diverging as $T^{\beta_0-1}$ at short maturities. The Marketron inherently possesses the short-end regularity required for a finite skew without artificial imposition. It features a single relaxation mode in the tail (indicative of short memory) while \cref{sec:gle-lift} outlines the architecture for a multi-mode extension.
\end{myremark}

\begin{myremark}[Kernel confirmation of \cref{prop:gauge}] \label{rem:gaugekernel}
The coupling constants influence \eqref{eq:Kxx} exclusively through the product $c_x c_y$, which remains invariant under \eqref{eq:gauge}, unlike its individual factors. This independently corroborates \Cref{prop:gauge}, which posits that only $c_x c_y$ is observable based on the invariance of the law of $x_t$. The kernel representation achieves this conclusion via an alternate route, as $K_M$ intrinsically governs the reduced $x$-dynamics and must remain independent of gauge-variant quantities. When the constants appear separately, such as in the $c_x \hat y \VM$ tilt of \eqref{eq:Ueff}, the accompanying factor is gauge-covariant, rendering the combination invariant.
\end{myremark}

\begin{myremark}[Kernel matrix positioning] \label{rem:firstrow}
Within the two-dimensional GLE framework of \cite{ItkinGLE1}, featuring state $Z=(X,Y)^\top$ and kernel matrix $\bm K$, the $K_{XX}$ entry dictates the dependence of the log-price drift on its historical increments. The tail in \eqref{eq:Kxx} scales $\dot x_s$ within the $x_t$ equation, formally corresponding to this entry, while the atom represents the standard instantaneous friction of an overdamped system. Under the pricing measure, the absence of arbitrage dictates $K_{XX}\equiv0$, prompting the first row of $\bm K$ to manifest in the measure change as the market price of memory risk. The Marketron arrives at the equivalent construct via exponential-utility indifference pricing with risk aversion $\gamma$. The convergence of these two distinct methodologies serves as a robust theoretical validation for both frameworks.
\end{myremark}

\subsection{Framework boundaries and intersections} \label{sec:gle-containment}

Proposition~\ref{prop:marketron-gle} might imply that the Marketron is a constrained instance of the GLE class presented in \cite{ItkinGLE1}, or vice versa. However, neither framework fully subsumes the other, and their divergence clarifies the unique utility of each approach.

The kernel in \cite{ItkinGLE1} is strictly a function of the lag, deriving its generality from the allowable number of relaxation modes. A completely monotone kernel is expressed as a superposition $K(\tau)=\int_0^\infty e^{-\lambda\tau}\nu(d\lambda)$, and achieving a power law necessitates a continuum of modes. This architecture specifically facilitates long-memory modeling, which is the primary focus of that framework.

In contrast, the kernel in \eqref{eq:Kxx} relies on a single mode, characteristic of short memory with respect to lag. Its primary generalization lies in the state modulation $\VM'(x_t)\VM'(x_s)$, a feature absent in pure lag-dependent kernels. The intersection of these two classes is limited to the single-mode, state-independent case, corresponding to a linearized Marketron around an operating point where $\VM'$ is approximated as constant.

This linearization, however, fundamentally alters the model's behavior. Based on \eqref{eq:VM}, the modulation $\VM'$ fluctuates by orders of magnitude across the relevant log-price domain and changes sign at $\hat x$. Approximating the Marketron with a lag-only GLE eliminates this critical state dependence, thereby erasing the fundamental distinction between the default and normal regions that the model was designed to capture. The reverse substitution is equally flawed: a single exponential function cannot replicate the slowly decaying memory observed in empirical volatility data, rendering the Marketron kernel unsuitable as $K_{YY}$ in that broader context.

\subsection{Exact Markovian lifting and long-memory extensions} \label{sec:gle-lift}

The construction of finite-dimensional Markovian embeddings for non-Markovian dynamics has a rich precedent in the rough volatility literature. For instance, \cite{AbiJaberElEuch2019} along with \cite{AbiJaberLarssonPulido2019} proposed  the multifactor approximation of fractional Volterra processes. Their approach relies on approximating the singular fractional kernel by a sum of exponential terms, effectively lifting the non-Markovian process into a high-dimensional affine state space. Furthermore, \cite{BayerBreneis2023} formalizes this methodology using Prony approximations to efficiently match the fractional kernel with a superposition of Markovian modes.

While \eqref{eq:sde2d-again} could similarly be interpreted as a Markovian approximation of a non-Markovian system, the relationship in our framework is formally exact. Because a single exponential is completely monotone, the GLE in \eqref{eq:marketron-gle} permits a finite-dimensional Markovian embedding utilizing one auxiliary variable (specifically $y_t$). The elimination of $y$ in Appendix~\ref{app:gle} is exact; the closed-form kernel \eqref{eq:Kxx} is its stationary, leading-order form, obtained by discarding transients that decay on the scale $1/\mu$ and by absorbing an $O(\sigma^2)$ state-dependent memory term into the noise $\eta_t$, neither of which alters the structure of the kernel but neither of which is identically zero at finite $t$. The primary distinction between the descriptions is purely foundational (specifically, whether the memory variable or the kernel is defined as the primitive mathematical object).

This relationship also provides the theoretical blueprint for integrating long memory into the Marketron. Expanding the single memory variable into a family $y^{(i)}_t$ (each characterized by distinct relaxation rates $\mu_i$ and couplings $c_i$, governing the second line of \eqref{eq:sde2d-again} and additively influencing the price drift) yields:
\begin{equation} \label{eq:multimode}
K_{M}(t,s) = \kappa\,\delta(t-s) + \VM'(x_t)\,\VM'(x_s)
\sum_i \frac{c_x c_{y,i}}{\mu_i}\,e^{-\mu_i(t-s)}.
\end{equation}
This state-modulated, completely monotone kernel successfully reproduces power-law decay across any finite lag interval. Consequently, introducing long memory to the Marketron requires no fundamental structural changes, only the addition of further memory variables, and the critical state modulation remains fully intact. However, this extension incurs a parametric penalty, as each mode introduces an additional rate and coupling constant. Given that mitigating parameter proliferation is a central objective of this framework, such an extension should be reserved strictly for applications where empirical data demand it.

\subsection{Fluctuation-dissipation and the active-matter interpretation}
\label{sec:gle-fdt}

Throughout this section we use the statistical-mechanics terminology of GLEs alongside its translation into the language of non-Markovian financial dynamics.  The correspondence is:
\begin{center}
\begin{tabular}{ll}
\toprule
\textbf{Statistical mechanics} & \textbf{Financial dynamics} \\
\midrule
Generalized Langevin equation (GLE) & Stochastic Volterra equation \\
Memory kernel $K_M(t,s)$ & Impact propagator / price-impact kernel \\
Fluctuation-dissipation relation (FDT) & Equilibrium consistency condition \\
Effective temperature $\Theta$ & Noise--kernel proportionality constant \\
Athermal / active-matter regime & Driven non-equilibrium regime \\
Internal reservoir & Endogenous order-flow feedback \\
Quasiparticle & Effective price degree of freedom \\
\bottomrule
\end{tabular}
\end{center}
\vspace{1em}
Both languages describe the same mathematical objects; the reader may use
whichever is more familiar.

Previous discussions characterizing the Marketron as a model of active matter
(where the market quasiparticle relies on an internal reservoir rather than
purely external forcing) have remained largely interpretive.  The GLE
representation formalizes this analogy (or, in financial terms, the non-Markovian
kernel representation) into a strict parametric condition.

Within a standard GLE framework, the autocovariance of the driving noise,
$C(\tau)=\EE[\eta_t\eta_{t-\tau}]$, and the memory kernel operate as independent
constructs.  Equilibrium is defined as the state where these two elements are
coupled via the second fluctuation-dissipation relation:
\begin{equation}\label{eq:fdt}
C(\tau) = \Theta\,K(\tau),
\end{equation}
where $\Theta$ represents an effective temperature (in finance, the
noise--kernel proportionality constant).  Systems violating \eqref{eq:fdt} are
classified as athermal (driven non-equilibrium).  Active matter specifically
denotes athermal systems where this violation is driven by an internal energy
source (endogenous order-flow feedback).  This classification has concrete
implications: it dictates whether the stationary density corresponds to the
Boltzmann weight of the potential, and \cite{ItkinGLE1} constructs an empirical
test based on the ratio between the two sides of \eqref{eq:fdt}.

The Marketron explicitly provides both components of this relation.  The kernel
tail in \eqref{eq:Kxx} decays at a rate of $\mu$, derived from the relaxation
of the memory variable.  The coloured component of the forcing is introduced via
$f(\theta_t)$ and $h(\theta_t)$, inheriting its correlation time from the
underlying signal dynamics \eqref{eq:sde}.  To align these decay rates in accordance with \eqref{eq:fdt}, the model necessitates:
\begin{equation}\label{eq:fdt-marketron}
k = \mu,
\end{equation}
alongside an amplitude condition linking $(b_1,b_2,\sigma_z)$ to
$(c,\sigma_y)$ through $\Theta$.

Because $k$ and $\mu$ represent independently identified calibration parameters,
\eqref{eq:fdt-marketron} functions as a verifiable constraint rather than a
theoretical assumption.  Divergent rates confirm the model operates in an
athermal (driven non-equilibrium) regime, substantiating the active-matter
interpretation with empirical parameter values rather than qualitative analogy.
Conversely, identical rates would signify equilibrium, reducing the
quasiparticle (effective price degree of freedom) framework to a purely
descriptive overlay.

\myparagraph{The trigonometric signal.}
Equation \eqref{eq:fdt-marketron} aligns the leading exponential decay rates.
Because the forcing terms $f(\theta)=b_1\cos\theta$ and $h(\theta)=b_2\sin\theta$
are nonlinear transformations of an Ornstein--Uhlenbeck process, their
corresponding autocovariance does not exhibit pure exponential decay.

Specifically, for $\theta\sim\mathcal{N}(m,v)$ and any real parameter $\omega$,
the characteristic function is given by
$\EE[e^{i\omega\theta}]=e^{i\omega m-\omega^2 v/2}$, which yields
\begin{equation}
\EE[\cos\omega\theta] = e^{-\omega^2 v/2}\cos\omega m, \qquad
\EE[\sin\omega\theta] = e^{-\omega^2 v/2}\sin\omega m .
\end{equation}
Applying this result with $\omega=1$ to the stationary law
$(m,v)=(\hat\theta,\sigma_z^2/2k)$ yields \eqref{eq:effective}, while
application to the conditional law $(m_t,v_t)$ produces \eqref{eq:transient}.
Analogous identities provide the second moments
$\EE[\cos^2\theta]=\tfrac12(1+e^{-2v}\cos 2m)$ required for the source term
$\bar\mu_x^2$ in \eqref{eq:pde2d}. Consequently, the averaged PDE retains fully
closed-form RBF matrix elements.

Following these identities, the autocovariance incorporates the factor
$\exp[-\sigma_z^2(1-e^{-k\tau})/2k]$, which simplifies to $e^{-k\tau}$ only
under the small-amplitude limit $\sigma_z^2/2k\ll 1$. This confirms that
generic parameters inherently produce an athermal (non-equilibrium) model,
because unlinked rates are highly unlikely to coincide perfectly. Therefore, the
exact equality in \eqref{eq:fdt-marketron} must be interpreted strictly as the
leading-order parametric condition.

\section{Identifiability analysis} \label{sec:identifiability}

Having established the analytical reduction of the parameter space, we must now address its empirical identifiability. While analytical symmetries successfully collapse the unobservable dimensions, they do not guarantee strict minimality. This section investigates whether the reduced parameter vector $\Theta_{\mathrm{red}}$ constitutes an irreducible core or retains latent structural redundancies. Using local sensitivity metrics and manifold boundary approximation methods, we evaluate the calibration objective's robustness and confirm the structural integrity of the reduced model.

\subsection{Sloppiness of the calibration objective}

Sections \ref{sec:symmetries} and \ref{sec:averaging} established the reduction by construction through the gauge \eqref{eq:gauge}, combinations \eqref{eq:combos}, and the averaging limit theorem. We omit numerical verification of the original parameterization's sloppiness, as it would merely duplicate the algebraic proofs. Instead, we focus on determining whether $\Theta_{\mathrm{red}}$ retains any further unidentified structure.

Let $r_i(\Theta) = C^{\mathrm{model}}_i(\Theta) - C^{\mathrm{ref}}_i$ denote the pricing residuals and $J = \partial r / \partial \Theta$ the Jacobian. The Gauss-Newton Hessian $H = J^\top J$ quantifies local identifiability; eigenvalues spanning multiple orders of magnitude signal a sloppy model with soft directions \cite{Gutenkunst2007, TranstrumMachtaSethna2011}. Operating in log-parameters renders the spectrum scale-free and converts any residual scaling symmetry $\Theta_i \to \lambda^{a_i}\Theta_i$ into a constant direction $a$. This ensures that any surviving symmetry manifests strictly as an exact null vector of $H$ rather than merely a small eigenvalue.

To avoid artifacts from external calibrations and establish a known ground truth, reference prices are generated at the operating point $\Theta_{\mathrm{red}}^{\star} = (\sigma_0, \alpha, \beta, \mu, g, c, \bar y, \bar f, \rho) = \allowbreak (0.18, \allowbreak -0.4, \allowbreak 0.15, \allowbreak 2.0, \allowbreak 1.0, \allowbreak 0.08, \allowbreak 0.0, \allowbreak 0.0, \allowbreak -0.3)$, with exogenous risk aversion $\gamma = 3$ and drift ceiling $D = 5$ fixed. The option set is a surface of 54 quotes (six maturities from $0.12$ to $1.0$ years and nine log-moneyness points from $-0.12$ to $0.12$). The pricing noise $\varsigma$ is set to $0.5\%$ of the price, approximating half a bid-ask width.

\begin{table}[!htb]
\centering
\begin{tabular}{@{}cll@{\hspace{2em}}cll@{}}
\toprule
\multicolumn{3}{c@{\hspace{2em}}}{\textit{options-facing}} &
\multicolumn{3}{c@{}}{\textit{drift}} \\
\cmidrule(lr{2em}){1-3} \cmidrule(l){4-6}
index & $\lambda_i$ & leading coords. & index & $\lambda_i$ & leading coords. \\
\midrule
1 & $1.5\times10^{7}$ & $\sigma_0$ & 6 & $3.9\times10^{-2}$ & $c$ \\
2 & $3.5\times10^{4}$ & $\beta,\ \alpha$ & 7 & $7.7\times10^{-4}$ & $\bar f$ \\
3 & $5.3\times10^{3}$ & $\rho,\ \mu$ & 8 & $1.9\times10^{-7}$ & $\bar y,\ g$ \\
4 & $3.7\times10^{2}$ & $\mu,\ \rho$ & 9 & $4.1\times10^{-8}$ & $g,\ \bar y$ \\
5 & $8.8\times10^{1}$ & $\alpha,\ \beta,\ \rho$ & & & \\
\bottomrule
\end{tabular}
\caption{Eigenvalues of the Gauss-Newton Hessian $H = J^\top J$ in log-parameters at $\Theta_{\mathrm{red}}^{\star}$, in descending order. The leading coordinates are the parameters whose eigenvector components exceed $0.3$ in absolute value. All eigenvalues are strictly positive, so $H$ has an empty null space.}
\label{tab:eigenspectrum}
\end{table}

\begin{figure}[!htb]
\centering
\includegraphics[width=0.72\textwidth]{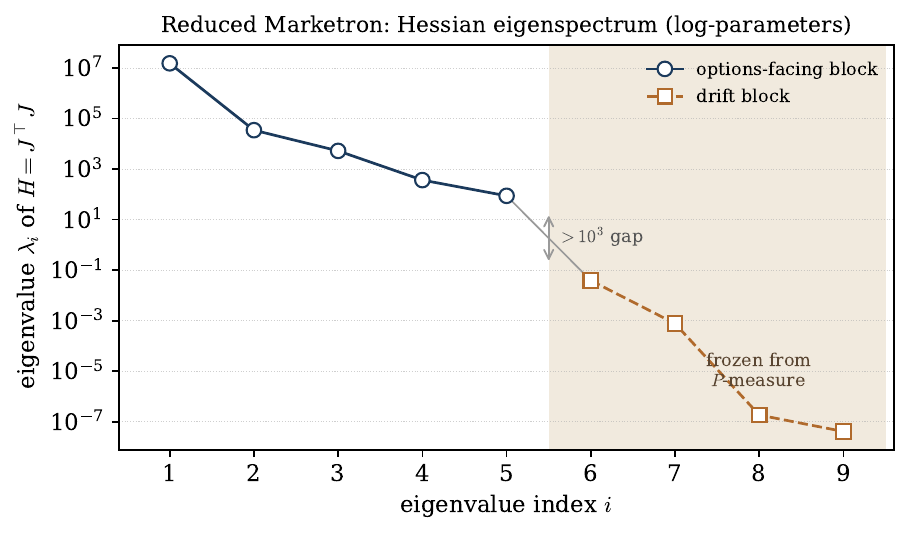}
\caption{Eigenvalues of the Gauss-Newton Hessian $H = J^\top J$ in log-parameters at $\Theta_{\mathrm{red}}^{\star}$, on a logarithmic scale. The five stiff eigenvalues (circles) form the options-facing block. The four small eigenvalues (squares, shaded region) form the drift block that is fixed from the physical measure. The two blocks are separated by more than three orders of magnitude, and no eigenvalue is zero.}
\label{fig:eigenspectrum}
\end{figure}

The spectrum (Table~\ref{tab:eigenspectrum}, Figure~\ref{fig:eigenspectrum}) reveals two key facts. First, every eigenvalue is strictly positive; the empty null space empirically certifies that no exact symmetries survive the reduction. Second, the spectrum separates into two distinct blocks with a gap of over three orders of magnitude between $\lambda_5$ and $\lambda_6$. The five stiff eigenvalues form the block resolved by the vanilla surface, with eigenvectors dominated by the smile and skew parameters ($\sigma_0, \alpha, \beta, \mu, \rho$). The four small eigenvalues correspond to a block the surface barely registers, dominated by the flow and potential parameters ($c, \bar f, \bar y, g$). These parameters affect vanilla prices only indirectly through the memory dynamics feeding $\sigma_x(y)$, not via a direct drift channel. The condition number of the full nine-parameter Hessian is $3.7\times10^{14}$. Restricting $H$ to options-facing coordinates lowers it to $3.99\times10^{8}$ including $c$, and to $1.75\times10^{5}$ for the pure smile block $\{\sigma_0, \alpha, \beta, \mu, \rho\}$.

This block structure defines the reduced model's identifiability and validates the staged calibration of Section~\ref{sec:calibration}. The residual sloppiness represents a separation of data channels rather than a surviving symmetry: the vanilla surface identifies the smile block, while the drift block $\{c, \bar f, \bar y, g\}$ is pinned from the physical measure and held fixed during option calibration.

\begin{table}[!htb]
\centering
\begin{tabular}{@{}lccc@{\hspace{2em}}lcc@{}}
\toprule
\multicolumn{4}{c@{\hspace{2em}}}{\textit{options-facing}} &
\multicolumn{3}{c@{}}{\textit{drift (fixed from $\mathbb P$-measure)}} \\
\cmidrule(lr{2em}){1-4} \cmidrule(l){5-7}
param. & $\sqrt{(H^{-1})_{jj}}$ & std ($0.5\%$) & std ($0.05\%$) & param. & $\sqrt{(H^{-1})_{jj}}$ & status \\
\midrule
$\sigma_0$ & $0.03$ & $0.007$ & $0.003$ & $c$ & $13.6$ & weak \\
$\rho$     & $0.14$ & $0.174$ & $0.048$ & $\bar f$ & $1.7{\times}10^{2}$ & $\mathbb P$-measure \\
$\alpha$   & $0.27$ & $0.206$ & $0.046$ & $\bar y$ & $2.8{\times}10^{3}$ & $\mathbb P$-measure \\
$\beta$    & $0.32$ & $0.060$ & $0.011$ & $g$ & $4.7{\times}10^{3}$ & $\mathbb P$-measure \\
$\mu$      & $0.52$ & $0.932$ & $0.294$ & & & \\
\bottomrule
\end{tabular}
\caption{Per-parameter identifiability at $\Theta_{\mathrm{red}}^{\star}$. Left block: options-facing parameters with marginal standard errors $\sqrt{(H^{-1})_{jj}}$ (log-parameters, scale-free) and empirical standard deviations from multi-start fits at two pricing-noise levels. $\mu$ is identified but soft. Right block: drift parameters fixed from the physical measure, not recovered from options.}
\label{tab:identify}
\end{table}

\begin{figure}[!htb]
\centering
\includegraphics[width=0.72\textwidth]{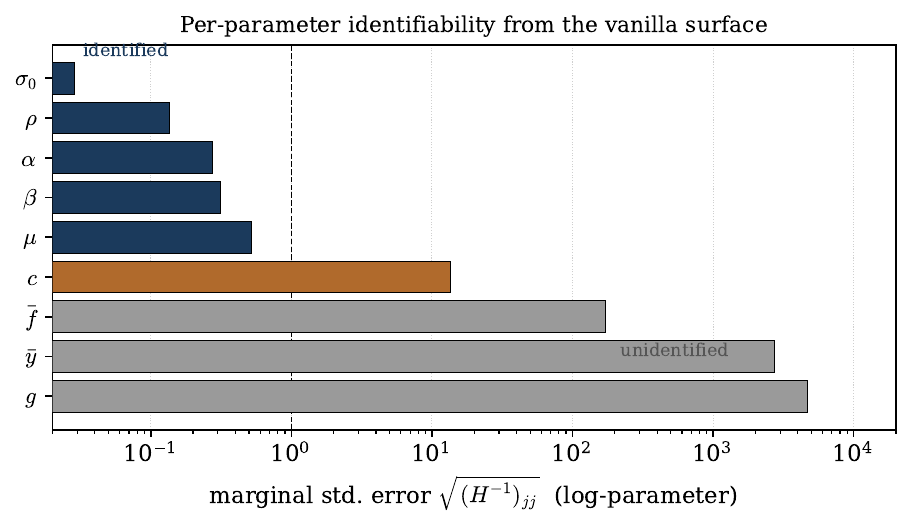}
\caption{Marginal standard error of each log-parameter, $\sqrt{(H^{-1})_{jj}}$, on a logarithmic scale. Small means well determined. The five smile parameters are identified from the vanilla surface, the flow coupling $c$ is weak, and $\bar f$, $\bar y$ and $g$ are left flat by the option data and are fixed from the physical measure. The dashed line marks unit width.}
\label{fig:identify}
\end{figure}

Marginal profile widths, evaluated as $\sqrt{(H^{-1})_{jj}}$ (Table~\ref{tab:identify}, Figure~\ref{fig:identify}), formalize this ranking. They fall into three groups separated by two large gaps. The smile parameters ($\sigma_0, \rho, \alpha, \beta, \mu$) have widths below $0.55$, while the flow coupling $c$ has a width of $13.6$. The remaining drift parameters ($\bar f, \bar y, g$) exhibit widths in the hundreds to thousands, indicating profiles left entirely flat by the vanilla surface. This precisely justifies fixing the drift block using physical-measure time series.

Profile likelihoods, \cite{Raue2009} confirm this local perspective. By re-optimizing all other parameters along a continuous scan of each target coordinate, we measure the residual envelope. Profiles are deemed flat if the cost elevation remains below the ambient noise threshold ($\tfrac12 \chi^2_{1,0.95}\varsigma^2$ above the optimum). The aforementioned marginal widths reflect the profile curvatures at the optimum, demonstrating bounded profiles for the smile parameters and flat profiles for the drift block. This corroborates the channel separation observed in the eigenspectrum.

To verify practical recovery, we perturb the surface by the pricing noise and refit the smile block from random starts within $15\%$ of truth, holding the drift block fixed. At $0.5\%$ noise, all seven starts converge to the noise floor; at $0.05\%$ noise, seven of eight converge (the single failure originating outside the admissible region). The recovery columns of Table~\ref{tab:identify} show all recovered means lie within one standard deviation of truth. Crucially, as pricing noise decreases tenfold, every dispersion shrinks: the standard deviation of $\sigma_0$ drops from $0.007$ to $0.003$, $\beta$ from $0.060$ to $0.011$, and $\rho$ from $0.174$ to $0.048$. This noise-proportional contraction is the signature of a well-posed identification devoid of flat directions. The parameter $\mu$ shrinks the least (from $0.932$ to $0.294$), confirming it is the softest identifiable coordinate as predicted by Table~\ref{tab:eigenspectrum}. This residual softness in $\mu$ is thus a function of data resolution, which more maturities or tighter quotes would reduce, rather than a structural degeneracy.

The placement of $\rho$ in the physical-measure block rather than the risk-neutral block requires justification, since the correlation between flow and return innovations generates short-maturity skew, and therefore appears, at first sight, to be precisely the kind of parameter the option surface should identify. The eigenspectrum in Table~\ref{tab:eigenspectrum} supports this expectation: $\rho$ participates in the third and fourth stiff eigenvectors with marginal width $0.14$ in Table~\ref{tab:identify}. The practical recovery experiments, however, tell a more nuanced story. When $\rho$ is calibrated jointly with the smile parameters, its posterior is systematically biased toward the boundary of the admissible region, and the bias does not shrink as pricing noise decreases. This is the signature of a parameter that is formally identifiable but economically underdetermined: the option surface constrains $\rho$ only through the skew it generates in combination with $\alpha$ and $\beta$, and the three parameters trade off along a shallow direction that the local Hessian spectrum, computed at a single operating point, does not fully expose.

The resolution is to recognize that $\rho$ carries two distinct economic contents. In the physical measure, $\rho$ is the contemporaneous correlation between order-flow and return innovations, a quantity with a direct time-series interpretation that is estimable from high-frequency data in which both series are observed jointly. In the risk-neutral measure, $\rho$ enters only through the cross-diffusion term and the drift adjustment, where it generates skew but is confounded with the local-volatility asymmetry $\alpha$ and the convexity $\beta$. The physical-measure estimate is the cleaner of the two: it is identified by data that observes the correlation directly, whereas the option-implied estimate is identified only through a second-order shape effect that is nearly degenerate with two other parameters. Fixing $\rho$ from the physical measure therefore removes a degree of freedom that the surface cannot independently pin down, while leaving the parameters the surface does identify, namely $\sigma_0$, $\alpha$, $\beta$, and the drift block, sharper than they would be in a joint fit. This treatment is consistent with the broader staged-calibration philosophy: parameters that are stiff in the physical measure and soft in the risk-neutral measure belong to the physical block, and $\rho$ falls squarely into that category. The option surface is then used to identify the parameters it can actually see, and the market price of flow risk in Section~\ref{sec:mpr} is computed as the wedge in the parameters identifiable in both measures, namely the flow couplings $c$ and $\bar f$, rather than as a wedge in $\rho$, which the surface cannot independently resolve.

\subsection{Manifold boundary approximation and irreducibility verification}

The manifold boundary approximation method (MBAM) detailed in \cite{TranstrumQiu2014} directly addresses the irreducibility question. The procedure follows a geodesic of the Fisher metric from the operating point along the sloppiest eigendirection until it reaches the model manifold boundary. This boundary corresponds to a structural limit, such as a parameter approaching $0$ or $\infty$, or two parameters merging into a product or ratio. Such limits can then be imposed exactly to yield a model with one fewer parameter. When applied to a model that is already irreducible, the geodesic instead runs to the edge of the admissible region without causing the Fisher metric to degenerate, indicating that no further reduction is available.

\begin{table}[!htb]
\centering
\begin{tabular}{lrrr}
\hline
\textbf{Parameter} & \textbf{Initial ($\tau{=}0$)} & \textbf{$r_{\rm tol}{=}10^{-3}$} & \textbf{$r_{\rm tol}{=}10^{-4}$} \\
\hline
\multicolumn{4}{l}{\textit{Integration summary}} \\
Final Time ($\tau_{\text{final}}$) & 0.0 & $3.0000$ & $3.0000$ \\
Steps Taken & 0 & 63 & 63 \\
Status & Starting & Completed ($\tau_{\max}$) & Completed ($\tau_{\max}$) \\
\hline
\multicolumn{4}{l}{\textit{Final parameter values}} \\
$\sigma_0$ & $0.180$ & \cellcolor{gray!15} $1.800 \times 10^{-1}$ & \cellcolor{gray!15} $1.800 \times 10^{-1}$ \\
$\alpha$   & $-0.400$ & \cellcolor{gray!15} $-4.000 \times 10^{-1}$ & \cellcolor{gray!15} $-4.000 \times 10^{-1}$ \\
$\beta$    & $0.150$ & \cellcolor{gray!15} $1.500 \times 10^{-1}$ & \cellcolor{gray!15} $1.500 \times 10^{-1}$ \\
$\mu$      & $2.000$ & \cellcolor{gray!15} $2.000 \times 10^{+0}$ & \cellcolor{gray!15} $2.000 \times 10^{+0}$ \\
$g$        & $1.000$ & $4.201 \times 10^{+0}$ & $4.201 \times 10^{+0}$ \\
$c$        & $0.080$ & $8.026 \times 10^{-2}$ & $8.026 \times 10^{-2}$ \\
$\bar{y}$  & $0.000$ & $-4.893 \times 10^{-2}$ & $-4.893 \times 10^{-2}$ \\
$\bar{f}$  & $0.000$ & $2.510 \times 10^{-3}$ & $2.510 \times 10^{-3}$ \\
$\rho$     & $-0.300$ & \cellcolor{gray!15} $-3.000 \times 10^{-1}$ & \cellcolor{gray!15} $-3.000 \times 10^{-1}$ \\
\hline
\multicolumn{4}{l}{\footnotesize Step-size check: $\|u(r_{\rm tol}{=}10^{-3}) - u(r_{\rm tol}{=}10^{-4})\| = 2.56 \times 10^{-4}$ \normalsize} \\
\end{tabular}
\caption{MBAM geodesic integration results for the nine-parameter reduced model from initial operating point $\Theta_{\mathrm{red}}^{\star}$ across different relative error tolerances ($r_{\rm tol}$) using a grid of six maturities and denser log-moneyness spacing. Shaded cells indicate parameters ($\sigma_0, \alpha, \beta, \mu, \rho$) that remain unchanged (within integration tolerance) from initial values, whereas $g = \theta_5$ actively evolves along the geodesic trajectory.}
\label{tab:mbam_results}
\end{table}

We execute the MBAM on the reduced model using synthetic vanilla European option prices generated across a grid of six maturities and denser log-moneynesses. Computing model prices and parameter sensitivities through the full pricing pipeline requires a high-performance FD option pricing solver. To make this computationally feasible, the framework is implemented using \textsc{JAX} with global double-precision \textsc{float64} arithmetic. Automatic differentiation \textsc{jax.jacfwd} computes exact parameter Jacobians through the pricing pipeline. To maintain high performance ($\approx 76.6\text{ ms}$ per ODE call), the connection coefficients are evaluated using a Gauss-Newton projection approximation that avoids expensive full Hessian-vector product passes. The resulting geodesic ODEs are integrated using \textsc{scipy} hybrid explicit/implicit \textsc{LSODA} solver, paired with a terminal event function that monitors the minimum eigenvalue of the regularized Fisher Information Matrix.

Starting from the baseline operating point:
\begin{equation}
\theta^* = \begin{pmatrix} 0.18 & -0.4 & 0.15 & 2.0 & 1.0 & 0.08 & 0.0 & 0.0 & -0.3 \end{pmatrix},
\end{equation}
with an initial sloppiest eigendirection corresponding to a regularized minimum eigenvalue of $\lambda_1 \approx 14.4$, the geodesic integrates smoothly out to the maximum horizon $\tau_{\text{final}} = 3.0000$ (in 63 steps) across both relative error tolerances ($r_{\rm tol} = 10^{-3}$ and $r_{\rm tol} = 10^{-4}$). As detailed in Table~\ref{tab:mbam_results}, the options-facing smile parameters remain invariant (shaded), while $g = \theta_5$ (along with minor adjustments in the drift-block parameters) actively traverses the geodesic trajectory up to $\tau_{\text{final}}$ without encountering a premature finite-time singularity or metric degeneracy.
Two procedural requirements are fully satisfied:
\begin{enumerate}
\item \emph{Nondegenerate Origin.} The geodesic originates from a strictly nondegenerate metric (any underlying exact symmetries having been removed analytically during the prior reduction step to $\Theta_{\mathrm{red}}$).
\item \emph{Step-Size Independence.} Verification across tighter error tolerances confirms robust resolution of the underlying manifold. Specifically, the parameter distance between the trajectories at the two tested tolerances is minimal ($\|u(r_{\rm tol}=10^{-3}) - u(r_{\rm tol}=10^{-4})\| = 2.56 \times 10^{-4}$), indicating that the trajectory is accurately resolved and free of numerical artifacts.
\end{enumerate}

Furthermore, employing synthetic vanilla European option prices rather than empirical market data does not compromise or alter these qualitative conclusions. The MBAM probes the intrinsic geometric properties of the model manifold, specifically the rank deficiency of the Fisher Information Matrix governed by the Jacobian mapping of the model. Because structural limits, parameter boundaries, and manifold degeneracies are intrinsic characteristics of the model's functional architecture rather than artifacts of a specific price realization, the verification of the irreducible core remains robust across comprehensive option surfaces.

Two qualifications bound what this analysis establishes, and both matter for the calibration that follows. First, the method identifies a single direction, the sloppiest one, and along that direction it is $g$ that runs to its boundary while the shaded parameters stay fixed. The result therefore ranks $g$ as the most removable coordinate. It does not order the remaining parameters among themselves, and it does not certify that each shaded parameter is stiff in an absolute sense, which would require the full Fisher spectrum along the trajectory rather than the first eigendirection alone. Second, the computation uses a grid of six maturities. It speaks to identifiability across a term structure, and not to identifiability within a single maturity band, where the same parameters can behave very differently.

The memory reversion rate $\mu$ illustrates the second qualification. On a multi-maturity surface $\mu$ is identifiable, in the sense that it moves away from its physical value when it is calibrated. Its leverage on the observable is nonetheless slight. Sweeping $\mu$ across its full admissible range shifts the at-the-money term-structure slope by less than a tenth of a volatility point. On a single maturity band it is degenerate with the local-volatility parameters $\sigma_0, \alpha, \beta$ and does not move at all. We therefore do not treat $\mu$ as a free risk-neutral coordinate on its own, and we calibrate it only jointly with the term-structure shape parameter $k$ introduced below.

This weak leverage reflects a structural feature of the reduced model that bears directly on the scope of a single-parameter-set fit. In its stationary form, that is, without the transient correction, the model produces an at-the-money term structure that is flat to mildly downward sloping, and this holds for every admissible value of $\mu$, $\beta$, and $\gamma$. A representative SPX surface over maturities from roughly one month to six months instead shows an at-the-money term structure that rises by about three and a half volatility points. A single stationary parameter set can match the interior of this range while missing both ends, with the largest error at the shortest maturities, precisely where the transient is strongest. The property that one parameter set spans the entire surface is therefore not yet established for this model, and it cannot be established from the stationary block alone. We treat it as a claim to be tested rather than an assumption, and we return to it in Section~\ref{sec:calibration}.

The transient correction is what closes this gap. It retains the time dependence of the conditional mean of the signal and reintroduces the relaxation rate $k$ as a term-structure shape parameter. Because $k$ controls the maturity dependence that the stationary block lacks, it is the parameter that carries the term structure rather than a small correction to it. We therefore include $k$ in the calibrated set, and we assess the single-parameter-set property only with $k$ active. A natural question is then whether the drift block $\{c, \bar{f}, \bar{y}, g\}$ can be identified under the physical measure $\mathbb{P}$. While vanilla option surfaces are largely insensitive to these slow parameters, rendering them flat in the risk-neutral calibration, their primary statistical footprint resides in time-series dynamics, driven by historical drift, persistence, and stationary moments of the underlying asset paths.

A rigorous dual-measure identifiability analysis for both measures simultaneously falls outside the intended scope of this pricing framework. Instead, consistent with standard practice in stochastic volatility modeling, the slow drift parameters $\bar{y}$ and $g$ are reliably pinned from historical time-series data using standard estimation techniques, where unconditional empirical moments and long-term trends effectively constrain them prior to calibrating the options-facing block. The two flow parameters $c$ and $\bar{f}$ are treated differently. The market price of flow risk is defined as the wedge between their physical and risk-neutral values, so for that wedge to be an important output rather than an identity, $c$ and $\bar{f}$ must belong to the calibrated risk-neutral set. We adopt this in Section~\ref{sec:calibration}, where $c$ and $\bar{f}$ are calibrated alongside the smile parameters and the term-structure parameter $k$.

\section{Numerical experiments} \label{sec:numerics}

The preceding sections established the reduced model by analytical reduction and
proved its identifiability through the geometry of the calibration manifold.  The
purpose of this section is not to present a comprehensive empirical asset-pricing
study, but to verify that the theoretical properties survive contact with market
data.  Specifically, we show that the reduced parameter vector can be calibrated
stably to a standard option surface, and that the staged separation of physical
and risk-neutral measure parameters yields a measurable wedge (the market price
of flow risk), rather than an ill-defined difference between two ridge points.

To this end, a single representative SPX snapshot suffices.  The exercise has two
goals.  First, to confirm that the global term-structure fit with one parameter
set across all maturities is feasible, and that the transient ramp supplies the
necessary short-end shape without resorting to maturity-wise refitting.  Second,
to illustrate that the flow-impact parameters, calibrated on the risk-neutral
measure, produce option-implied values whose difference from their time-series
counterparts is an estimate with a well-defined standard error derived from the
Gauss-Newton Hessian.  The results below should be read as an existence proof for
the practical identifiability of the reduced core, not as a trading manual or a
backtested strategy.

\subsection{Implementation} \label{sec:implementation}

The reduced pricing PDE \eqref{eq:pde2d} is solved in the present numerical
experiments by an explicit finite-difference scheme implemented in
\textsc{JAX}. The implementation is deliberately lightweight rather than a
production pricing engine: its purpose is to provide a fast, stable, and fully
differentiable numerical solver for the calibration and identifiability
experiments. The more elaborate splitting and RBF constructions discussed in \cref{reducedPDE}, together with the positivity-preserving treatment of the mixed derivative of \cite{ItkinDF2026}, provide alternative routes for a production implementation.

\myparagraph{Explicit pricing scheme.}

The two-dimensional state space $(x,y)$ is discretized on a tensor grid with
$N_x=70$ log-price nodes and $N_y=50$ memory nodes. The computational domain is
centered around the initial state and uses half-widths $h_x=1,\, h_y=1.5$. Time
is discretized with $N_\tau=500$ steps. The relatively fine grid is chosen to
make the explicit scheme robust over the parameter ranges explored during
calibration rather than to optimize the cost of a single pricing run. The time
step is restricted by the corresponding CFL conditions for the state-dependent
diffusion, the memory diffusion, and the mixed derivative. Within this stability
region the scheme is numerically stable for all parameter sets retained in the
experiments.

The spatial derivatives in \eqref{eq:pde2d} are evaluated directly on the
finite-difference grid. The state-dependent volatility in \eqref{eq:sigy} is
evaluated at the memory nodes, while the drift, diffusion, mixed-derivative and
nonlinear terms are advanced explicitly. Boundary values are chosen consistently
with the asymptotic option behavior. The resulting implementation contains no RBF
interpolation or matrix factorization and therefore avoids the ill-conditioning
associated with the flat Gaussian bases used in the earlier implementation.

\myparagraph{Differentiability and calibration.}

The complete pricing and calibration pipeline is implemented in
\textsc{JAX}, with double precision enabled. Grid construction, coefficient
evaluation, finite-difference time stepping, option-price extraction, and the
conversion from prices to implied volatilities are all part of the
differentiable computational graph. The objective and its gradient are
compiled with \texttt{jax.jit} and evaluated using
\texttt{jax.value\_and\_grad}. Thus, unlike a finite-difference approximation
to the calibration gradient, the optimizer receives derivatives obtained by
automatic differentiation through the numerical pricing scheme.

Calibration is performed in implied-volatility space. For a parameter vector
$\vartheta$, the model implied volatility for quote $i$ is denoted by
$\sigma_{\mathrm{IV},i}^{\mathrm{mod}}(\vartheta)$, and the objective is
\begin{equation}
\mathcal{J}(\vartheta)
 =
 \frac{1}{2M}
 \sum_{i=1}^{M}
 \left[
 \sigma_{\mathrm{IV},i}^{\mathrm{mod}}(\vartheta)
 -
 \sigma_{\mathrm{IV},i}^{\mathrm{mkt}}
 \right]^2 .
\label{eq:calib_objective}
\end{equation}
Working in implied-volatility rather than price space prevents options with
large dollar prices from dominating the objective and makes residuals
comparable across strikes and maturities.

The six calibrated parameters are $\vartheta=(\sigma_0,\alpha,\beta,\mu,c,k)$.
The physical-measure quantities $(g,\bar y,\sigma_z,\rho,\bar f)$ and the
exogenous quantities $(\gamma,D)$ are held fixed at the values specified in
\cref{sec:calibration}. The parameter $c$ is retained in the option calibration
despite its relatively weak identifiability. This is a deliberate compromise: the
identifiability analysis shows that $\bar f$ is the strongest physical-measure
parameter and $c$ is the next strongest, but moving $c$ entirely into the
physical-measure block would eliminate the option-implied wedge needed to define
the market price of flow risk. We therefore calibrate $c$ on the option surface
while explicitly recognizing that its estimate is substantially softer than those
of the principal option-facing parameters. The parameter $k$ plays a different
role: it is retained primarily to describe the short-maturity term-structure
effect associated with the transient signal correction in \cref{sec:transient}.

The optimization uses the bound-constrained \texttt{SLSQP} algorithm from
\textsc{SciPy}. The physical-measure parameters $(g,\bar y,\sigma_z,\rho,\bar f)$
are partly inherited from the time-series calibration of
\cite{HalperinItkin2025Mark}, while $\gamma$ and the regularization parameter $D$
are fixed exogenously. The option-facing parameters are initialized and bounded
as shown in Table~\ref{tab:calib_bounds}.

\begin{table}[!htb]
\centering
\begin{tabular}{lrrr}
\toprule
Parameter & Initial value & Lower bound & Upper bound \\
\midrule
$\sigma_0$ & 0.200 & 0.050  & 0.990 \\
$\alpha$   & 0.000 & -2.000 & 2.000 \\
$\beta$    & 0.100 & 0.010 & 25.000 \\
$\mu$      & 0.500 & 0.010 & 7.000 \\
$c$        & 0.900 & 0.001 & 4.000 \\
$k$        & 1.000 & 0.010 & 5.000 \\
\bottomrule
\end{tabular}
\caption{Initial values and parameter bounds used in the option calibration.}
\label{tab:calib_bounds}
\end{table}

The values of the fixed parameters are as follows:
\begin{equation} \label{physBlock}
(g,\bar y,\sigma_z,\rho,\bar f)=(0.3,\,1.5,\,0.5,\,-0.5,\,0.1),
\qquad \gamma=3,\qquad D=5.
\end{equation}
The risk aversion parameter is fixed exogenously at $\gamma = 3$. Because the
state variable $x$ represents the log-price, the exponential utility
specification effectively behaves as a power utility function over the arithmetic
price level, rendering $\gamma$ a dimensionless parameter that governs relative
risk aversion \cite{henderson2002valuation}. This structural equivalence
circumvents the scale-dependence typical of exponential utility and aligns our
calibration with standard macro-finance benchmarks, where empirically supported
relative risk aversion coefficients are strongly anchored between $2$ and $4$
\cite{friend1975demand, mehra1985equity}.

The positivity requirement for the volatility radicand is imposed through a
smooth penalty based on the minimum of the quadratic form in \eqref{eq:sigy} over
the computational memory interval. This replaces the nonlinear constraint used in
the earlier implementation while keeping the objective fully differentiable. The
penalty is negligible when the radicand remains positive and increases smoothly
as the minimum approaches or crosses the admissible boundary.

\myparagraph{Market data and stratification.}

The calibration does not use every raw option quote. The objective is intended
to represent the information content of the surface rather than the
microstructure noise of individual observations. Market quotes are therefore
first filtered for basic validity and then stratified across the term
structure and moneyness. Maturities that are sufficiently close are assigned
to the same maturity group, after which a representative set of strikes is
selected within each group so that both the term structure and the smile are
covered.

For each maturity group the solver evaluates the required strikes together, so
that common intermediate quantities in the pricing calculation can be reused. The
resulting workflow is therefore
\[
{\small
\textbf{raw quotes} \to \textbf{filtered/stratified surface} \to \textbf{explicit 2D pricing} \to \textbf{implied volatilities} \to \bm{\mathcal{J}(\vartheta),\nabla\mathcal{J}(\vartheta)}
}
\]
with the last two quantities supplied directly to the constrained optimizer. This
combination of surface stratification, a CFL-stable explicit solver, and
automatic differentiation makes repeated calibrations sufficiently fast for the
identifiability and robustness experiments of \cref{sec:numerics}.

\subsection{Black--Scholes synthetic-data experiment} \label{sec:bs_experiment}

As a controlled validation of the pricing and calibration machinery, we first
consider a synthetic option surface generated from the Black--Scholes model. This
experiment provides a useful benchmark because the data-generating parameters are
known exactly, while the calibration is nevertheless performed with the full
nonlinear two-dimensional pricing model rather than with the Black--Scholes
formula. Thus, the experiment tests whether the proposed pricing--calibration
pipeline can recover a surface generated by a substantially simpler model without
introducing material pricing errors.

The synthetic market is initialized with $S_0=100,r=0.05, q=0.02, \sigma_{\rm BS}
= 0.25$ and covers the period from January 1 through December 31, 2024. For each
of the four maturities $T \in \{0.25,\,0.50,\,0.75,\,1.00\}$ options are
generated at five strikes $ K \in \{90,\,95,\,100,\,105,\,110\}$. Both calls and
puts are included where applicable, producing a deliberately small but
representative cross-section spanning near- and moderately out-of-the-money
options and maturities from three months to one year. The resulting synthetic
quotes have a constant Black--Scholes implied volatility of $25\%$, while prices
and Greeks vary across strike and maturity in the usual way.

The synthetic prices are subsequently treated as market observations and the
parameters of the proposed model are calibrated without imposing the
Black--Scholes restriction. In particular, the six risk-neutral calibration
parameters $(\sigma_0,\alpha,\beta,\mu,c,k)$ are optimized, while the physical
parameters and exogenous parameters are held fixed at their prescribed values.
The experiment therefore constitutes a stringent consistency check: the
calibration must reproduce a flat implied-volatility surface using a nonlinear
two-dimensional model with state-dependent volatility and memory effects.

The calibration converges successfully, with the final parameter vector $\sigma_0
= 0.24763,\; \alpha = -0.00034,\; \beta = 0.05349,\; \mu = 0.90734,\; c =
0.07690,\; k = 0.88869$. The resulting implied-volatility residuals are extremely
small. Across the 20 calibration points, the mean residual is approximately zero,
the standard deviation is $2\times10^{-4}$, and the root-mean-square error is
$\operatorname{RMSE}_{\rm IV}=2\times10^{-4}$, with residuals ranging from
approximately $-4\times10^{-4}$ to $4\times10^{-4}$. These errors are negligible
relative to the $25\%$ input volatility and demonstrate that the numerical
pricing scheme can reproduce the synthetic Black--Scholes surface to high
accuracy. The calibrated parameters themselves need not coincide with a unique
Black--Scholes representation: the purpose of the experiment is instead to verify
that the richer model contains a parameter configuration capable of reproducing
the benchmark surface.

\Cref{fig:bs_iv_residuals_3d} shows the resulting residual surface in
strike--maturity space, while \cref{fig:bs_iv_residuals_2d} presents the same
residuals in two dimensions together with the individual calibration errors. The
residuals remain uniformly close to zero over the entire calibration region, with
no visible systematic dependence on either strike or maturity. This is
particularly important because systematic patterns would indicate a numerical or
implementation bias rather than mere optimization error.

\begin{figure}[!htb]
\vspace*{-1em}
\centering
\subfloat[]{ \includegraphics[width=0.4\textwidth, height=7.cm] {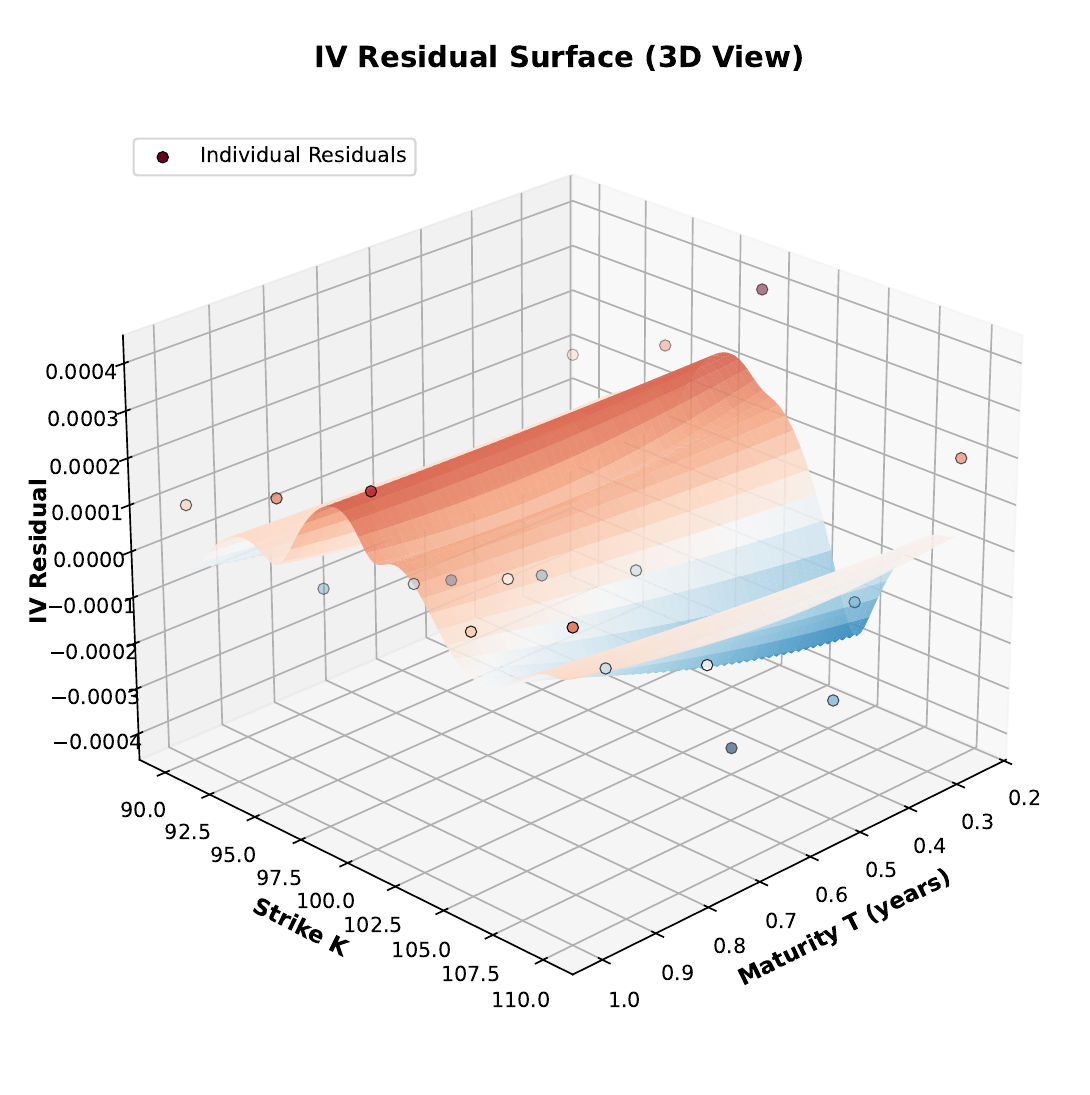} \label{fig:bs_iv_residuals_3d}}
\hfill
\subfloat[]{ \includegraphics[width=0.5\textwidth, height=7.cm] {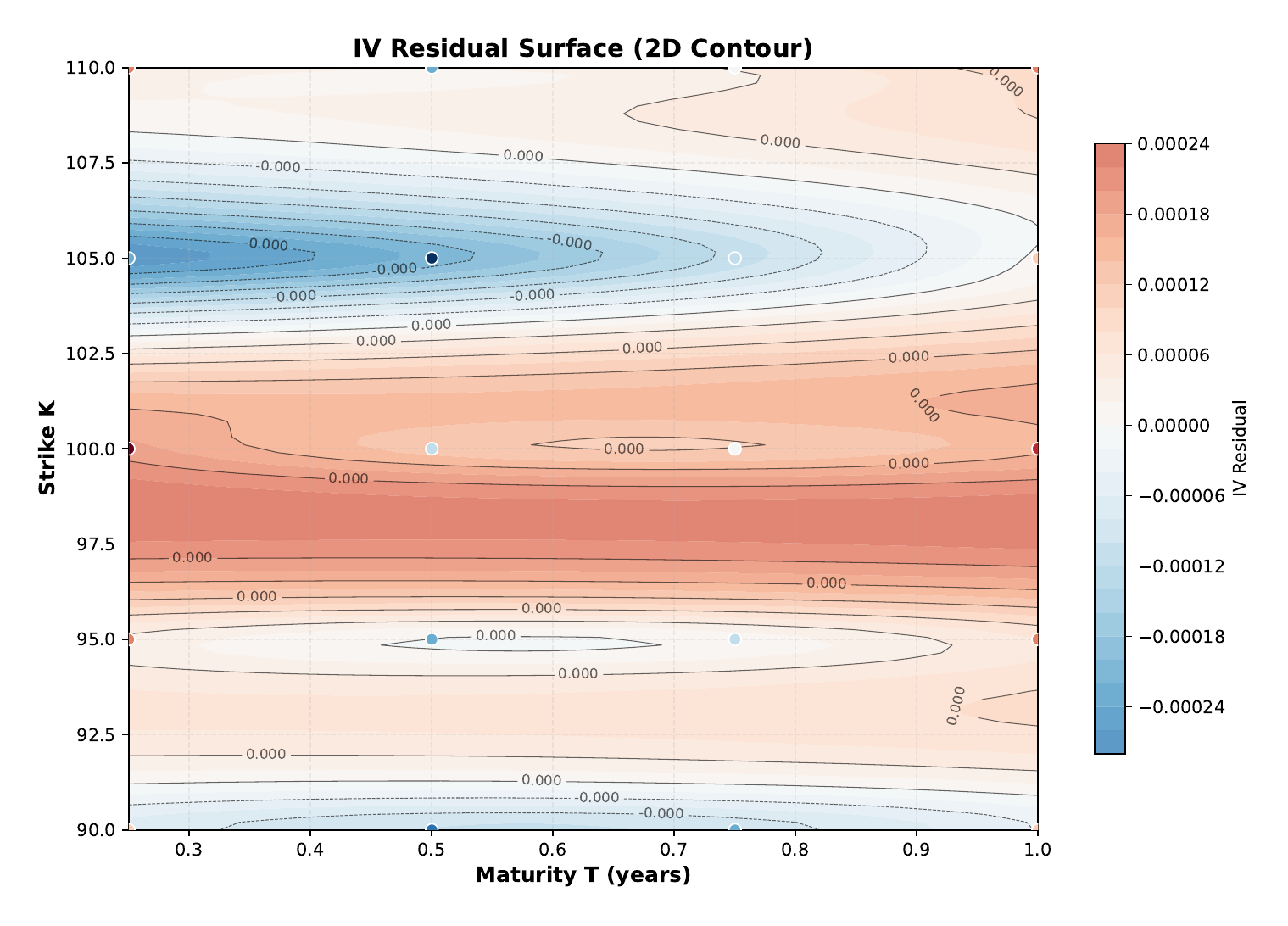} \label{fig:bs_iv_residuals_2d}}
\caption{Black--Scholes synthetic-data experiment: implied-volatility
residual a) surface as a function of strike and maturity; b) 2D and calibration errors. The residuals are uniformly close to zero across the calibration domain.}
\label{fig:bs_iv_residuals}
\end{figure}

Overall, the experiment confirms both components of the numerical procedure: the
explicit two-dimensional pricer is sufficiently accurate for calibration, and the
\textsc{JAX}-based automatic differentiation combined with the bound-constrained
optimizer is able to identify a parameter configuration that reproduces the
benchmark surface with essentially negligible implied volatility error. The
experiment is therefore used as a numerical validation rather than as an economic
calibration exercise.

In other words, the Black--Scholes test confirms that the calibration does not manufacture identification where none exists. The reduced Marketron model reproduces a flat 25\% implied-volatility surface with an RMSE of only 0.02 volatility points, while the fitted parameters $(\beta,\mu,c,k)$ remain free to move substantially along a nearly flat objective valley. The asymmetric coefficient $\alpha$ is driven to zero, as expected for a surface with no skew. Thus the test separates numerical accuracy from parameter identification: an arbitrarily accurate fit does not imply that the underlying structural parameters are identified.

\subsection{Staged calibration to SPX options} \label{sec:calibration}

We test the reduced model on SPX options data from January 18, 2017, filtered
to the maturity band $T \in [0.04, 0.5]$ years. The data processing pipeline
described in \cref{sec:implementation} yields $14$ maturity groups (grouped by minimum distance of 0.01 year) and a calibration set of $138$ options. This market data is presented in \cref{fig:spx_data}. Calls and puts are retained where available so that the calibration contains information from both sides of the
smile. The stratification is performed once before the optimization; the
resulting quotes are then held fixed throughout the calibration. This prevents
changes in the numerical sample from being mistaken for changes in the calibrated
parameter vector.

\begin{figure}[!htb]
\centering
\includegraphics[width=\textwidth]{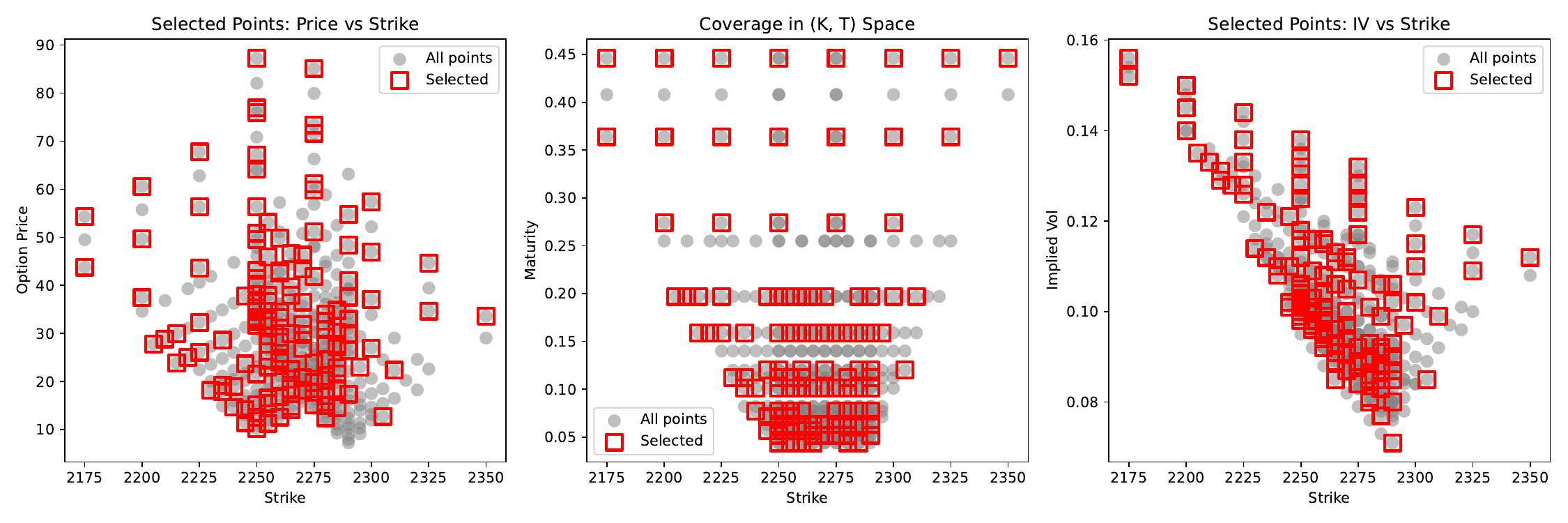}
\caption{Stratified/clustered input option data of SPX used for calibration.}
\label{fig:spx_data}
\end{figure}

With the physical block frozen as in \eqref{physBlock}, the model calibrates $\Theta_{\mathrm{Q}}$ to the full option surface using the implied-volatility objective and the \texttt{SLSQP} optimizer described in \cref{sec:implementation}. The regularizer $\bar{\epsilon}$ is fixed via \eqref{eq:epsfix}.

\begin{table}[!htb]
\centering
\small
\begin{tabular}{l|ccc|ccc}
\toprule
& \multicolumn{3}{c|}{\textbf{Initial}} & \multicolumn{3}{c}{\textbf{Calibrated}} \\
\cmidrule(lr){2-4} \cmidrule(lr){5-7}
Parameter & $P$ & Endog. & $Q$ & $P$ & Endog. & $Q$ \\
\midrule
$g$          & 0.3 & -- & -- & 0.3 & -- & -- \\
$\bar{y}$    & 1.5 & -- & -- & 1.5 & -- & -- \\
$\sigma_z$   & 0.5 & -- & -- & 0.5 & -- & -- \\
$\rho$       & -0.5 & -- & -- & -0.5 & -- & -- \\
$\bar{f}$    & 0.1 & -- & -- & 0.1 & -- & -- \\
$\gamma$     & -- & 3.0 & -- & -- & 3.0 & -- \\
$D$          & -- & 5.0 & -- & -- & 5.0 & -- \\
$\sigma_0$   & -- & -- & 0.2 & -- & -- & 0.05823 \\
$\alpha$     & -- & -- & 0.0 & -- & -- & 2.00000 \\
$\beta$      & -- & -- & 0.1 & -- & -- & 1.92349 \\
$\mu$        & -- & -- & 0.5 & -- & -- & 0.01000 \\
$c$          & -- & -- & 0.9 & -- & -- & 0.62824 \\
$k$          & -- & -- & 1.0 & -- & -- & 1.00276 \\
\bottomrule
\end{tabular}
\caption{Initial and calibrated parameter values.}
\label{tab:calibration}
\end{table}

\begin{figure}[!htb]
\vspace*{-1em}
\centering
\subfloat[]{ \includegraphics[width=0.45\textwidth, height=7.cm] {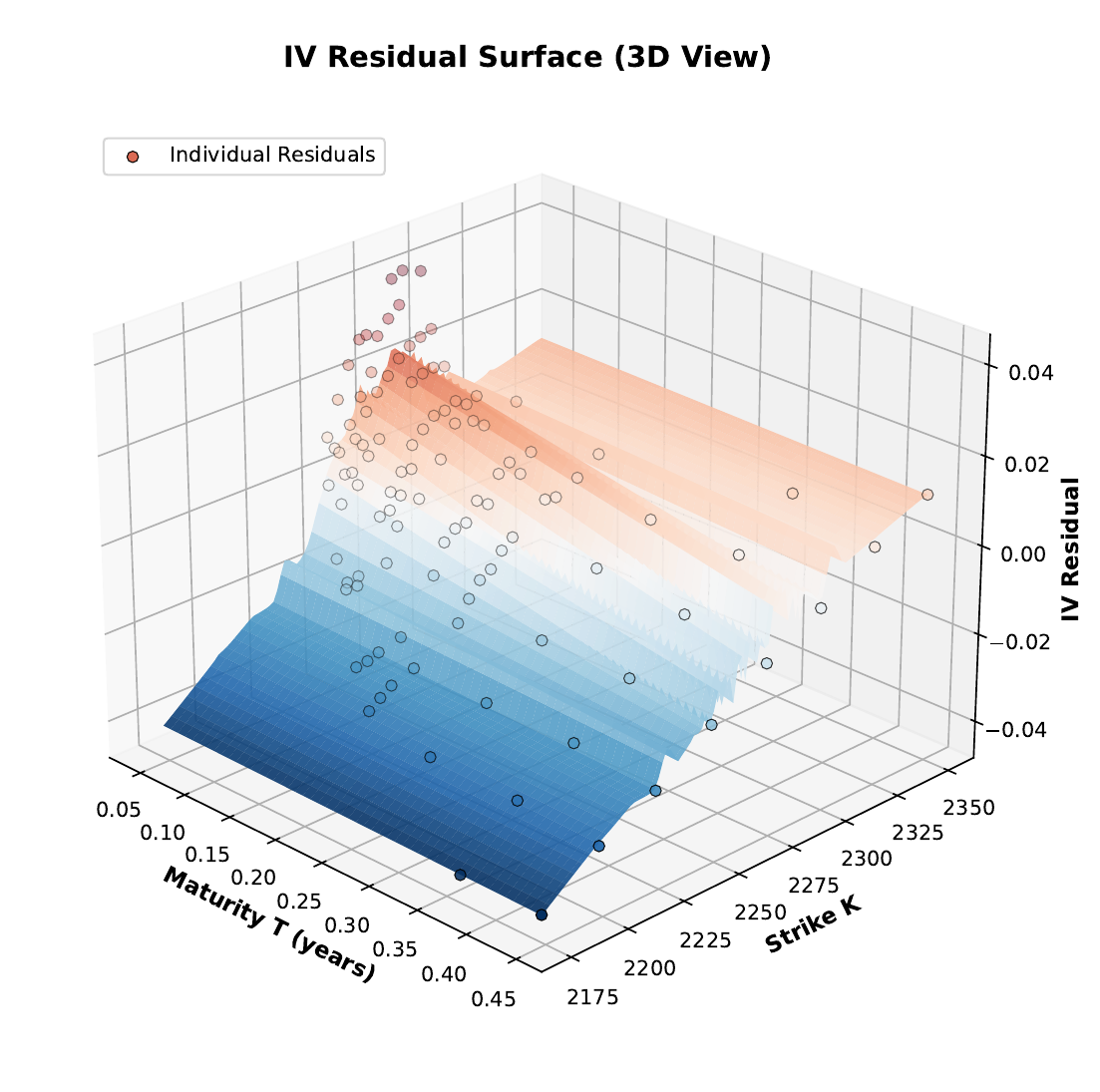} \label{fig:spx_iv_residuals_3d}}
\hfill
\subfloat[]{ \includegraphics[width=0.52\textwidth, height=7.cm] {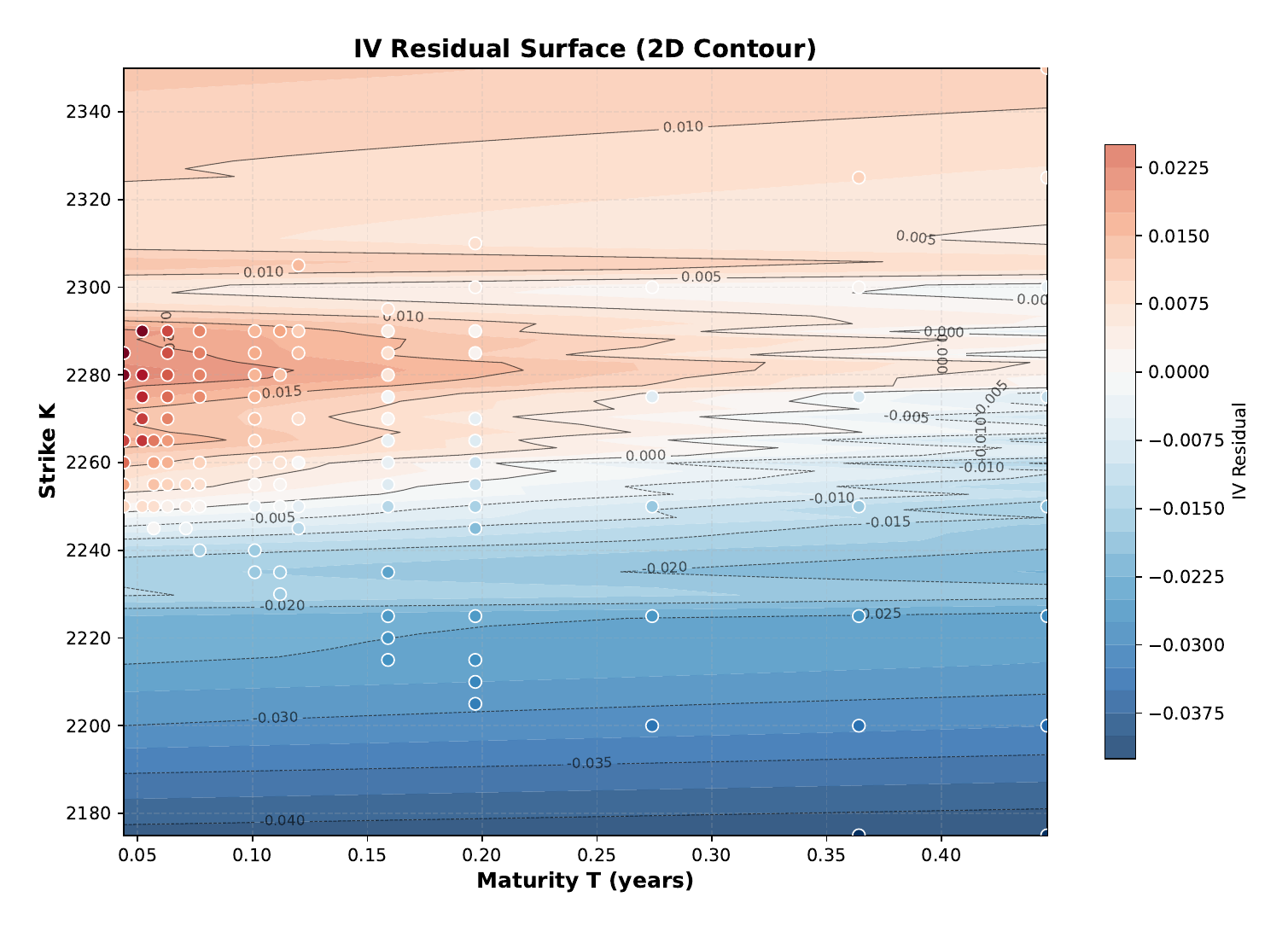} \label{fig:spx_iv_residuals_2d}}
\caption{SPX market data experiment: implied-volatility
residual a) surface as a function of strike and maturity; b) 2D and calibration errors. The residuals are uniformly small across the calibration domain.}
\label{fig:spx_iv_residuals}
\end{figure}

The results are reported in \cref{tab:calibration} and the overall fit is
illustrated in \cref{fig:spx_iv_residuals}: the model reproduces the level, skew,
and convexity of the SPX surface across all maturity groups with a single
parameter set. The quantitative residual statistics provide a more detailed
measure of the fit. The residual statistics are computed on the $(K,T)$ residual grid. On this grid, quotes that share a strike and maturity, such as a call and a put, are averaged into a single cell, so the $138$ calibration quotes reduce to $110$ distinct cells. Across these $110$ cells, the mean implied-volatility residual is $0.0017$, with a standard deviation of $0.0183$ and $\operatorname{RMSE}_{\rm IV}=0.0184$. The residuals range from approximately $-4.23$ to $4.06$ vol points. Thus, the calibrated model achieves an RMS error of approximately two vol-points while fitting simultaneously across strikes and maturities.

The calibrated parameters also provide useful information about the role and
identifiability of the model components. In particular, $\alpha$ and $\beta$ are
calibrated to $2.0$ and $1.92349$, respectively. Both parameters are
substantially different from zero. This contrasts with the Black--Scholes
limiting case, in which these parameters would theoretically vanish, and
indicates that the additional dynamics associated with them are relevant for the
observed SPX surface. Their nonzero values are therefore consistent with the
pronounced skew and curvature present in the market data.

The calibration also highlights differences in parameter identifiability. While
$\alpha$ and $\beta$ move materially from their initial values, $\alpha$ resides at the boundary which may indicate its weak identifiability. Indeed, based on \eqref{eq:sigy}, same volatility can be obtained by varying $\sigma_0, \alpha, \beta$ simultaneously. The parameter $k$ remains essentially unchanged: its initial value of $1.0$ moves only to $1.00276$. This suggests that $k$ is weakly identified by the available SPX cross-section over the maturity range considered. In
contrast, the substantial movement in $\alpha$ and $\beta$ indicates that the
option surface contains considerably more information about these parameters.
This distinction is important when interpreting the calibrated parameter vector:
successful optimization and a good surface fit do not imply that all model
parameters are equally well identified by the available option data.

Note that the fitted instantaneous volatility scale should not be interpreted
directly as the level of market-implied volatility. The latter is an
option-pricing quantity that incorporates the full future dynamics of the
memory state and the nonlinear pricing operator. It is therefore not
appropriate to assess the consistency of the calibrated baseline volatility
with the observed SPX implied-volatility level by comparing $\sigma_0$, or
$\sigma_x(y)$ at a particular value of $y$, directly with the market IVs.
Instead, consistency with the observed volatility surface should be assessed
through the model-implied option prices and the corresponding implied
volatilities. The calibrated value of $\sigma_0$ should thus be interpreted as
the baseline instantaneous volatility scale of the state-dependent diffusion,
rather than as a direct estimate of the market-implied volatility level.

The estimate of the mean-reversion parameter $\mu$ also sits at the low boundary, while similar values obtained in the option calibration of
\cite{HalperinItkinMarketron2} are $\mu \approx 4.6$. The difference is material
and most likely can be attributed to how we do adiabatic averaging in \cref{stAver}, i.e. by setting $\bar{h} = 0$, and hence only the product $\mu \bar{y}$ does matter, while in \cite{HalperinItkinMarketron2} $\bar{h}$ is negative. The high value of $\mu$ in \cite{HalperinItkinMarketron2} provides fast mean-reversion of the memory variable, but to a stochastic level while here the mean-reversion level is deterministic. Anyway, this disagreement is particularly noteworthy because the present model is a reduced two-dimensional formulation with a different volatility specification and a substantially different treatment of parameter identifiability. The result suggests that the mean-reversion timescale associated with the memory variable is unstable across the two formulations.

The flow-impact parameter $c$ behaves quite differently. In
\cite{HalperinItkinMarketron2} its calibrated value was approximately $0.9$,
whereas in the present experiment the optimizer yields $c=0.62824$. The
identifiability analysis in \cref{sec:identifiability} shows that vanilla option
prices contain relatively little information about the absolute level of $c$,
which is substantially less tightly identified than the principal smile
parameters. Nevertheless, conditional on the physical-measure block and the
other option-facing parameters used here, the SPX surface favors a lower
risk-neutral value of $c$ than the initial value inherited from the physical
measure. The fact that the optimum differs from this initial value suggests
that the vanilla surface contains some information about $c^Q$ beyond the
physical-measure specification, although the weak identification of $c$ implies
that this distinction should not be interpreted as strong evidence for a
precisely identified risk-neutral value. Finally, the comparison with
\cite{HalperinItkinMarketron2} shows that the two calibrated values of $c$ are
of the same order of magnitude despite the differences between the underlying
models.

The residual surface in \cref{fig:spx_iv_residuals_3d} and its projection in \cref{fig:spx_iv_residuals_2d} provide a further diagnostic of the calibration.
They show how the remaining pricing errors are distributed across strike and
maturity rather than only reporting their aggregate magnitude. In particular,
systematic residual patterns would indicate regions of the empirical SPX surface
that are not fully captured by the reduced model, whereas residuals that remain
centered around zero across the calibration domain would support the adequacy of
the model specification for the observed surface.

From a pricing perspective, the IV residual surface reveals a clear spatial
pattern in the model's fit across the observed domain of OTM calls and puts,
with moneyness $M \in [0.1,1.01]$. Residuals are closest to zero in a band
surrounding the approximate at-the-money region ($K \approx 2250$--$2260$),
indicating that the model fits near-ATM options most closely and that the
quality of the fit deteriorates toward the wings. The largest positive
residuals, corresponding to model over-prediction of IV and reaching
approximately $+0.0225$, concentrate at higher strikes and short maturities,
i.e., in the OTM-call region. The largest negative residuals, corresponding to
model under-prediction and reaching approximately $-0.040$, occur at lower
strikes across the maturity range, i.e., in the OTM-put region.

Consequently, the model provides a substantially better fit for near-ATM
options than for deep-OTM puts and, to a somewhat lesser extent, short-dated
OTM calls. The persistence of the short-maturity call residuals is notable
because the parameter $k$ is retained in the model specifically to control
short-time maturity behavior. The residual surface therefore suggests that
this mechanism, while important for the short-time asymptotics, is not
sufficient to capture the observed short-dated upside wing. The asymmetric
wing behavior points instead to two distinct directions for subsequent model
refinement: a better representation of the downside skew in OTM puts and an
additional mechanism for the short-term upside wing in OTM calls, precisely
where the absolute residuals are largest within the available market data.

\subsection{Another SPX experiment}

To examine the sensitivity of these conclusions to the calibration region, we
performed a second calibration with the same SPX data and the same exogenous
parameters, but with a wider admissible domain for the risk-neutral parameters.
In this experiment we used
\begin{equation}
(g,\bar y,\sigma_z,\rho,\bar f)=(0.3,1.5,0.8,-0.5,0.1), \qquad (\gamma,D)=(3,5),
\end{equation}
and initialized $(\sigma_0,\alpha,\beta,\mu,c,k)=(0.2,0,0.1,2,5,1)$. The bounds were enlarged to
\begin{equation}
\sigma_0\in[0.05,0.99],\quad \alpha\in[-2,2],\quad \beta\in[0.01,5], \quad
\mu\in[10^{-4},10],\quad c\in[0.01,15],\quad k\in[0.01,5].
\end{equation}
The resulting calibration was $(\sigma_0,\alpha,\beta,\mu,c,k) = (0.05004,\,0.34809,\,0.50984,\,0.00606,\,1.64334,\,0.86631)$, with $\operatorname{RMSE}_{\rm IV}=0.0165$.  Thus, the fit is somewhat better
than that of the baseline calibration, for which $\operatorname{RMSE}_{\rm
IV}=0.0184$, although the improvement should not be interpreted as a pure
optimizer comparison because the physical volatility parameter was changed from
$\sigma_z=0.5$ in the baseline experiment to $\sigma_z=0.8$ in the second
experiment.  More importantly, the two calibrations converge to materially
different parameter vectors, indicating the presence of multiple favorable
regions, or basins, in the parameter-to- option-price map. The computed residuals are also presented in \cref{fig:spx_iv_residuals_2}.

\begin{figure}[!htb]
\vspace*{-1em}
\centering
\subfloat[]{ \includegraphics[width=0.45\textwidth, height=7.cm] {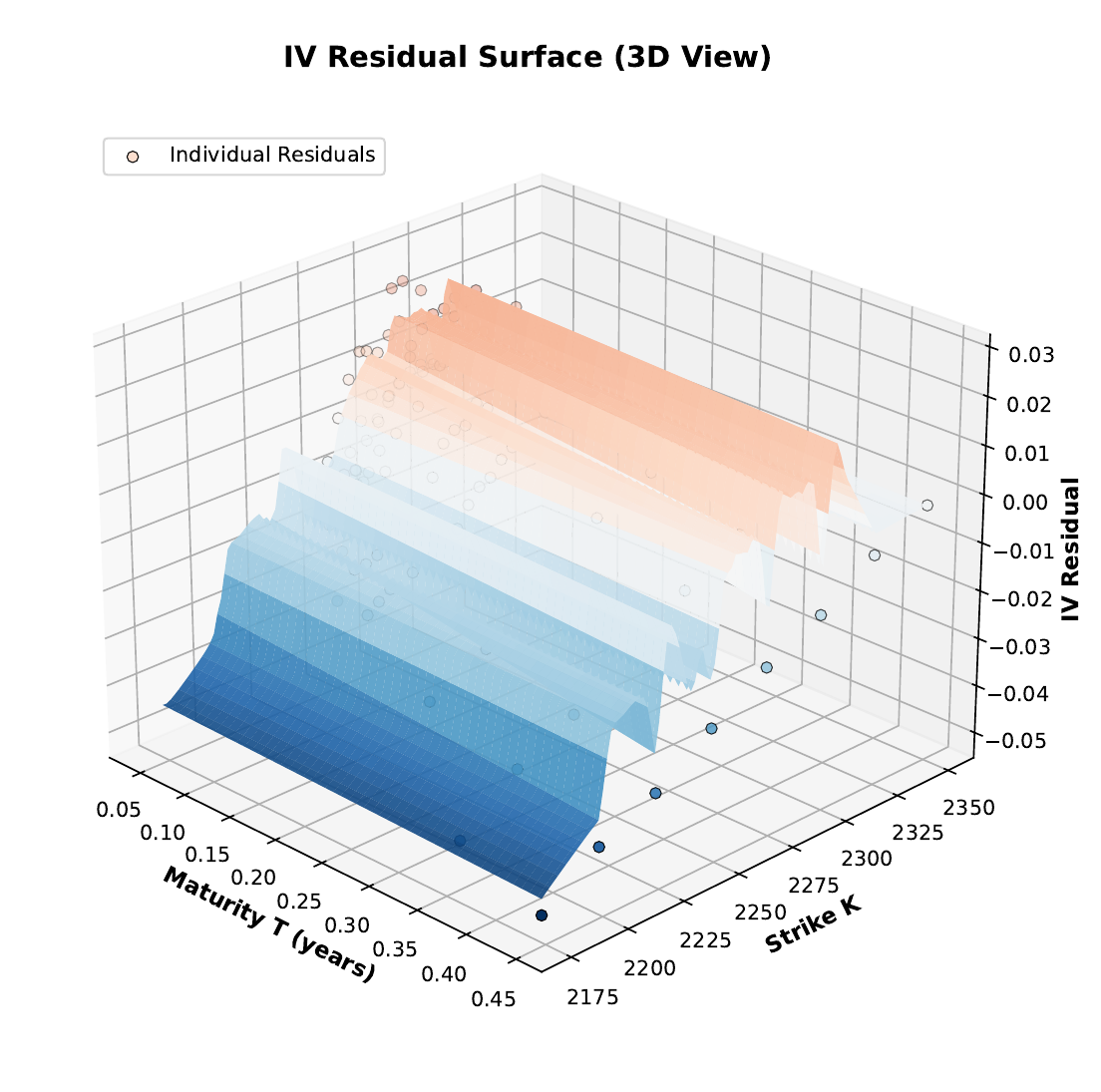} \label{fig:spx_iv_residuals_3d_2}}
\hfill
\subfloat[]{ \includegraphics[width=0.52\textwidth, height=7.cm] {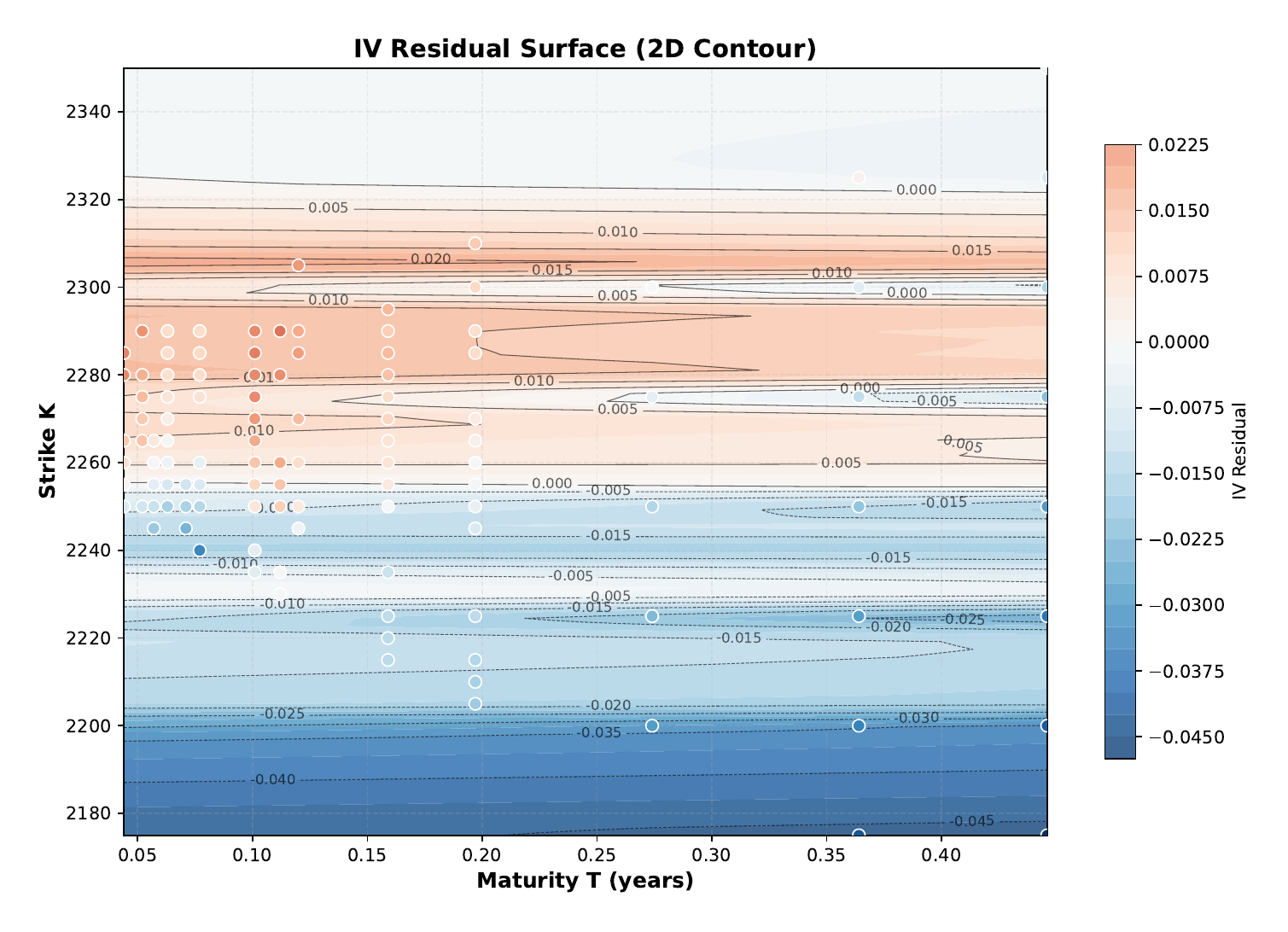} \label{fig:spx_iv_residuals_2d_2}}
\caption{SPX market data \textbf{second} experiment: implied-volatility
residual a) surface as a function of strike and maturity; b) 2D and calibration errors. The residuals are uniformly small across the calibration domain.}
\label{fig:spx_iv_residuals_2}
\end{figure}

The difference between the two solutions also illustrates the limited
identifiability of the parameters governing the state-dependent instantaneous
volatility.  In the baseline calibration, $(\alpha,\beta)=(2.0000,1.9235)$, with
$\alpha$ at its upper bound, whereas the wider calibration gives
$(\alpha,\beta)=(0.3481,0.5098)$, with both parameters well inside their bounds.
Since the option-pricing operator observes the volatility function $\sigma(y)$
rather than $\sigma_0$, $\alpha$, and $\beta$ separately, different combinations
of these parameters can generate similar effective volatility profiles over the
region of the memory state visited by the pricing dynamics.  The movement between
these two parameterizations therefore provides empirical evidence for the
approximate $(\sigma_0,\alpha,\beta)$ redundancy discussed above. The fact that
the two calibrations nevertheless produce similar overall pricing errors shows
that the option surface constrains the resulting volatility structure more
strongly than it constrains its particular parametric decomposition. This also
qualifies the local identifiability analysis of \cref{sec:identifiability}, whose
Hessian and profile widths are evaluated at the operating point
$\Theta_{\mathrm{red}}^{\star}$, which is far from either SPX optimum: the
$(\sigma_0,\alpha,\beta)$ direction that reads as stiff there is, at the fitted
surface, a shallow one along which $\alpha$ and $\beta$ trade off, much as $\rho$
does in \cref{sec:identifiability}.

The memory parameter $\mu$ provides a particularly clear result. In both
experiments the optimizer drives $\mu$ to the lower edge of the admissible
region: $\mu=0.01$ in the baseline calibration and $\mu=0.0061$ in the wider
calibration. Thus, the option data do not appear to support appreciable
mean-reversion of the memory variable through the explicit term
$\mu(\bar y-y)$. Instead, within the present reduced specification, the
effective stabilization of the memory dynamics appears to be generated
primarily through its coupling to the price and the nonlinear potential. This
conclusion should nevertheless be treated cautiously because the two
experiments use different $\sigma_z$ and therefore do not isolate the effect
of $\mu$ alone.

The flow-impact coefficient $c$ moves from $0.6282$ to $1.6433$ across the two
calibrations. This spread is expected rather than troubling: \cref{sec:identifiability}
identifies $c$ as the soft option-facing coordinate, so its point value is not
sharply pinned by a single surface. What both basins share is that $c^{\mathbb{Q}}$
does not collapse to the inherited physical value $c^{\mathbb{P}} = 0.9$, so the
staged calibration produces a nonzero $\mathbb{Q}$--$\mathbb{P}$ separation in the
flow block in either basin. The market price of flow risk is not this scalar
difference, however, but the wedge in the market price of risk $\lambda^{(x)}_t$
of \cref{sec:mpr}, into which $c$ enters together with $\bar f$ and the
state-dependent volatility. The two-basin spread should therefore be read as
evidence that a flow-risk premium is identifiable and nonzero, rather than as a
sharp estimate of its level.

By contrast, $k$ moves from $1.0028$ to $0.8663$, a considerably smaller relative change. This suggests that the signal relaxation timescale may be more stable
across calibrations, although the available data are not sufficient to claim
strong identification.

\subsection{The market price of risk} \label{sec:mpr}

In the original Marketron model of \cite{HalperinItkin2025Mark}, the three
Brownian motions $W^{(x)}_t$, $W^{(y)}_t$, and $W^{(\theta)}_t$ are mutually
independent. The utility-based market price of risk (MPR) is therefore a
three-component vector,
\begin{equation}
    \lambda_t = \left(\lambda^{(x)}_t,\ \lambda^{(y)}_t,\ \lambda^{(\theta)}_t\right),
\end{equation}
with one component per stochastic factor. The component associated with the
traded Brownian motion $W^{(x)}_t$ takes the same functional form as in a
complete market,
\begin{equation}
    \lambda^{(x)}_t = \frac{\mu_x(t, x_t, y_t, \theta_t) - r}{\sigma},
\end{equation}
while the components $\lambda^{(y)}_t$ and $\lambda^{(\theta)}_t$ correspond to
the unobservable memory and signal variables and are not directly measurable.

The reduced model changes this structure in two ways. First, adiabatic
elimination of the signal removes $\theta$ from the state space, so the third
component $\lambda^{(\theta)}_t$ disappears entirely. Since the reduced model
sets $\rho_{x\theta} = 0$ at leading order (the correlation enters only through
the first-order $1/k$ correction discussed in \cref{sec:transient}), there is
no residual signal contribution to the MPR:
\begin{equation}
    \lambda^{(\theta)}_t \equiv 0.
\end{equation}

Second, the introduction of the diffusive correlation $\rho$ between
$W^{(x)}_t$ and $W^{(y)}_t$ generates a non-zero second component
$\lambda^{(y)}_t$. Because the memory variable $y_t$ is now partially
correlated with the traded asset, it is partially hedgeable, and the market
demands compensation for its residual risk. The MPR vector in the reduced
model is therefore two-dimensional:
\begin{equation}
    \lambda_t = \left(\lambda^{(x)}_t,\ \lambda^{(y)}_t\right),
\end{equation}
where the first component follows from the indifference pricing framework and
takes the same functional form as in a complete market:
\begin{equation}
    \lambda^{(x)}_t = \frac{\bar\mu_x(t, x_t, y_t)}{\sigx(y_t)},
    \label{eq:mpr-reduced}
\end{equation}
with
\begin{equation}
    \bar\mu_x(t, x_t, y_t) = \barf(t) + \bar\eta(y_t) + \frac{\sigx^2(y_t)}{2}
    - r - c_x\, y_t \VM'(x_t),
\end{equation}
and $\bar\eta(y_t) = r - \sigx^2(y_t)/2$. The expression simplifies to
\begin{equation}
    \lambda^{(x)}_t = \frac{\barf(t) - c_x\, y_t \VM'(x_t)}{\sigx(y_t)}.
    \label{eq:mpr-simplified}
\end{equation}

The second component $\lambda^{(y)}_t$ is generated by the correlation $\rho$
through the market-price-of-risk terms in the pricing PDE \eqref{eq:pde2d}.
From the drift adjustment in the $y$-substep,
\begin{equation}
    \mu_y^{\mathbb{Q}} = \mu_y - \rho\,\frac{\bar\mu_x}{\sigx(y)},
\end{equation}
the component is
\begin{equation}
    \lambda^{(y)}_t = \frac{\rho\,\bar\mu_x(t, x_t, y_t)}{\sigx(y_t)}
    = \rho\,\lambda^{(x)}_t.
    \label{eq:mpr-y}
\end{equation}
Thus, the two components are proportional, with the correlation $\rho$ as the
proportionality constant. When $\rho = 0$, the MPR reduces to the single
component $\lambda^{(x)}_t$ of the uncorrelated model, consistent with
\cite{HalperinItkinMarketron2}. When $\rho \neq 0$, the market demands an
additional premium for bearing the partially hedgeable memory risk, and this
premium is exactly $\rho$ times the traded-asset premium.

In contrast to the complete-market setting, both components of the MPR in the
reduced Marketron are \emph{stochastic}: they depend on the memory variable
$y_t$ both directly through the term $c_x y_t \VM'(x_t)$ and indirectly through
the state-dependent volatility $\sigx(y_t)$. The memory variable encodes the
accumulated flow impact, and its fluctuations generate a time-varying,
state-dependent risk premium.

\myparagraph{Numerical computation.}
Since $y_t$ is unobservable, the empirical MPR is computed by Monte Carlo
simulation of the reduced system \eqref{eq:sde2d} under the physical measure.
Along each simulated path, we evaluate \eqref{eq:mpr-simplified} and
\eqref{eq:mpr-y} and take the expectation over the ensemble,
\begin{equation}
    \bar\lambda^{(x)}_t = \EE\Bigl[\lambda^{(x)}_t\Bigr]
    = \EE\Bigl[\frac{\barf(t) - c_x\, y_t \VM'(x_t)}{\sigx(y_t)}\Bigr],
    \qquad
    \bar\lambda^{(y)}_t = \rho\,\bar\lambda^{(x)}_t.
    \label{eq:mpr-expectation}
\end{equation}
Defaulted paths, where $x_t$ crosses the threshold $\hat x$ into the default
region, are excluded from the expectation, consistent with the treatment in
\cite{HalperinItkinMarketron2}. The standard error of the Monte Carlo estimate
is computed from the cross-sectional dispersion of the MPR across paths.

The wedge between the physical and risk-neutral measures is then quantified
through the \emph{level} of the MPR. The physical-measure MPR
$\bar\lambda^{(x),\mathbb{P}}_t$ is computed from the time-series calibration
of \cite{HalperinItkin2025Mark}, while the risk-neutral MPR
$\bar\lambda^{(x),\mathbb{Q}}_t$ is obtained from the option-implied
parameters. The difference,
\begin{equation} \label{eq:wedge}
\Delta\lambda_t \equiv \bar\lambda^{(x),\mathbb{Q}}_t - \bar\lambda^{(x),\mathbb{P}}_t,
\end{equation}
represents the market price of flow risk as a function of time.

\medskip\noindent\textit{Remark.} The MPR level and the wedge \eqref{eq:wedge}
were computed numerically for the full Marketron in \cite{HalperinItkinMarketron2},
and we do not reproduce those figures here. In isolation, with no second measure
or model to compare against, the reduced-model values would add length without
information. What the reduced core contributes is not a new number but the status
of the wedge: on the sloppy full model the $\mathbb{Q}$--$\mathbb{P}$ separation
is a ridge artifact, whereas on the reduced core it is a well-defined object,
subject to the weak identification of $c$ noted in \cref{sec:calibration}.

\section{Discussion} \label{sec:discussion}

The central message of this paper is epistemological rather than merely
computational.  The full Marketron, like many structural models in quantitative
finance, carries more parameters than the option surface can resolve.  This is
not a numerical inconvenience to be overcome with better optimizers.  It is a
structural defect that renders the model's economic claims unfalsifiable.  By
removing exact symmetries, freezing non-financial parameters, and adiabatically
eliminating the fast signal, we have reduced the model to an identifiable core
in which every parameter is either estimated from the data or fixed by an explicit
economic criterion.  The calibration objective is no longer a heuristic search
over a sloppy landscape but a far better-posed estimation problem, even though,
as the SPX fits show, individual coordinates such as $c$ and $\mu$ remain softly
determined and the optimum can sit at a bound.

This change of status has immediate economic consequences.  In the full
$18$-parameter model \eqref{eq:fullparams}, the flow-impact constant $c = c_x =
c_y$ sits on a continuous ridge along which it trades off against the memory
scale, the hidden volatility, and the initial state.  A reported value of $c$ is
therefore a point on a ridge, not an estimate.  On the reduced core
\eqref{eq:redparams}, the ridge is destroyed by the gauge fixing \eqref{eq:norm}
and the level absorption $\barh\to\bar y$, and $c$ becomes an identifiable coordinate
of the drift block, the softest of the option-facing set but no longer a ridge
direction.  The market price of flow risk, the wedge $\Delta\lambda_t$ of
\eqref{eq:wedge} between the physical and pricing values of the flow block, is
consequently a well-defined object rather than a point on a ridge: a compensation
for bearing non-hedgeable order-flow exposure that is identifiable rather than
asserted, though \cref{sec:calibration} shows its level is only weakly pinned by a
single surface. To our knowledge, this is the first such wedge made identifiable
within a structural model of inelastic
markets.

The correlation $\rho$ plays a subtle but essential role in this reduction. In
the uncorrelated model, risk aversion $\gamma$ and the hidden volatility
$\sigma_y$ collapse into the single product $p_y=\gamma$, removing one degree of
freedom from the diffusion structure.  The correlation $\rho$ restores it by
entering the pricing operator linearly through the cross-term $\partial^2_{xy}C$
and the drift adjustment, generating short-maturity skew that the state-dependent
drift alone cannot supply.  Financially, $\rho$ is the diffusive signature of
contemporaneous price impact; technically, it is the unique coefficient that
breaks the maturity degeneracy of \cref{prop:degenerate} at the short end.
Together with the effective constants $(\barf,\barh)$, which encode the
risk-adjusted drift wedge between the physical and pricing measures, $\rho$
completes the economic reading of the reduced surface. In the calibration itself $\rho$ is not free: it is fixed from the physical measure (\cref{sec:identifiability}), where it is directly estimable, because on the option surface its skew is confounded with $\alpha$ and $\beta$. Its role here is structural, restoring the degree of freedom that $\gamma$ removes, rather than serving as a calibrated coordinate.

The reduction is not without cost.  Adiabatic elimination of $\theta$ replaces
the rich signal-driven term structure of the full model with the deterministic
ramp \eqref{eq:ramp}, which captures only the relaxation of the conditional
mean.  When the signal autocorrelation is strong ($\sigma_z^2/2k\ll 1$) or when
option maturities span the relaxation time $1/k$ densely, the full
three-dimensional model with the transient correction of \cref{sec:transient}
should be preferred.  Similarly, the fluctuation-dissipation relation
\eqref{eq:fdt-marketron} reduces to the equality $k=\mu$ only at leading order;
for strongly nonlinear forcing, the higher harmonics of the trigonometric
signal introduce memory timescales that the single-mode kernel \eqref{eq:Kxx}
cannot resolve.  These are not failures of the reduced model but boundaries
of its domain of validity, and they are precisely the sort of falsifiable
conditions that an identifiable framework makes possible.

An important finding of the paper is that substituting the memory variable into the log-price equation yields a generalized Langevin equation with a closed-form, state-modulated memory kernel, while the memory variable itself provides an exact Markovian lift of this kernel. Thus, the representation in \cref{sec:gle} points to a broader class of models. The kernel \eqref{eq:Kxx} is separable and state-dependent, of the form $\bm K(Z_t,Z_s,\tau) = \boldsymbol\Phi(Z_t) \, \mathbf{k}(\tau) \, \boldsymbol\Phi(Z_s)^\top$ whereas, e.g., the framework of \cite{ItkinGLE1} assumes $\boldsymbol\Phi\equiv\mathbf{I}$. The smallest class containing both is therefore the separable state-dependent GLE. Four questions
must be settled before this broader class can be used as broadly as the lag-only
class:
\begin{itemize}
    \item Whether the Markovian embedding remains exact and finite-dimensional for a general $\boldsymbol\Phi$? In financial-mathematics terms (FMT), whether the corresponding non-Markovian stochastic Volterra equation with a state-dependent impact propagator admits an exact finite-dimensional Markovian realization?

    \item What replaces \eqref{eq:fdt} when the system--bath coupling is weighted by $\boldsymbol\Phi$, since the driving noise then becomes state-dependent and multiplicative? In FMT, what equilibrium-consistency condition replaces the usual relation between the impact-propagator decay and the exogenous-flow correlation time when the return innovation has state-dependent volatility?

    \item What the stationary law is, given that the Boltzmann form is already broken here, as \eqref{eq:Ueff} illustrates through the modulation-induced contribution to the effective potential? In FMT, what invariant distribution, if any, is induced by the state-dependent drift and diffusion of the reduced price--memory system, and how it differs from the equilibrium stationary distribution?

    \item Whether the triangular structure imposed on the kernel matrix by the absence of arbitrage survives state modulation, since the corresponding Girsanov density then acquires a state-dependent factor? In FMT, whether the no-arbitrage restriction on the impact-propagator matrix remains triangular under a state-dependent market price of memory risk, and whether the resulting change of measure remains compatible with the pricing dynamics?
\end{itemize}

A fifth question is empirical: the separable form is not identified without a normalization, because $\boldsymbol\Phi\mapsto\alpha\boldsymbol\Phi$ and
$\mathbf{k}\mapsto\alpha^{-2}\mathbf{k}$ leave the kernel unchanged, so the class
inherits a gauge freedom of exactly the kind \cref{sec:symmetries} removes here.

Two empirical directions are immediate. First, the model-implied VIX inherits
state dependence from the nonlinear indifference pricing operator and from the
default-region contribution of the potential, despite the absence of a separate
stochastic volatility factor, with instantaneous volatility given by the
state-dependent function $\sigx(y)$. Computing this index and comparing it with
the CBOE time series is a strict out-of-sample falsification test: if the two
regimes cannot lift the model VIX materially away from the level implied by
$\sigx(y)$, the Marketron in its present form cannot support volatility
derivatives without a stochastic spot-volatility extension.

Second, joint calibration to VIX options would constrain the soft $\mu$-direction
identified in \cref{sec:identifiability} and sharpen the premium estimates in
\eqref{eq:wedge}. Both tests are only meaningful on the identifiable core; on the
full model, a mismatch could always be attributed to a poor optimum.

The asymmetric wing behavior reported in \cref{sec:numerics} points to two
distinct directions for subsequent model refinement: a better representation of
the downside skew in OTM puts and an additional mechanism for the short-term
upside wing in OTM calls. In particular, introducing a separate stochastic
volatility factor could provide additional flexibility in the joint
strike--maturity dynamics of implied volatility and may help account for the
residual short-maturity call bias that is not eliminated by the current
$k$-dependent short-time mechanism. Such an extension would, however, need to
preserve the model's no-arbitrage structure and would introduce additional
parameters and a corresponding calibration trade-off. We reserve this extension for future investigation.

A further limitation is that the present calibration fixes the $P$-parameters
rather than estimating them jointly with the endogenous parameters. In a full
calibration, these parameters should themselves be inferred from the relevant
market data, and their estimation may alter both the fitted values and the
implied residual structure reported above. A natural direction for future
research is therefore to investigate the sensitivity of the model and its
pricing performance to both the $P$-parameters and the endogenous parameters,
including their joint calibration.

Finally, the computational cost of the reduced pricer could be further reduced by
constructing a neural surrogate trained on solutions of the reduced PDE
\eqref{eq:pde2d}. Such a surrogate could bring pricing times to the millisecond
range, making the model more suitable for live trading and intraday
recalibration. Because the reduced parameter space is low-dimensional and the
calibration problem exhibits substantially improved identifiability, the
training set could be generated on a regular lattice using standard supervised
learning methods. The surrogate would then approximate the pricing map of the
structurally constrained reduced pricer, while the latter would remain the
source of the model's arbitrage and boundary-condition guarantees. We also leave
this extension to future work.

Taken together with the analysis in \cite{ItkinGLE1}, the results of this
paper point to a broader interpretation of the Marketron as an economically
structured realization of generalized Langevin dynamics. The memory variable
in the reduced Marketron corresponds to a single exponential memory mode,
whereas the generalized Langevin formulation permits a general memory kernel
or a superposition of exponential modes, allowing the temporal structure of
memory to be determined empirically rather than imposed through a single
relaxation scale. Conversely, the Marketron provides an economic interpretation
of the memory and price-impact mechanisms that are introduced more
phenomenologically in the generalized Langevin framework. This connection
suggests a natural next step: to combine the economic structure of the
Marketron with the more general GLE memory specification and investigate its
implications in a joint calibration to SPX and VIX data, thereby testing
whether a common framework can consistently capture the short-, intermediate-,
and long-horizon dynamics of volatility and leverage.

\section*{Disclosure statement}

No potential conflict of interest was reported by the authors.

\section*{Funding}

No funding was received.

\section*{Disclaimer}

Opinions expressed here are author's own, and do not represent views of their employers. A standard disclaimer applies.

\section*{Acknowledgments}

I thank my long-time co-author Igor Halperin for our joint work on the Marketron model and valuable discussions, as well as Michael Isichenko for his constructive feedback regarding the Marketron model's dimensionality.

%%%%%%%%%%%%%%%%%%%%%%%%%%%%%%%%%%%%%%%%%%%%%%%%%%%%%%%%%%%%%%%%%%%%%%%%%%%%%

\printbibliography[title={References}]

\appendix

\appendixpage
\numberwithin{equation}{section}
\setcounter{equation}{0}

\section{Proof of \cref{prop:gauge}} \label{app:gauge}

\begin{proof}
Let $\tilde{y}_t = y_t / \lambda$ denote the transformed process under the gauge transformation \eqref{eq:gauge}. Applying It\^o's lemma to $\tilde{y}_t$, we obtain $d\tilde{y}_t = \frac{1}{\lambda} dy_t$. Under this mapping, the spatial parameters scale proportionally, meaning $\bar{y} \to \bar{y}/\lambda$, $y_0 \to y_0/\lambda$, and $\sigma_y \to \sigma_y/\lambda$, while the mean-reversion rate $\mu$ remains invariant. Consequently, the ratios $b_2/\sigma_y$, $\bar{y}/\sigma_y$, and $y_0/\sigma_y$ are invariant, as the scaling factor cancels out. The correlation parameters $\rho$ and $\rho_{x\theta}$ are dimensionless and therefore unaffected by the state scaling. Assuming the product $c_x c_y$ is preserved under the transformation, the pair $(x_t, \tilde{y}_t)$ exactly solves the stochastic system with the transformed parameters, leaving the law of the observable process $x_t$ unchanged.

To demonstrate that the option prices are preserved, we apply the spatial scaling to the indifference-pricing PDE \eqref{eq:pde}. By the chain rule, the partial derivatives with respect to the scaled variable transform as $\partial_y C \to \frac{1}{\lambda} \partial_y C$, $\partial^2_y C \to \frac{1}{\lambda^2} \partial^2_y C$, and $\partial^2_{xy} C \to \frac{1}{\lambda} \partial^2_{xy} C$. Substituting these alongside the transformed volatility $\sigma_y \to \lambda \sigma_y$, we evaluate the key PDE components:
\begin{align*}
\gamma (\lambda \sigma_y)^2 (1-\rho^2) \left( \frac{1}{\lambda} \partial_y C \right)^2 &= \gamma \sigma_y^2 (1-\rho^2) (\partial_y C)^2, \\
(\lambda \sigma_y)^2 \left( \frac{1}{\lambda^2} \partial^2_y C \right) &= \sigma_y^2 \partial^2_y C, \\
\rho \sigma (\lambda \sigma_y) \left( \frac{1}{\lambda} \partial^2_{xy} C \right) &= \rho \sigma \sigma_y \partial^2_{xy} C.
\end{align*}
Since these terms are homogeneous of degree zero under the joint transformation $y \to \lambda y$ and $\sigma_y \to \lambda \sigma_y$, the PDE \eqref{eq:pde} is strictly form-invariant. Because both the observable system dynamics and the pricing equation are invariant, it follows that every option price remains unchanged.

\end{proof}

\section{Fixing the regularizer} \label{app:eps}

The regularizer $\bar\epsilon$ is not a financial parameter. This appendix fixes it, and in doing so settles which criterion can do the job, since the obvious one cannot.

\myparagraph{Two admissible regularizers.}
Write $\VM'(x) = -e^{-x}\bigl[1 - R(x)\bigr]$. The original specification of \cite{HalperinItkin2025Mark} takes
\begin{equation} \label{eq:R1}
R_1(x) = \frac{g}{e^{x} + \bar\epsilon g},
\end{equation}
whereas the pricer of \cite{HalperinItkinMarketron2} takes
\begin{equation} \label{eq:R2}
R_2(x) = \frac{1}{2\bar\epsilon}\bigl[1 - \erf(x + b)\bigr],
\qquad
b = 4 - \erf^{-1}\!\Bigl[1 - \frac{2}{1 + e^{4} g \bar\epsilon}\Bigr],
\end{equation}
because the quadrature underlying the Gaussian RBF matrix elements closes in elementary functions for $R_2$ and does not for $R_1$. Both share the limits $R \to 0$ as $x \to +\infty$ and $R \to 1/\bar\epsilon$ as $x \to -\infty$, so both reduce the divergence of $\VM'$ in the default region from $e^{-2x}$ to $e^{-x}$, and in both
\begin{equation} \label{eq:epsasym}
\VM'(x) \;\sim\; \Bigl(\frac{1}{\bar\epsilon} - 1\Bigr) e^{-x}, \qquad x \to -\infty.
\end{equation}
We adopt $R_2$ throughout, so that the model as analyzed and the model as priced are the same object.

\myparagraph{The constant $b$ anchors $R_2$ to $R_1$.}
The definition of $b$ in \eqref{eq:R2} looks arbitrary but is not. Substituting $x = -4$ gives $\erf(b-4) = -\bigl[1 - 2/(1 + e^4 g\bar\epsilon)\bigr]$, whence
\begin{equation} \label{eq:anchor}
R_2(-4) \;=\; \frac{g e^{4}}{1 + \bar\epsilon g e^{4}} \;=\; R_1(-4).
\end{equation}
The two regularizers agree exactly at $x_\star = -4$, and $b$ is precisely the constant that makes them do so. This is a matching condition in the default region, which is where regularization is needed, and it is the reason the substitution is legitimate.

\myparagraph{The turning point cannot fix $\bar\epsilon$.}
Under $R_1$ the condition $\VM'(\hat x) = 0$ is linear in $\bar\epsilon$ once the denominator is cleared, giving $\bar\epsilon = 1 - e^{\hat x}/g$ and inviting the criterion that the mean-reversion range extend to a chosen $\hat x$. Under $R_2$ this fails. The turning point solves $\erf(\hat x + b) = 1 - 2\bar\epsilon$, that is
\begin{equation} \label{eq:xhatR2}
\hat x = \erf^{-1}(1 - 2\bar\epsilon) - b(g, \bar\epsilon),
\end{equation}
and because $b$ carries its own dependence on $\bar\epsilon$ the two contributions very nearly cancel. At $g = 1$, $\bar\epsilon = 0.1$ we find $\partial \hat x/\partial \log \bar\epsilon = -0.014$ against $\partial \hat x/\partial \log g = +0.389$, so the turning point is some twenty-five times more responsive to the coupling than to the regularizer, and it is not even monotone in $\bar\epsilon$. Over $\bar\epsilon \in (0.01, 0.45)$ and $g \in (0.1, 10)$ the reachable range of $\hat x$ is roughly $(-3.5, -1.6)$, so a target such as $\hat x = x_0 - 2$ is in general not attainable at all. Under $R_2$ the turning point is a property of $g$, not of $\bar\epsilon$, and the criterion must be sought elsewhere.

\myparagraph{The default rate cannot fix it either.}
Since the regularizer governs the deep region, it is natural to ask whether $\bar\epsilon$ can be calibrated to an observed default rate. It cannot, for two reasons. The first is the turning point again: with the barrier top drifting non-monotonically over $\hat x \in (-2.73, -2.38)$ as $D$ ranges over two orders of magnitude, a fixed default threshold lies sometimes inside the trapping region and sometimes outside it, and the resulting first-passage probability is not monotone in $\bar\epsilon$. In simulations of \eqref{eq:sde2d} at $\sigma = 1$ with a threshold at $x = -2.5$ we obtain one-year rates of $0.047\%$, $0.068\%$, $0.090\%$ and $0.045\%$ as $D$ increases through $50, 200, 800, 1500$, which admits no inversion. The second reason is structural and would remain even if the first were repaired: the regularizer exists in order to make the deep region unreachable, so it suppresses precisely the events one would calibrate against. At $\sigma = 0.6$ the same experiment returns a default rate indistinguishable from zero for every $D$ we tried. A quantity whose purpose is to prevent defaults is a poor instrument for setting their frequency.

\myparagraph{The criterion, and why it is imposed at $x_\star$.}
What $\bar\epsilon$ does control, monotonically and by construction, is the strength of the restoring force at depth, which is the quantity that must be bounded if the model is not to impose unrealistic drifts on distressed names. We therefore impose
\begin{equation} \label{eq:crit}
\bigl|\VM'(x_\star)\bigr| = D
\end{equation}
at a reference depth $x_\star$, with $D$ a chosen bound. Taking $x_\star = -4$ makes the criterion independent of which regularizer is used, by \eqref{eq:anchor}, and has the further advantage that the inversion remains closed form even though $R_2$ is transcendental. Writing $A = g e^{-x_\star}$ and $B = 1 + D e^{x_\star}$, condition \eqref{eq:crit} reads $A/(1 + \bar\epsilon A) = B$, so that $\bar\epsilon = 1/B - 1/A$, which is \eqref{eq:epsfix}. Note that a restoring force at depth requires $R(x_\star) > 1$, so $D$ is attained on the branch $\VM'(x_\star) > 0$. We have verified this against direct evaluation of $\VM'$ under $R_2$ to fourteen digits.

The admissible range follows from $\bar\epsilon > 0$, which is needed to keep $R_1$ pole-free and to keep $R_2$ decreasing, and reads
\begin{equation} \label{eq:gadmiss}
g \;>\; e^{x_\star}\bigl(1 + D e^{x_\star}\bigr).
\end{equation}
At $x_\star = -4$ this is a weak restriction: $g > 0.018$ for $D = 0$, $g > 0.153$ for $D = 400$, and $g > 0.354$ for $D = 1000$. Setting $D = 0$ recovers $\bar\epsilon = 1 - e^{x_\star}/g$, which is the turning-point formula with $\hat x = x_\star$; this is the one turning point that both regularizers share, and it is available only because it sits at the anchor.

\myparagraph{The potential and the sign structure.}
Integrating $R_2$ gives $\VM$ in closed form,
\begin{equation} \label{eq:VMclosed}
\VM(x) = \frac{1}{2\bar\epsilon}\Bigl\{ \erfc(b) + e^{-x}\bigl[\erf(b+x) - 1\bigr]
+ e^{\,b + 1/4}\bigl[\erf(b + \tfrac12) - \erf(b + x + \tfrac12)\bigr] \Bigr\},
\end{equation}
up to an additive constant, which is immaterial since only $\VM'$ enters the price drift and only differences of $\VM$ enter the effective potential \eqref{eq:Ueff}. For $x > \hat x$ the bracket $1 - R(x)$ is positive and $\VM' < 0$; for $x < \hat x$ it is negative and $\VM' > 0$. The turning point is therefore a maximum of $\VM$, the top of the barrier separating the default region from the normal one, rather than a resting point. Consequently the modulation $\VM'(x_t)\VM'(x_s)$ in the kernel \eqref{eq:Kxx} is negative when the two times sit on opposite sides of $\hat x$ and positive when they sit on the same side: memory reverses sign across the barrier, which is the sense in which the default region carries its own dynamics.

\section{Proof of \cref{prop:degenerate}} \label{app:degen}

We first establish that \eqref{eq:pde} is correctly derived from the indifference principle, ensuring the proposition addresses the model's fundamental properties rather than an artifact of the derivation. Subsequently, we demonstrate the implications of this equation for a contingent claim written exclusively on $x_T$.

\myparagraph{Validation of the derivation.}
Applying the transformation $u = -e^{-\gamma z}v$, the bracket maximized over the hedge $h$ forms a quadratic expression with a negative leading coefficient, yielding the maximizer $h^\star = v_s/(\gamma v) + \bar\mu_x/(\gamma s \sigma^2)$. By substituting $h^\star$ and dividing by $-e^{-\gamma z}$, the term $s\bar\mu_x v_s$ emerges twice with opposite signs—once originating from the uncontrolled generator and once from the maximized bracket—resulting in an exact cancellation. This reflects the standard financial argument that the drift of a traded asset is entirely hedged away, restricting the survival of $\bar\mu_x$ strictly to the quadratic source term. Transitioning to the certainty equivalent $C = \gamma^{-1}\log v$, the component $\tfrac12\sigma^2 s^2\gamma C_s^2$ induced by $v_{ss}$ cancels exactly against $-\tfrac12\gamma\sigma^2 s^2 C_s^2$ arising from $v_s^2/v$. Consequently, \eqref{eq:pde} features quadratic gradient terms exclusively in the hidden factors and none in $x$, confirming that the $x$-direction is spanned by the traded asset and thus carries no residual risk. Finally, applying the coordinate transformations $s\partial_s = \partial_x$ and $s^2\partial_{ss} = \partial_{xx} - \partial_x$ converts $\tfrac12\sigma^2 s^2 C_{ss}$ into $\tfrac12\sigma^2(C_{xx} - C_x)$, yielding the discounted drift coefficient $\bar\eta = r - \sigma^2/2$. The integrity of these cancellations and the final differential equation have been verified via symbolic algebra.

\myparagraph{Implications for an $x$-only payoff.}
Defining the pricing difference as $P = C - C^0$ and subtracting the respective PDEs, we observe that the source term $-\bar\mu_x^2/(2\gamma\sigma^2)$ is common to both equations and additive in neither $C$ nor $C^0$. Thus, it cancels identically irrespective of the terminal data. Expanding the difference of the quadratic terms yields
\begin{equation} \label{eq:quadremain}
(C_y)^2 - (C^0_y)^2 = 2 C^0_y P_y + P_y^2 ,
\end{equation}
with a symmetric expansion holding for the $\theta$ derivatives, such that
\begin{equation} \label{eq:Peq}
P_t + \tfrac12\sigma^2 P_{xx} + \bar\eta P_x - rP
+ \tfrac12\sigma_y^2 P_{yy} + \mu_y P_y + \gamma\sigma_y^2 C^0_y P_y + \tfrac12\gamma\sigma_y^2 P_y^2
+ (\theta\text{-terms}) = 0 ,
\end{equation}
subject to the terminal condition $P(T) = \calV(x_T)$. Crucially, every term capable of inducing dependence on the hidden factors is strictly proportional to $P_y$, $P_\theta$, or their corresponding derivatives. Because the terminal payoff $\calV$ depends solely on $x_T$, it follows that $P_y(T) = P_\theta(T) = 0$. Therefore, the trivial solution $P_y \equiv P_\theta \equiv 0$ is consistent with \eqref{eq:Peq}, and by uniqueness, it is the exact solution. Eliminating these terms reduces the system to $P_t + \tfrac12\sigma^2 P_{xx} + \bar\eta P_x - rP = 0$, recovering the classic Black-Scholes equation.

From a numerical perspective using a Strang splitting scheme, an equivalent argument applies. After applying the Cole-Hopf transformation $w = e^{\gamma C}$, the hidden-factor substep becomes linear in $w$. The terminal ratio $w/w^0 = e^{\gamma\calV(x)}$ is uniform in $y$ and $\theta$, allowing it to commute with any fractional step containing only hidden-factor derivatives, regardless of how those coefficients depend on $x$. Consequently, the ratio passes through the hidden-factor substep unchanged. The subsequent $x$-substep, featuring $y$-independent coefficients, structurally preserves this difference. Neither fractional step can introduce hidden-factor dependence into $P$.

\myparagraph{Scope and limitations of the proposition.}
It is important to emphasize that this proposition does not imply $C = C^0$, nor does it suggest that either certainty equivalent is independent of the memory parameters. On the contrary, both exhibit strong dependence on these parameters through the source term. In our numerical experiments using the parameters outlined in \cref{sec:numerics}, the gradient $C_y$ spans a range of $\pm 8$ across the computational grid, whereas the difference $C_y - C^0_y$ remains on the order of $10^{-15}$ within the interior domain. This indicates that the two certainty equivalents differ exclusively by a function of $(t,x)$. It is this shared $y$-profile (rather than a flat spatial gradient) that renders the final option price blind to the order flow. Furthermore, this degenerate behavior breaks down if $\sigma$ becomes a function of $y$, as the $x$-substep coefficients would then introduce $y$-dependence, making $P_y \equiv 0$ inconsistent. Similarly, the proposition ceases to hold if the terminal payoff explicitly incorporates $y_T$, which would force a non-zero terminal gradient $P_y(T) \neq 0$.

\section{Indifference pricing with correlated hidden factors} \label{app:corr}

This appendix derives \eqref{eq:pde}--\eqref{eq:residcov}. The uncorrelated case is in \cite{HalperinItkinMarketron2}; what follows carries $\rho$ and $\rho_{x\theta}$ through the same argument. Throughout, $s_t$ is the discounted price, $Z_t$ the discounted wealth, and the hidden factors are $h = (y,\theta)^\top$.

\myparagraph{Hamilton-Jacobi-Bellman equation.}
With the correlation structure \eqref{eq:corr}, the wealth increment $dZ_t = h_t\,ds_t$ now covaries with both hidden factors, so the generator acquires two cross terms in $z$ that are absent when the Brownian motions are independent. The value function $u(t,z,s,y,\theta)$ satisfies
\begin{align} \label{eq:hjb-corr}
0 = \frac{\partial u}{\partial t}
&+ \frac{1}{2}\sigma^2 s^2 \frac{\partial^2 u}{\partial s^2}
+ \frac{1}{2}\sigma_y^2 \frac{\partial^2 u}{\partial y^2}
+ \frac{1}{2}\sigma_\theta^2 \frac{\partial^2 u}{\partial \theta^2}
+ \rho\,\sigma\sigma_y s \frac{\partial^2 u}{\partial s \partial y}
+ \rho_{x\theta}\,\sigma\sigma_\theta s \frac{\partial^2 u}{\partial s \partial \theta}
\nonumber \\
&+ s\bar\mu_x \frac{\partial u}{\partial s}
+ \mu_y \frac{\partial u}{\partial y}
+ \mu_\theta \frac{\partial u}{\partial \theta}
+ \max_{h}\Bigl[ \tfrac{1}{2}h^2 s^2\sigma^2 \frac{\partial^2 u}{\partial z^2}
+ h\,\mathcal{B}[u] \Bigr],
\end{align}
where
\begin{equation} \label{eq:Bop}
\mathcal{B}[u] \;=\; s^2\sigma^2 \frac{\partial^2 u}{\partial z \partial s}
\;+\; s\sigma\sigma_y\rho\,\frac{\partial^2 u}{\partial z \partial y}
\;+\; s\sigma\sigma_\theta\rho_{x\theta}\,\frac{\partial^2 u}{\partial z \partial \theta}
\;+\; s\bar\mu_x \frac{\partial u}{\partial z}.
\end{equation}
The last two terms of \eqref{eq:Bop} beyond the first and fourth are new; setting $\rho = \rho_{x\theta} = 0$ recovers the bracket of \cite{HalperinItkinMarketron2}.

\myparagraph{Optimal hedge.}
The bracket in \eqref{eq:hjb-corr} is quadratic in $h$ with negative leading coefficient, so the maximizer is $h^\star = -\mathcal{B}[u]/(s^2\sigma^2 u_{zz})$ and the maximized value is $-\mathcal{B}[u]^2/(2s^2\sigma^2 u_{zz})$. Exponential utility factorizes the value function as $u = -e^{-\gamma z}v(t,s,y,\theta)$, so every $z$-derivative brings down a factor of $\gamma$ and the exponential cancels, giving
\begin{equation} \label{eq:hstar}
h^\star \;=\; \frac{1}{\gamma}\frac{\partial_s v}{v}
\;+\; \frac{\rho\,\sigma_y}{\gamma s\sigma}\frac{\partial_y v}{v}
\;+\; \frac{\rho_{x\theta}\,\sigma_\theta}{\gamma s\sigma}\frac{\partial_\theta v}{v}
\;+\; \frac{\bar\mu_x}{\gamma s\sigma^2}.
\end{equation}
In terms of the certainty equivalent $C = \gamma^{-1}\log v$ and the log-price, this reads
\begin{equation} \label{eq:hstarC}
s\,h^\star \;=\; \underbrace{\frac{\partial C}{\partial x}}_{\text{delta}}
\;+\; \underbrace{\frac{1}{\sigma}\Bigl(\rho\,\sigma_y \frac{\partial C}{\partial y}
+ \rho_{x\theta}\,\sigma_\theta \frac{\partial C}{\partial \theta}\Bigr)}_{\text{correlation hedge}}
\;+\; \underbrace{\frac{\bar\mu_x}{\gamma\sigma^2}}_{\text{Merton}} .
\end{equation}
The middle term is the whole effect of the correlations on the hedge: the traded asset is used to hedge the part of the hidden noise that it spans. It vanishes with $\rho$ and $\rho_{x\theta}$, and \eqref{eq:hstarC} then reduces to delta plus Merton.

\myparagraph{Reduction to the certainty equivalent.}
Substituting $h^\star$ back and writing $q \equiv \rho\,\sigma_y \partial_y C + \rho_{x\theta}\,\sigma_\theta \partial_\theta C$, the terms in the resulting equation for $v$ that are quadratic in derivatives organize as $-(P + \gamma v q + R)^2/(2v)$ with $P = \sigma\,\partial_x v$ and $R = \bar\mu_x v/\sigma$. Two cancellations then occur when $v = e^{\gamma C}$ is inserted.

The first is the classical Cole-Hopf cancellation. The term $\tfrac12\sigma^2 \partial_x^2 v$ contributes $\tfrac{\gamma}{2}\sigma^2 (\partial_x C)^2$, which is removed exactly by $-P^2/(2v)$. The second is new and is what makes the correlated problem tractable: the cross contributions $\rho\sigma\sigma_y\partial^2_{xy}v$ and $\rho_{x\theta}\sigma\sigma_\theta\partial^2_{x\theta}v$ generate $\gamma\sigma\,\partial_x C\,q$, which is removed exactly by the cross product $-2Pq\gamma v/(2v)$. No term mixing $\partial_x C$ with the hidden gradients survives.

What remains of the quadratic terms is
\begin{equation} \label{eq:quadres}
\frac{\gamma}{2}\Bigl[\sigma_y^2 (\partial_y C)^2 + \sigma_\theta^2 (\partial_\theta C)^2 - q^2\Bigr]
\;=\; \frac{\gamma}{2}\,\nabla_h C^\top \widetilde\Sigma\, \nabla_h C ,
\end{equation}
with $\widetilde\Sigma$ as in \eqref{eq:residcov}, since expanding $q^2$ produces $\rho^2\sigma_y^2(\partial_y C)^2 + \rho_{x\theta}^2\sigma_\theta^2(\partial_\theta C)^2 + 2\rho\rho_{x\theta}\sigma_y\sigma_\theta \partial_y C\,\partial_\theta C$. The cross product of $q$ with $R$ supplies the two drift adjustments $-\rho\sigma_y\bar\mu_x/\sigma$ and $-\rho_{x\theta}\sigma_\theta\bar\mu_x/\sigma$ appearing in \eqref{eq:pde}, and $R^2$ supplies the term $-\bar\mu_x^2/(2\gamma\sigma^2)$ unchanged. The drift and discount terms in $x$ are unaffected by the correlations and carry over from \cite{HalperinItkinMarketron2}. This establishes \eqref{eq:pde}.

\myparagraph{The residual covariance is a Schur complement.}
Write the full instantaneous covariance of $(x, h)$ in block form, with $\Sigma_{hh} = \mathrm{diag}(\sigma_y^2, \sigma_\theta^2)$ because $\rho_{y\theta} = 0$, and $\Sigma_{hx} = (\rho\sigma\sigma_y,\ \rho_{x\theta}\sigma\sigma_\theta)^\top$. Then
\begin{equation} \label{eq:schur}
\widetilde\Sigma \;=\; \Sigma_{hh} \;-\; \frac{1}{\sigma^2}\,\Sigma_{hx}\Sigma_{hx}^\top ,
\end{equation}
which is precisely \eqref{eq:residcov}. The interpretation is immediate: $\widetilde\Sigma$ is the conditional covariance of the hidden factors given the traded one, so the quadratic penalty in \eqref{eq:pde} charges the investor only for the hidden risk that the traded asset cannot span. The off-diagonal entry is negative whenever $\rho$ and $\rho_{x\theta}$ share a sign, because hedging through the common traded asset induces a negative residual dependence between the two hidden factors even though their driving noises are uncorrelated.

\myparagraph{Limiting cases.}
At $\rho = \rho_{x\theta} = 0$ we have $q \equiv 0$, so \eqref{eq:hstarC} loses its middle term, the drift adjustments vanish, and $\widetilde\Sigma = \mathrm{diag}(\sigma_y^2,\sigma_\theta^2)$, which returns the quadratic gradient terms $\tfrac{\gamma}{2}\sigma_y^2(\partial_y C)^2 + \tfrac{\gamma}{2}\sigma_\theta^2(\partial_\theta C)^2$ of \cite{HalperinItkinMarketron2}.

In the scalar case, with $\theta$ absent, $\widetilde\Sigma$ collapses to $\sigma_y^2(1-\rho^2)$, so the quadratic term is $\tfrac{1}{2}\gamma(1-\rho^2)\sigma_y^2(\partial_y C)^2$ and the effective distortion exponent is $\gamma(1-\rho^2)$, in agreement with \cite{HH2002, MusielaZariphopoulou2004}. This is the origin of $p_y = \gamma(1-\rho^2)$ in the reduced parameter list: the derived combination, not $\gamma$ itself, is what the pricing equation sees.

\myparagraph{The reduced equation.}
Adiabatic elimination of $\theta$ replaces $f(\theta)$ and $h(\theta)$ by their stationary averages and removes $\theta$ from the state. In \eqref{eq:pde} this deletes the row and column of $\widetilde\Sigma$ carrying $\theta$, so the quadratic term becomes $\tfrac{1}{2}\gamma(1-\rho^2)\sigma_y^2(\partial_y C)^2$, deletes the $\theta$ drift and diffusion terms, and leaves the $x$ and $y$ blocks untouched. The result is \eqref{eq:pde2d}. Note that $\rho_{x\theta}$ does not disappear silently: it enters the reduced description only through the transient correction of \cref{sec:transient}, at order $1/k$, which is why the baseline reduced model sets it to zero.

\section{Proof of Proposition~\ref{prop:marketron-gle}} \label{app:gle}

The second line of \eqref{eq:sde2d-again} is linear in $y$ with constant coefficients, so it integrates in closed form,
\begin{equation} \label{eq:ysol}
y_t \;=\; e^{-\mu t}y_0
\;+\; \hat y\bigl(1-e^{-\mu t}\bigr)
\;-\; c_y\int_0^t e^{-\mu(t-s)}\VM(x_s)\,ds
\;+\; \sigma_y\int_0^t e^{-\mu(t-s)}dW^{(y)}_s ,
\end{equation}
with $\hat y=\bar y+\barh/\mu$ the level to which $y$ relaxes in the absence of coupling. The memory variable is thus an exponentially weighted average of $\VM(x_s)$ along the realized path, with forgetting rate $\mu$.

Substituting \eqref{eq:ysol} into the drift of $x$ gives
\begin{equation} \label{eq:xdrift}
\barf + \bar\eta - c_x\,\hat y\,\VM'(x_t)
\;+\; c_x c_y\,\VM'(x_t)\!\int_0^t\! e^{-\mu(t-s)}\VM(x_s)\,ds
\;-\; c_x\,\sigma_y \VM'(x_t)\!\int_0^t\! e^{-\mu(t-s)}dW^{(y)}_s,
\end{equation}
where terms carrying $e^{-\mu t}$ have been suppressed; they decay on the scale $1/\mu$ and do not affect the stationary structure. The first two terms are local in the state and supply the first two terms of \eqref{eq:Ueff}. The last is the multiplicative noise named in the proposition. The memory sits in the third term, which acts on the level of $\VM(x_s)$ rather than on a velocity and so is not yet in GLE form.

Integrate that term by parts. With $u=\VM(x_s)$ and $dv=e^{-\mu(t-s)}ds$, so that $v=e^{-\mu(t-s)}/\mu$,
\begin{equation} \label{eq:byparts}
\int_0^t e^{-\mu(t-s)}\VM(x_s)\,ds
\;=\; \frac{\VM(x_t)}{\mu}
\;-\; \frac{\VM(x_0)e^{-\mu t}}{\mu}
\;-\; \frac{1}{\mu}\int_0^t e^{-\mu(t-s)}\,d\VM(x_s).
\end{equation}
By It\^o's lemma $d\VM(x_s)=\VM'(x_s)\,dx_s+\tfrac12\VM''(x_s)\sigma^2 ds$, so
\begin{equation} \label{eq:byparts2}
\int_0^t e^{-\mu(t-s)}\VM(x_s)\,ds
= \frac{\VM(x_t)}{\mu}
- \frac{1}{\mu}\int_0^t e^{-\mu(t-s)}\VM'(x_s)\,\dot x_s\,ds
- \frac{\sigma^2}{2\mu}\int_0^t e^{-\mu(t-s)}\VM''(x_s)\,ds ,
\end{equation}
again dropping the transient. Multiplying by $c_x c_y\VM'(x_t)$, the first term on the right gives
\begin{equation}
\frac{c_x c_y}{\mu}\,\VM'(x_t)\VM(x_t)
\;=\; \frac{c_x c_y}{2\mu}\,\frac{d}{dx}\bigl[\VM(x)^2\bigr]\Big|_{x=x_t},
\end{equation}
which is the third term of \eqref{eq:Ueff} and is again local. The second term gives
\begin{equation} \label{eq:friction}
-\,\frac{c_x c_y}{\mu}\,\VM'(x_t)\int_0^t e^{-\mu(t-s)}\VM'(x_s)\,\dot x_s\,ds ,
\end{equation}
which is the friction term of an overdamped GLE with the tail part of \eqref{eq:Kxx}. The third term of \eqref{eq:byparts2} contributes
\begin{equation} \label{eq:itoterm}
-\,\frac{c_x c_y\sigma^2}{2\mu}\,\VM'(x_t)\int_0^t e^{-\mu(t-s)}\VM''(x_s)\,ds ,
\end{equation}
a further memory term acting on the level of $\VM''$. It is $O(\sigma^2)$, vanishes identically when $\VM$ is linear, and we absorb it into $\eta_t$; it does not affect the structure of the kernel.

Collecting the local contributions into $U_{\mathrm{eff}}$ and the noise sources into $\eta_t$, the drift equation for $x$ reads
\begin{equation} \label{eq:precollect}
\dot x_t \;+\; \frac{c_x c_y}{\mu}\VM'(x_t)\!\int_0^t\! e^{-\mu(t-s)}\VM'(x_s)\,\dot x_s\,ds
\;=\; -\,U_{\mathrm{eff}}'(x_t) \;+\; \eta_t .
\end{equation}
Writing the bare term as $\dot x_t=\int_0^t\delta(t-s)\,\dot x_s\,ds$ collects the left-hand side into a single convolution with the kernel $K_M$ of \eqref{eq:Kxx}, which is \eqref{eq:marketron-gle}. \qed

\section{RBF matrix elements of the reduced model} \label{app:rbf}

Appendix~A of \cite{HalperinItkinMarketron2} evaluates the matrix elements of the Volterra kernel in closed form. Its derivation rests on the Green's function being a product of three independent one-dimensional Gaussians, which factorizes every integral into three scalar ones with
\begin{equation} \label{eq:scalar-a}
\int_{-\infty}^{\infty} \frac{e^{-\varepsilon(\xi - x_k)^2 - (x-\xi)^2/(2\sigma^2\Delta\tau)}}{\sigma\sqrt{2\pi\Delta\tau}}\,d\xi
= \frac{1}{a(\sigma)}\,e^{-\varepsilon(x-x_k)^2/a^2(\sigma)},
\qquad a^2(\sigma) = 1 + 2\varepsilon\sigma^2\Delta\tau .
\end{equation}
Under the reduction the $\theta$ direction disappears, but the surviving pair $(x,y)$ is correlated through $\rho$, so the Green's function no longer factorizes and \eqref{eq:scalar-a} cannot be applied coordinate by coordinate.

One route is to rotate to the eigenbasis of the covariance. This works, because the Gaussian RBF is isotropic and therefore invariant under orthogonal transformations, but it is not necessary: the entire appendix generalizes by promoting the scalar $a^2(\sigma)$ to a matrix, and the resulting formulas contain the uncorrelated ones as the diagonal case.

\myparagraph{The master identity.}
Write $\bm p = (\xi,\eta)^\top$ for the integration variable, $\bm u = (x,y)^\top$ for the evaluation point, $\bm p_k = (x_k, y_j)^\top$ for an RBF centre, and
\begin{equation} \label{eq:Sig2d}
\Sigma \;=\; \begin{pmatrix} \sigma^2 & \rho\,\sigma\sigma_y \\ \rho\,\sigma\sigma_y & \sigma_y^2 \end{pmatrix}
\end{equation}
for the diffusion covariance of the reduced model. With \eqref{eq:sigy} in force $\Sigma = \Sigma(y)$, since $\sigx$ appears in three of its four entries; the formulas below hold verbatim with $\Sigma$ evaluated at the $y$-node of the substep, at the cost of one $2\times2$ inverse per node rather than one overall. The Green's function of the homogeneous part of \eqref{eq:pde2d} is the bivariate Gaussian with covariance $\Sigma\Delta\tau$. Then, for the isotropic Gaussian RBF $e^{-\varepsilon|\bm p - \bm p_k|^2}$,
\begin{equation} \label{eq:master}
\int_{\mathbb{R}^2} e^{-\varepsilon|\bm p - \bm p_k|^2}\,G(\Delta\tau; \bm u \,|\, \bm p)\,d\bm p
\;=\; \frac{1}{\sqrt{\det \bm A}}\,
\exp\Bigl[-\varepsilon\,(\bm u - \bm p_k)^\top \bm A^{-1} (\bm u - \bm p_k)\Bigr],
\qquad
\bm A \;\equiv\; \bm I + 2\varepsilon\,\Sigma\,\Delta\tau .
\end{equation}
This follows from the convolution of two Gaussians, the RBF being an unnormalized Gaussian of covariance $(2\varepsilon)^{-1}\bm I$. When $\rho = 0$ the matrix $\bm A$ is diagonal with entries $a^2(\sigma)$ and $a^2(\sigma_y)$, and \eqref{eq:master} splits into the product of two copies of \eqref{eq:scalar-a}. The reduction therefore amounts to the substitution
\begin{equation} \label{eq:substitution}
\prod_{\text{coords}} \frac{1}{a(\cdot)}\,e^{-\varepsilon(\cdot)^2/a^2(\cdot)}
\;\longmapsto\;
\frac{1}{\sqrt{\det\bm A}}\,e^{-\varepsilon\,(\cdot)^\top \bm A^{-1}(\cdot)},
\end{equation}
applied wherever Appendix~A of \cite{HalperinItkinMarketron2} produces a product of $a$-factors, together with the deletion of the $a(\sigma_\theta)$ factor.

\myparagraph{Linear terms.}
The terms $\calA_1$ and $\calA_2$ of \cite{HalperinItkinMarketron2} carry a linear prefactor $(\bm p - \bm p_k)$ multiplying the RBF, produced by differentiating the Gaussian. Since $\partial_{\bm p_k} e^{-\varepsilon|\bm p - \bm p_k|^2} = 2\varepsilon(\bm p - \bm p_k)e^{-\varepsilon|\bm p - \bm p_k|^2}$, these integrals follow from \eqref{eq:master} by differentiation with respect to the centre,
\begin{equation} \label{eq:linear}
\int_{\mathbb{R}^2} (\bm p - \bm p_k)\,e^{-\varepsilon|\bm p - \bm p_k|^2}\,G(\Delta\tau; \bm u\,|\,\bm p)\,d\bm p
\;=\; \bm A^{-1}(\bm u - \bm p_k)\;\times\;
\frac{e^{-\varepsilon (\bm u - \bm p_k)^\top \bm A^{-1}(\bm u - \bm p_k)}}{\sqrt{\det \bm A}} .
\end{equation}
In the uncorrelated case $\bm A^{-1}$ is diagonal and each component reduces to the corresponding scalar result. With $\rho \neq 0$ the off-diagonal entries of $\bm A^{-1}$ mix the two directions, so the $x$-component of \eqref{eq:linear} acquires a contribution proportional to $(y - y_j)$ and conversely. This mixing is the only structural change: no new special functions appear and everything remains closed form.

\myparagraph{Quadratic terms.}
The term $\bar\Psi_2$ contains products of two RBFs centred at $\bm p_k$ and $\bm p_{k^*}$. Because the RBF is isotropic, such a product is again an isotropic Gaussian,
\begin{equation} \label{eq:rbfprod}
e^{-\varepsilon|\bm p - \bm p_k|^2}\,e^{-\varepsilon|\bm p - \bm p_{k^*}|^2}
\;=\; e^{-\tfrac{\varepsilon}{2}|\bm p_k - \bm p_{k^*}|^2}\;
e^{-2\varepsilon\bigl|\bm p - \bar{\bm p}\bigr|^2},
\qquad \bar{\bm p} = \tfrac{1}{2}(\bm p_k + \bm p_{k^*}),
\end{equation}
so the quadratic matrix elements are given by \eqref{eq:master} with $\varepsilon$ replaced by $2\varepsilon$, the centre replaced by the midpoint $\bar{\bm p}$, and an overall factor $\exp[-\tfrac{\varepsilon}{2}|\bm p_k - \bm p_{k^*}|^2]$. Explicitly, with $\bm A_2 \equiv \bm I + 4\varepsilon\Sigma\Delta\tau$,
\begin{equation} \label{eq:quadelem}
\int_{\mathbb{R}^2} e^{-\varepsilon|\bm p - \bm p_k|^2} e^{-\varepsilon|\bm p - \bm p_{k^*}|^2} G(\Delta\tau; \bm u\,|\,\bm p)\,d\bm p
\;=\; \frac{e^{-\tfrac{\varepsilon}{2}|\bm p_k - \bm p_{k^*}|^2}}{\sqrt{\det \bm A_2}}\,
\exp\Bigl[-2\varepsilon(\bm u - \bar{\bm p})^\top \bm A_2^{-1}(\bm u - \bar{\bm p})\Bigr].
\end{equation}
Linear prefactors in the quadratic block are handled by differentiating \eqref{eq:quadelem} with respect to $\bar{\bm p}$, exactly as in \eqref{eq:linear}.

\myparagraph{Distinguishing the covariance matrices.}
It is critical to distinguish between the two contexts in which a covariance matrix enters the formulation, as they represent distinct mathematical objects. The Green's function, and consequently the integral representations \eqref{eq:master}--\eqref{eq:quadelem}, incorporates the primary diffusion covariance $\Sigma$ defined in \eqref{eq:Sig2d}. Conversely, the quadratic gradient term within the pricing equation is governed by the residual covariance $\widetilde\Sigma$ of \eqref{eq:residcov}, which simplifies in the reduced model to the scalar $\sigma_y^2(1-\rho^2)$. As a result, the coefficient scaling $\calC_\eta^2$ inside $\bar\Phi$ becomes $\tfrac{1}{2}\gamma(1-\rho^2)\sigma_y^2 = \tfrac{1}{2}p_y\sigma_y^2$, where $p_y$ is the derived parameter introduced in \cref{sec:gamma}. Meanwhile, the kernel of the integral operator retains $\sigma_y^2$ alongside the full correlation structure. Conflating these two matrices would erroneously rescale the Green's function instead of the intended nonlinearity.

\myparagraph{Memory-dependent volatility.}
The introduction of \eqref{eq:sigy} preserves the preceding structural arguments. During the $x$-substep of the Strang splitting scheme, $y$ acts as a frozen parameter. Consequently, the differential operator $\tfrac12\sigx^2(y)\partial_{xx} + \bar\eta(y)\partial_x - r$ exhibits constant coefficients with respect to $x$, yielding a Gaussian Green's function with variance $\sigx^2(y)\Delta\tau$ and mean shift $\bar\eta(y)\Delta\tau$. The corresponding one-dimensional element is therefore evaluated as
\begin{equation} \label{eq:rbf-sigy}
\int_{\mathbb{R}} e^{-\varepsilon(\xi - x_k)^2} G(\Delta\tau; x\,|\,\xi)\,d\xi
= \frac{e^{-r\Delta\tau}}{a(\sigx(y))}
\exp\Bigl[-\frac{\varepsilon\,\bigl(x - x_k - \bar\eta(y)\Delta\tau\bigr)^2}{a^2(\sigx(y))}\Bigr],
\qquad a^2 = 1 + 2\varepsilon\sigx^2(y)\Delta\tau,
\end{equation}
a result we have verified against direct quadrature to machine precision. Relative to the constant-volatility regime, two computational adjustments are necessary: the scaling factor $a$ becomes an array indexed by the $y$-nodes rather than remaining a scalar, and the mean shift $\bar\eta(y)\Delta\tau$—previously an absorbable constant—must be explicitly tracked per node. Notably, neither modification compromises the closed-form tractability of the integrals. This highlights a fundamental advantage of the operator splitting framework: if applied directly to the unsplit operator, a $y$-dependent diffusion coefficient would fundamentally destroy the Gaussian structure of the Green's function.

\myparagraph{Summary of the model reduction.}
To adapt the framework presented in Appendix~A of \cite{HalperinItkinMarketron2} to the present reduced model, the following systematic modifications are required: first, eliminate the $\theta$ integral and the associated scaling factor $a(\sigma_\theta)$ entirely. Second, replace each surviving product of $a$-factors with the matrix formulation \eqref{eq:substitution}, setting $\bm A = \bm I + 2\varepsilon\Sigma\Delta\tau$ for the linear terms, or $\bm A_2 = \bm I + 4\varepsilon\Sigma\Delta\tau$ within the quadratic block. Third, substitute each scalar linear-prefactor result with the generalized form in \eqref{eq:linear}. Finally, fix the coefficient of the quadratic gradient term at $\tfrac{1}{2}p_y\sigma_y^2$. Under these modifications, every matrix element remains analytically tractable in closed form. The computational complexity per element is strictly preserved, with the sole addition being the inversion of a $2\times2$ matrix, an operation performed only once per time step as it is uniformly applicable across all elements.

The final reduced model retains nine parameters for the whole surface. The physical-measure parameters $\bar f$ and $c$ are inherited from the time-series calibration, except that $c$ is deliberately released in the option calibration. This is a compromise between statistical identifiability and economic identification: the eigenanalysis shows that $\bar f$ is a stiff direction whereas $c$ is substantially softer, but fixing $c$ at its physical-measure value would eliminate the option-implied flow-risk premium altogether. We therefore retain $c$ as a Q-measure calibration parameter and explicitly report its weak identification through profile likelihoods and Hessian eigenvalues. The difference between its Q- and P-measure estimates is consequently interpreted as a market price of flow risk, with the associated uncertainty carried into the economic interpretation.

The SPX surface contains limited information about the absolute level of the flow-impact coefficient. Nevertheless, retaining $c$ in the Q-measure calibration is necessary for identifying the Q--P wedge interpreted as the market price of flow risk. We therefore report the point estimate together with its profile uncertainty and do not interpret the point estimate as a sharply identified structural coefficient.

\end{document}